\NewDocumentCommand{\appname}{s}{%
  \IfBooleanTF#1%
    {the extended version of this paper~\cite{TODO}}%
    {the extended paper}%
}
\NewDocumentCommand{\TODO}{m m}%
  {{\bfseries\color{#1}[#2]}}%
\newcommand\numberthis{\addtocounter{equation}{1}\tag{\theequation}}
\Crefname{theorem}{Thm.}{Thms.}
\Crefname{figure}{Fig.}{Figs.}
\Crefname{section}{Sec.}{Secs.}
\Crefname{definition}{Def.}{Defs.}
\Crefname{lemma}{Lem.}{Lems.}
\Crefname{corollary}{Cor.}{Cors.}
\title{Information Flow Control in Cyclic Process Networks}
\author{Bas van den Heuvel}{
  HKA Karlsruhe and University of Freiburg, DE and University of Groningen, NL
}{
  vdheuvel@informatik.uni-freiburg.de
}{
  https://orcid.org/0000-0002-8264-7371
}{Supported in part by the Dutch Research Council (NWO) under project No.\ 016.Vidi.189.046 (Unifying Correctness for Communicating Software).
}
\author{Farzaneh Derakhshan}{
  Illinois Institutie of Technology, USA
}{
  fderakhshan@iit.edu
}{
  https://orcid.org/0000-0002-2156-2606
}{
}
\author{Stephanie Balzer}{
  Carnegie Mellon University, USA
}{
  balzers@cs.cmu.edu
}{
  https://orcid.org/0000-0002-8347-3529
}{Supported in part by the Air Force Office of Scientific Research under award number FA9550-21-1-0385 (Tristan Nguyen, program manager).
Any opinions, findings and conclusions or recommendations expressed here are those of the author(s) and do not necessarily reflect the views of the U.S. Department
of Defense.
}
\authorrunning{Van den Heuvel, Derakhshan, and Balzer}
\keywords{Cyclic process networks, linear session types, logical relations, deadlock-sensitive noninterference}
\begin{document}

\maketitle

\begin{abstract}
  Protection of confidential data is an important security consideration of today's applications.
Of particular concern is to guard against unintentional leakage to a (malicious) observer,
who may interact with the program and draw inference from made observations.
Information flow control (IFC) type systems address this concern by statically ruling out such leakage.
This paper contributes an IFC type system for message-passing concurrent programs,
the computational model of choice for many of today's applications such as cloud computing and IoT applications.
Such applications typically either implicitly or explicitly codify protocols according to which message exchange must happen,
and to statically ensure protocol safety, behavioral type systems such as session types can be used.
This paper marries IFC with session typing
and contributes over prior work in the following regards:
(1) support of realistic cyclic process networks as opposed to the restriction to tree-shaped networks,
(2) more permissive, yet entirely secure, IFC control, exploiting cyclic process networks, and
(3) considering deadlocks as another form of side channel, and asserting deadlock-sensitive noninterference (DSNI) for well-typed programs.
To prove DSNI, the paper develops a novel logical relation that accounts for cyclic process networks.
The logical relation is rooted in linear logic, but drops the tree-topology restriction imposed by prior work.

\end{abstract}

\section{Introduction}
\label{s:intro}

Many of today's emerging applications and systems
such as cloud computing and IoT applications
are inherently \emph{concurrent} and \emph{message passing}.
Message passing also enjoys popularity in mainstream languages such as Erlang, Go, and Rust.
Similar to functional languages with the $\lambda$-calculus as their theoretical model,
the model of message-passing concurrent languages is the process calculus~\cite{HoareBOOK1985,MilnerBook1980,MilnerBook1989}.
A program in this setting amounts to a number of \emph{processes} connected by \emph{channels},
which compute by exchanging messages along these channels,
rather than by $\beta$-reductions or writing to and reading from shared memory.
Messages may even include channels themselves,
a feature supported in the $\pi$-calculus~\cite{MilnerBook1999,SangiorgiWalkerBook2001}
and referred to as \emph{higher-order} message passing.

Originally untyped~\cite{MilnerBook1999}, the $\pi$-calculus has gradually been enriched with types
to prescribe the kinds of messages that can be exchanged over a channel~\cite{SangiorgiWalkerBook2001}
and to assert correctness properties, such as deadlock freedom and data-race freedom~\cite{KobayashiLICS1997,IgarashiPOPL2001,IgarashiARTICLE2004,conf/concur/Kobayashi06}.
Following in these footsteps,
\emph{session types}~\cite{HondaCONCUR1993,HondaESOP1998} were conceived
to additionally express the \emph{protocols} underlying the exchange.
Session types rely on a \emph{linear} treatment of channels to model the state transitions induced by a protocol,
which was even substantiated by a Curry-Howard correspondence
between the session-typed $\pi$-calculus and
linear logic~\cite{CairesCONCUR2010,WadlerICFP2012,journal/jfp/Wadler14,ToninhoESOP2013,LindleyMorrisESOP2015,CairesARTICLE2016,LindleyMorrisICFP2016}.
Session types based on linear logic enjoy strong properties,
comprising not only race and deadlock freedom but also protocol fidelity.

Security is another correctness consideration arising from today's applications and systems.
One security concern in particular is the protection of confidential information,
by preventing unintentional leakage to a (malicious) observer,
who may interact with the program and draw inference from made observations.
Type systems for \emph{information flow control (IFC)} rule out such leakage by type
checking~\cite{VolpanoARTICLE1996,SmithVolpanoPOPL1998,SabelfeldIEE2003},
given a lattice over security levels and the labeling of observables (e.g., output, locations, channels) with these levels.
Well-typed programs then prevent "flows from high to low" and
guarantee \emph{noninterference}, i.e., that an observer cannot infer any secrets from made observations.
To guarantee noninterference, advanced proof methods
such as logical relations~\cite{PittsStarkHOOTS1998,AhmedPOPL2009,DreyerICFP2010,NeisJFP2011,HurDreyerPOPL2011,ThamsborgBirkedalICFP2011}
and bisimulations~\cite{KoutavasWandPOPL2006,SumiiPierceARTICLE2007,StovringLassenPOPL2007,SangiorgiLICS2007} are used.
If side channels~\cite{SabelfeldIEE2003}, such as the termination channel, are present,
then the literature distinguishes \emph{progress-sensitive} noninterference (PSNI)
from \emph{progress-insensitive} noninterference (PINI),
where the former only equates a divergent program run with another diverging one,
whereas the latter equates a divergent program run with any other run~\cite{HeidinSabelfeldMartkoberdorf2011}.

Whereas the development of IFC type systems has been an active research field for decades
for imperative and predominantly sequential languages,
their exploration in a concurrent, message-passing setting has been more confined to typed process
calculi~\cite{HondaESOP2000,HondaYoshidaPOPL2002,CrafaFMSE2006,HENNESSY20053,KobayashiARTICLE2005,ZDANCEWIC2003,POTTIER2002}
and multiparty session types~\cite{CapecchiCONCUR2010,CapecchiARTICLE2014,CastellaniARTICLE2016,Ciancaglini2016}.
Only recently, IFC has been adopted for session types based on linear logic~\cite{DerakhshanLICS2021,report/BalzerDHY23}.
The resulting type systems exploit the strong guarantees arising from linear logic,
which in particular curtail the network of processes arising at runtime to a tree structure.
However, many real-world application scenarios
are precluded from an insistence on a tree structure, instead requiring support of \emph{cyclic process networks}.
Session type systems~\cite{BalzerICFP2017,BalzerESOP2019,conf/fossacs/DardhaG18,conf/ice/vdHeuvelP21,journal/scico/vdHeuvelP22}
that allow for cyclic process networks increase expressivity while remaining rooted in linear logic.

This paper scales IFC to cyclic process networks
and contributes an IFC type system for an asynchronous $\pi$-calculus with linear session types.
To prove that well-typed processes in the resulting language enjoy noninterference,
we develop a novel logical relation.
Our development was challenged by the possibility of \emph{deadlocks} that can arise in cyclic process networks
and that constitute another form of side channel.
To rule out side-channel attacks due to deadlocks,
we introduce the notion of \emph{deadlock-sensitive noninterference~(DSNI)},
which only equates a deadlocking program with another deadlocking one.
Using our logical relation we  prove that well-typed processes in our language enjoy DSNI (fundamental theorem).

Cyclic process networks also turn out to be beneficial for IFC,
as they permit secure programs that are rejected by existing IFC type systems for linear session types~\cite{DerakhshanLICS2021,report/BalzerDHY23}.
These are programs that exploit the possibility of setting up several channels---rather than just one channel---between two processes,
to separate low-security from high-security communication.

\subparagraph*{Contributions.}
Our contributions are threefold:
\begin{enumerate}
    \item 
        An IFC session type system for an asynchronous $\pi$-calculus with support for cyclic process networks (\Cref{s:IFC}), that satisfies protocol fidelity and communication safety (\Cref{t:subjRed}).
        
    \item
        A logical relation that induces an equivalence between typed processes (\Cref{s:logRel}), defining our notion of DSNI (\Cref{d:DSNIRel}).
        
    \item
        The main result that well typedness implies DSNI (\Cref{t:DSNI} in \Cref{s:DSNI}), following from the fundamental theorem (\Cref{t:fundamental}).

\end{enumerate}

\subparagraph*{Outline.}

In addition to the above contributions, \Cref{s:ideas} gently introduces the key ideas behind the developments in this paper, \Cref{s:rw} discusses related work, and \Cref{s:concl} concludes the paper.
Important proofs are detailed in part; remaining details, proofs, and auxiliary definitions are given in \appref*.

\section{Key Ideas}
\label{s:ideas}

In this section, we discuss the key ideas behind our contributions.

\subsection{Cyclic Process Networks Afford Flexible Information Flow Control}
\label{s:ideas:example}

We motivate our developments through a high-level example, focusing on how cyclicity in process networks improves over prior works by increasing the flexibility of information flow control to support more realistic scenarios.

\subparagraph*{Collaborating governments.}

Consider two governments that want to collaborate on scientific and intelligence efforts.
Clearly, the interactions between a government and an intelligence agency is confidential, whereas interactions between a government and the scientific community is not; intelligence may not leak to the scientific community, where there may be spies.

\def\High{$\fIFC{H}$igh\xspace}
\def\Low{$\fIFC{L}$ow\xspace}
We make this more precise by establishing that information can be of \High or \Low confidentiality, and that communication channels can be of \High or \Low security.
Clearly, information of \Low confidentiality can be transmitted over \High-security channels, but not vice versa.
We identify our two governments as $\fProc{X} \in \{\fProc{A},\fProc{B}\}$, each with departments (processes) $\fProc{Gov_X},\fProc{Int_X},\fProc{Science_X}$.
We consider three scenarios, each of which connects these processes to form different process networks. 

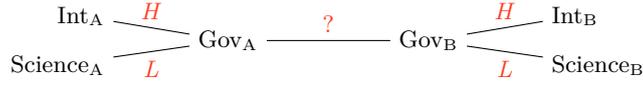
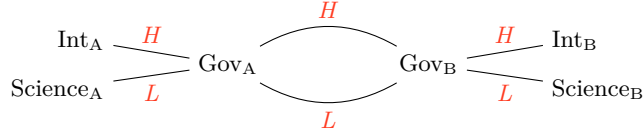
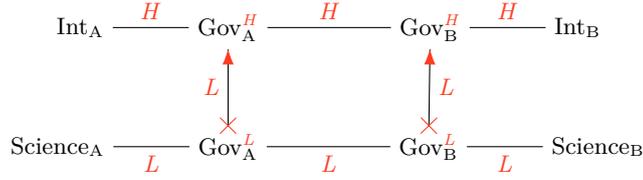
\begin{figure}[t]
    \begin{subfigure}{\textwidth}
        \begin{center}
            {\small\begin{tikzpicture}
                \node (GovA) {$\fProc{Gov_A}$};
                \node [left=of GovA,yshift=1em] (IntA) {$\fProc{Int_A}$};
                \node [left=of GovA,yshift=-1em] (ScienceA) {$\fProc{Science_A}$};
                \node [right=5em of GovA] (GovB) {$\fProc{Gov_B}$};
                \node [right=of GovB,yshift=1em] (IntB) {$\fProc{Int_B}$};
                \node [right=of GovB,yshift=-1em] (ScienceB) {$\fProc{Science_B}$};
                \draw (GovA) to node [above] {$\fIFC{?}$} (GovB);
                \draw (GovA) to node [above] {$\fIFC{H}$} (IntA);
                \draw (GovA) to node [below] {$\fIFC{L}$} (ScienceA);
                \draw (GovB) to node [above] {$\fIFC{H}$} (IntB);
                \draw (GovB) to node [below] {$\fIFC{L}$} (ScienceB);
            \end{tikzpicture}}
        \end{center}
        \caption{Scenario~1: No cyclicity.}\label{f:ideas:example:1}
    \end{subfigure}
    
    \begin{subfigure}{\textwidth}
        \begin{center}
            {\small\begin{tikzpicture}
                \node (GovA) {$\fProc{Gov_A}$};
                \node [left=of GovA,yshift=1em] (IntA) {$\fProc{Int_A}$};
                \node [left=of GovA,yshift=-1em] (ScienceA) {$\fProc{Science_A}$};
                \node [right=5em of GovA] (GovB) {$\fProc{Gov_B}$};
                \node [right=of GovB,yshift=1em] (IntB) {$\fProc{Int_B}$};
                \node [right=of GovB,yshift=-1em] (ScienceB) {$\fProc{Science_B}$};
                \draw [bend left] (GovA) to node [above,red] {$\fIFC{H}$} (GovB);
                \draw [bend right] (GovA) to node [below,red] {$\fIFC{L}$} (GovB);
                \draw (GovA) to node [above,red] {$\fIFC{H}$} (IntA);
                \draw (GovA) to node [below,red] {$\fIFC{L}$} (ScienceA);
                \draw (GovB) to node [above,red] {$\fIFC{H}$} (IntB);
                \draw (GovB) to node [below,red] {$\fIFC{L}$} (ScienceB);
            \end{tikzpicture}}
        \end{center}
        \caption{Scenario~2: Doubly connected governments.}\label{f:ideas:example:2}
    \end{subfigure}
    
    \begin{subfigure}{\textwidth}
        \begin{center}
           {\small \begin{tikzpicture}l
                \tikzset{arrowx/.style={postaction={decorate,decoration={markings,mark=at position 0.1 with {\arrow{Rays[length=3mm,color=RedOrange]};},mark=at position 1 with {\arrow{Latex[line width=0pt,length=2mm,fill=RedOrange]};},},},}}
                \node (GovAH) {$\fProc{Gov_A^{\fIFC{H}}}$};
                \node [below=of GovAH] (GovAL) {$\fProc{Gov_A^{\fIFC{L}}}$};
                \node [left=of GovAH] (IntA) {$\fProc{Int_A}$};
                \node [left=of GovAL] (ScienceA) {$\fProc{Science_A}$};
                \node [right=5em of GovAH] (GovBH) {$\fProc{Gov_B^{\fIFC{H}}}$};
                \node [right=5em of GovAL] (GovBL) {$\fProc{Gov_B^{\fIFC{L}}}$};
                \node [right=of GovBH] (IntB) {$\fProc{Int_B}$};
                \node [right=of GovBL] (ScienceB) {$\fProc{Science_B}$};
                \draw [arrowx] (GovAL) to node [left] {$\fIFC{L}$} (GovAH);
                \draw [arrowx] (GovBL) to node [right] {$\fIFC{L}$} (GovBH);
                \draw (GovAH) to node [above] {$\fIFC{H}$} (GovBH);
                \draw (GovAL) to node [below] {$\fIFC{L}$} (GovBL);
                \draw (GovAH) to node [above] {$\fIFC{H}$} (IntA);
                \draw (GovAL) to node [below] {$\fIFC{L}$} (ScienceA);
                \draw (GovBH) to node [above] {$\fIFC{H}$} (IntB);
                \draw (GovBL) to node [below] {$\fIFC{L}$} (ScienceB);
            \end{tikzpicture}}
        \end{center}
        \caption{Scenario~3: Extended cyclicity for more flexible IFC.}\label{f:ideas:example:3}
    \end{subfigure}   
    \caption{Collaborating governments: three scenarios.}
\end{figure}

\subparagraph*{Scenario~1: No cyclicity.}

In \Cref{f:ideas:example:1}, the governments have only one channel to communicate on.
Lines between departments denote communication channels, and their annotations indicate security levels.
In this scenario, no processes are cyclically connected.

Notice that the channel connecting the two governments does not have a security level assigned.
If we insist on intelligence exchange (i.e., on exchanging \High-confidentiality information), this channel must be of \High security.
However, this inhibits scientific exchange: when a government receives information on a \High-security channel, it cannot guarantee that the information is of \Low confidentiality, so it cannot share it with its scientific department over a \Low-security channel.
Hence, the single channel of communication between the governments is unrealistic.

\subparagraph*{Scenario~2: Doubly-connected governments.}

In \Cref{f:ideas:example:2}, we attempt to remedy this problem by adding a second channel of \Low security between the governments.
Since the governments are connected on two separate channels, they are cyclically connected.

Now the governments can exchange scientific information and share this with their intelligence agencies.
However, it is conceivable that a government makes decisions about which scientific information to share based on intelligence information.
A clever spy may then be able to infer intelligence information from scientific information, \emph{indirectly}.
Hence, once a government receives intelligence information, it should refrain from sharing scientific information.
Clearly, this scenario is still not realistic.

\subparagraph*{Scenario~3: Extended cyclicity for more flexible IFC.}

In \Cref{f:ideas:example:3}, we split our governments into \High- and \Low-confidentiality departments.
The \High-confidentiality departments share intelligence information, and the \Low-confidentiality departments share scientific information.
Crucially, the \Low-confidentiality department can share information with the \High-confidentiality department, but not vice versa.

\subsection{Threats to Noninterference due to Deadlocks}
\label{s:ideas:threats}

When process networks contain cyclic connections, there is a risk of \emph{deadlocked} communication.
For example, consider again the process network in \Cref{f:ideas:example:2}.
Let us refer to the \High-security channel between the governments as $\fProc{h}$, and the \Low-security channel between the governments as $\fProc{l}$.
Suppose the two governments are implemented as follows (in pseudocode):
\[
    \fProc{Gov_A} := \fProc{receive\ on\ h ; send\ on\ l}
    \hfill
    \fProc{Gov_B} := \fProc{receive\ on\ l ; send\ on\ h}
    \hfill
\]
The communication between the governments is deadlocked: each government is waiting to receive from the other, but the corresponding sends are blocked.

In process networks without cyclicity, this kind of deadlock does not occur.
There are ways to prevent them through typing (cf., e.g., \cite{conf/fossacs/DardhaG18,journal/scico/vdHeuvelP22,BalzerESOP2019}),
but the possible occurrence of deadlock introduced by cyclicity is realistic and a possible threat to noninterference.
To see how, consider another pseudocode implementation, where $\fProc{s_A}$ refers to the \Low-security channel between $\fProc{Gov_A}$ and $\fProc{Science_A}$:
 \begin{align}
    \fProc{Gov_A} := \fProc{receive\ x\ on\ h ; if\ x == true\ then\ send\ on\ s_A\ else\ deadlock}
    \label{eq:ideas:DS}
\end{align}
In this scenario, a spy monitoring the information exchanged on $\fProc{s_A}$ is indirectly able to infer \High-confidentiality information: if information is sent on $\fProc{s_A}$, then the spy knows for sure that the value of x is true.
This is why we are after IFC for noninterference that is \emph{deadlock sensitive}.

\subsection{IFC Type System in a Nutshell}
\label{s:ideas:IFC}

In this paper, we implement IFC similarly as in previous works: by enriching a session type system with IFC annotations and requirements.
We build on the session-typed asynchronous variant of the $\pi$-calculus of Van den Heuvel and Pérez~\cite{conf/ice/vdHeuvelP21,journal/scico/vdHeuvelP22} (stripped from the ``priority'' mechanisms that rule out deadlock).

As anticipated in \Cref{s:ideas:example}, channels are appointed \emph{maximum-secrecy levels} (secrecy levels, for short), indicating the maximum secrecy of messages that can be sent on channels securely. For example, in \Cref{f:ideas:example:2}, the channel between $\fProc{Int_A}$ and $\fProc{Gov_A}$ has a secrecy level of \High, while the channel between $\fProc{Science_A}$ and $\fProc{Gov_A}$ has a secrecy level of \Low. This indicates that a spy with low-level security clearance can observe the messages sent on channels of secrecy level \Low but not \High.
A partial order on these secrecy levels forms a \emph{secrecy lattice}; for example, $\fIFC{L \lleq H}$ (\Low is lower than \High).

As processes receive messages on channels, they learn ``secrets'', possibly influencing the information sent in future messages (referred to in the literature as \emph{flow sensitivity}~\cite{SabelfeldIEE2003}).
Key in our IFC is thus that it is forbidden to send messages on a channel if the level of secrecy learned so far exceeds the secrecy level of the channel.
To ensure this, our type system assigns to each process a \emph{running secrecy} that increases as the process receives higher-secrecy-level information.
As processes evolve, the secrecy levels of their channels do not change, whereas their running secrecies do.

For example, in \Cref{f:ideas:example:2}, suppose $\fProc{Gov_A}$ starts with a \Low running secrecy, thus being able to send messages to both $\fProc{Int_A}$ and $\fProc{Science_A}$.
After receiving a message from $\fProc{Int_A}$, the running secrecy of $\fProc{Gov_A}$ becomes \High: the secrecy level of the channel between them is \High.
Hence, after this message, $\fProc{Gov_A}$ can no longer send messages to $\fProc{Science_A}$.

Finally, we need to address how our IFC handles deadlock sensitivity, as introduced in \Cref{s:ideas:threats}.
It turns out that it is sufficient to rely on running secrecies and their dynamics as described above.
To see how, consider again the implementation of $\fProc{Gov_A}$ in~\eqref{eq:ideas:DS}.
Assuming it starts with \Low running secrecy, the process receives on a \High-security channel, so its running secrecy becomes \High.
Our IFC then disallows the process from sending on the \Low security channel.
Hence, this example would be considered ill typed in our type system.

\subsection{Logical Relation for DSNI in a Nutshell}
\label{s:ideas:logRel}

Let us be more precise in what we mean by DSNI.
A process may have a number of ``unconnected'' channels.
By connecting these channels to other processes, we create a \emph{context} in which to run the process.
We refer to the channels connecting the process to its context as the \emph{interface}. For example, in \Cref{f:ideas:example:2}, $\fProc{Gov_A}$ can be considered as a standalone process, with the rest of the processes being a potential context for it. The interface is then the four channels connecting $\fProc{Gov_A}$ to the other processes.

With DSNI, we assume the existence of an ``attacker'', a more precise definition of the ``spy'' mentioned in \Cref{s:ideas:example}.
This attacker knows the specification of our process, and has the ability to observe messages from and to the process over \emph{observable channels}: channels in the interface that have secrecy levels up to a given secrecy level~$\fIFC{\xi}$.
Moreover, the attacker cannot measure time but can observe the relative order in which messages are sent through different channels.
By running our process in different contexts and observing the messages on observable channels, the attacker may be able to use its knowledge of the process' specification to infer information about messages on unobservable channels.
As such, noninterference means that the attacker is not able to do so; in our case, we are after DSNI, because we do not want the attacker to infer information from deadlocks either.

In this paper, we define DSNI as an \emph{equivalence} between the behavior on observable channels of the same process in different contexts.
This equivalence is defined by means of a \emph{logical relation}.
The relation scrutinizes messages from and to the process on observable channels, and ``ignores'' messages on unobservable channels.
Our main result is that well typedness implies DSNI (\Cref{t:DSNI}).

\subsection{Technical Challenges}
\label{s:ideas:challenges}

The subsequent sections first introduce our process language and then develop an IFC type system for that language and state and prove DSNI using a logical relation.  These sections are naturally quite technical.  To bridge the divide, we briefly survey here the main challenges our development had to overcome.

\subparagraph*{Asynchronous communication}

Our process language is an \emph{asynchronous} $\pi$-calculus with linear session types, based on Van den Heuvel and Pérez's Asynchronous Priority-based Classical Processes (APCP)~\cite{conf/ice/vdHeuvelP21,journal/scico/vdHeuvelP22}, but without recursion and ``priority'' mechanisms (which prevents deadlocks).
As in the asynchronous $\pi$-calculus, outputs in our calculus do not have any continuations but are atomic processes composed in parallel with other processes.
To model session sequencing, a process must adopt a \emph{continuation-passing} style, in the sense that an output not only comprises a message but also a continuation channel.

When continuation channels are part of messages exchanged over an observable channel in the interface between a process and its context,
the question comes up whether the continuation channel becomes observable as well.
The natural impulse might be to consider them observable too.  For sure this is the right choice in linear session-typed process calculi that confine process networks to trees \cite{DerakhshanLICS2021,report/BalzerDHY23}, guaranteeing that the continuation channel sent as part of the message resides within the sending process itself.
However, due to the possibility of cycles in our setting, a continuation channel sent as part of a message may actually reside within the context outside the sending process.  As a result, the logical relation has to consider the binding structure of the process and the context when determining observability of continuation channels.
We detail this case analysis in \Cref{s:logRel} when we introduce the logical relation, with a pictorial illustration in \Cref{f:valRel:diags}.

\subparagraph*{Observable deadlocks}

Deadlock-sensitive noninterference (DSNI) provides a very strong notion of noninterference in that it equates a deadlocking process only with another deadlocking one (as opposed to an arbitrary one).  As a result, it prevents leakage through deadlocks, a side channel similar to the termination channel.
DSNI is asserted by the definition of the logical relation and challenges the proof of \Cref{t:DSNI}, stating that well typedness implies DSNI.
\Cref{t:DSNI} is proved a generalized \emph{fundamental theorem} (\Cref{t:fundamental}), which asserts that all executions of a process, if well typed, are related by the logical relation, up to the secrecy level~$\fIFC{\xi}$ of the observer.
Because this theorem relates two different processes, but with the same observable behavior (where deadlocks are observable), the proof must maintain a tight correspondence between the two processes.
This correspondence is achieved by employing the notion of \emph{relevant nodes} (\Cref{d:relNode}), which are the parts of a process that can have observable outcomes either directly (by sending a message over the interface) or indirectly (by initiating a chain of messages ending with an observable one), and asserting that the relevant nodes of both processes are indistinguishable (up to structural congruence).
Our notion of relevant nodes is inspired by Derakhshan et al.~\cite{DerakhshanLICS2021}, but accounts for cyclic process networks.

\subparagraph*{Structural congruence and alpha equivalence}

Our logical relation makes use of structural congruence to single out the action in a process producing an observable message.
Because structural congruence permits alpha renaming, process relatedness must account for alpha-equivalence classes.
As usual, proofs require a careful treatment of alpha renaming, which additionally becomes more nuanced by the existence of binders for observable names in contexts.
This treatment becomes especially apparent in the so-called \emph{catch-up} lemma (\Cref{l:catchUp}), a lemma used in the proof of the fundamental theorem to assert that two observably equivalent processes can ``catch up'' on each other's unobservable reductions.

\section{Linear Session Types for Information Flow Control}
\label{s:IFC}

In this section, we define our information flow control (IFC) type system.
We first introduce our process language (an asynchronous $\pi$-calculus) along with a linear session-type system in \Cref{s:IFC:procLang}.
Then, we enrich the type system with IFC in \Cref{s:IFC:IFC}.
In \Cref{s:IFC:typePres}, we prove that well-typed processes enjoy communication safety and protocol fidelity as corollaries of a type-preservation result.
As we will see in \Cref{s:DSNI}, well typedness in the resulting IFC type system implies noninterference.

\subsection{Process Language: Syntax, Semantics, and Types}
\label{s:IFC:procLang}

Our process language is an asynchronous $\pi$-calculus, where parallel subprocesses communicate on connected channels.
To be precise, we adapt the non-recursive fragment of Van den Heuvel and Pérez's Asynchronous Priority-based Classical Processes (APCP)~\cite{conf/ice/vdHeuvelP21,journal/scico/vdHeuvelP22} by removing their ``priority'' mechanisms that prevent deadlock and adding our IFC.

\subparagraph*{Syntax.}

The syntactic elements of our language are typeset in a \textcolor{black}{black and non-italic} font.
In our language, channels have two distinct endpoints, denoted $\fProc{a},\fProc{b},\fProc{c},\ldots,\fProc{x},\fProc{y},\fProc{z}$ and further referred to as \emph{names}.
By design, all names are used linearly, meaning that they are used for a communication exactly once.

\begin{definition}[Syntax]
    \label{d:procSyntax}
    \emph{Processes} $\fProc{P},\fProc{Q},\fProc{R},\ldots$ are defined by the following syntax:
    \[
        \fProc{P},\fProc{Q},\fProc{R},\ldots ::= \fProc{0} \sepr \fProc{( P \| Q )} \sepr \fProc{\nu{xy} P} \sepr \fProc{\pClose x[]} \sepr \fProc{\pWait x() ; P} \sepr \fProc{\pSel x[b]<j} \sepr \fProc{\pBra x(z)>\{i:P_i\}_{i \in I}} \sepr \fProc{\pSend x[a,b]} \sepr \fProc{\pRecv x(y,z) ; P}
    \]
\end{definition}

We write $\fProc{P \pSubst{ x/y }}$ to denote the capture-avoiding substitution of $\fProc{y}$ for $\fProc{x}$ in $\fProc{P}$.
Process $\fProc{0}$ denotes inaction.
In $\fProc{( P \| Q )}$, processes $\fProc{P}$ and $\fProc{Q}$ run in parallel; we often omit the parentheses.
Restriction $\fProc{\nu{xy} P}$ binds $\fProc{x}$ and $\fProc{y}$ in $\fProc{P}$ to form a channel, enabling communication.

Process $\fProc{\pClose x[]}$ closes the channel to which $\fProc{x}$ belongs, and $\fProc{\pWait x() ; P}$ waits for the channel to close before continuing as $\fProc{P}$.
Selection $\fProc{\pSel x[b]<j}$ sends the label $\fProc{j}$ over $\fProc{x}$ along with a name $\fProc{b}$; we refer to $\fProc{b}$ as the selection's \emph{continuation}, as it provides a means to continue communicating after the selection.
Branch $\fProc{\pBra x(z)>\{ i : P_i \}_{i \in I}}$ waits to receive on $\fProc{x}$ a label $\fProc{j} \in \fProc{I}$ along with a continuation $\fProc{b}$ before continuing as $\fProc{P_j \pSubst{ b/z }}$; this binds $\fProc{z}$ in each $\fProc{P_i}$.
Send $\fProc{\pSend x[a,b]}$ sends names $\fProc{a}$ and $\fProc{b}$ over $\fProc{x}$; we typically refer to $\fProc{a}$ and $\fProc{b}$ as the send's payload and continuation, respectively, but there is no technical distinction between them.
Receive $\fProc{\pRecv x(y,z) ; P}$ waits to receive on $\fProc{x}$ two names $\fProc{a}$ and $\fProc{b}$ before continuing as $\fProc{P \pSubst{ a/y , b/z }}$; this binds $\fProc{y}$ and $\fProc{z}$ in $\fProc{P}$.
All names in a process are free unless bound as described above; we write $\fn(\fProc{P})$ to denote the set of free names of $\fProc{P}$.

\begin{example}
    \label{x:syntax}
    To illustrate process syntax, we further develop the example introduced in \Cref{s:ideas:example}.
    We develop two simple accounts of $\fProc{Gov_A}$: one where information flow is secure, and one where it is not.

    In the first scenario, $\fProc{Gov_A^{\fIFC{L}}}$ passes a research outcome ($\fProc{oc}$) to $\fProc{Gov_A^{\fIFC{H}}}$, which determines a command for $\fProc{Int_A}$:
    \begin{align*}
        \fProc{Gov_A^{\fIFC{L}}} &:= \fProc{\nu{a_{\fIFC{H}}^{1'} a_{\fIFC{H}}^1} ( \pSel a_{\fIFC{H}}[a_{\fIFC{H}}^{1'}]<{oc_2} \| \pClose a_{\fIFC{H}}^1[] )}
        \\
        \fProc{Gov_A^{\fIFC{H}}} &:= \fProc{\pBra a_{\fIFC{L}}(a_{\fIFC{L}}^{1})>\left\{ \begin{array}{@{}l@{}}
            \fProc{oc_1 : \nu{a_I^{1'} a_I^1} ( \pSel a_I[a_I^{1'}]<act \| \pWait a_{\fIFC{L}}^1() ; \pClose a_I^1[] ) ,} \\
            \fProc{oc_2 : \nu{a_I^{1'} a_I^1} ( \pSel a_I[a_I^{1'}]<wait \| \pWait a_{\fIFC{L}}^1() ; \pClose a_I^1[] )}
        \end{array} \right\}}
        \\
        \fProc{Int_A} &:= \fProc{\pBra i_A(i_A^1)>\{ act : \pWait i_A^1() ; 0 , wait : \pWait i_A^1() ; 0 \}}
        \\
        \fProc{A_{secure}} &:= \fProc{\nu{a_{\fIFC{H}} a_{\fIFC{L}}} \nu{a_I i_A} ( Gov_A^{\fIFC{L}} \| Gov_A^{\fIFC{H}} \| Int_A )}
    \end{align*}

    In the second scenario, $\fProc{Gov_A^{\fIFC{H}}}$ receives intelligence ($\fProc{int}$) from $\fProc{Int_A}$ and shares the information ($\fProc{inf}$) with $\fProc{Gov_A^{\fIFC{L}}}$:
    \begin{align*}
        \fProc{Int_A} &:= \fProc{\nu{i_A^{1'} i_A^1} ( \pSel i_A[i_A^1]<int_1 \| \pClose i_A^1[] )}
        \\
        \fProc{Gov_A^{\fIFC{H}}} &:= \fProc{\pBra a_I(a_I^1)>\left\{ \begin{array}{@{}l@{}}
            \fProc{int_1 : \nu{a_{\fIFC{L}}^{1'} a_{\fIFC{L}}^1} ( \pSel a_{\fIFC{L}}[a_{\fIFC{L}}^{1'}]<inf_1 \| \pWait a_I^1() ; \pClose a_{\fIFC{L}}^1[] ) ,} \\
            \fProc{int_2 : \nu{a_{\fIFC{L}}^{1'} a_{\fIFC{L}}^1} ( \pSel a_{\fIFC{L}}[a_{\fIFC{L}}^{1'}]<inf_2 \| \pWait a_I^1() ; \pClose a_{\fIFC{L}}^1[] )}
        \end{array} \right\}}
        \\
        \fProc{Gov_A^{\fIFC{L}}} &:= \fProc{\pBra a_{\fIFC{H}}(a_{\fIFC{H}}^1)>\{ inf_1 : \pWait a_{\fIFC{H}}^1() ; 0 , inf_2 : \pWait a_{\fIFC{H}}^1() ; 0 \}}
        \\
        \fProc{A_{insecure}} &:= \fProc{\nu{a_{\fIFC{H}} a_{\fIFC{L}}} \nu{a_I i_A} ( Gov_A^{\fIFC{L}} \| Gov_A^{\fIFC{H}} \| Int_A )}
    \end{align*}
\end{example}

Variants of the $\pi$-calculus often include the forwarder process $\fProc{\pFwd x y}$ which forwards any communications between $\fProc{x}$ and $\fProc{y}$ by fusing $\fProc{x}$ and $\fProc{y}$.
Here, we choose to omit forwarders for a smoother definition of our logical relation; they can be added as syntactic sugar using \emph{identity expansion} (cf.\ \appref[d:fwd]).

\subparagraph*{Semantics.}

The dynamics of our language is defined in terms of a reduction semantics, where each step represents the synchronization of complementary communications on the two endpoints of a channel.
As usual, reduction relies on structural congruence, which restructures processes without affecting channel connections and the order of communications.

\begin{figure}[t]
    {\small    Structural congruence ($\fProc{P} \sc \fProc{Q}$):
   \begin{mathpar}
        \begin{bussproof}[sc-alpha]
            \bussAssume{
                \fProc{P} \alpheq \fProc{Q}
            }
            \bussUn{
                \fProc{P} \sc \fProc{Q}
            }
        \end{bussproof}
        \and
        \begin{bussproof}[sc-par-nil]
            \bussAx{
                \fProc{P \| 0} \sc \fProc{P}
            }
        \end{bussproof}
        \and
        \begin{bussproof}[sc-par-symm]
            \bussAx{
                \fProc{P \| Q} \sc \fProc{Q \| P}
            }
        \end{bussproof}
        \and
        \begin{bussproof}[sc-par-assoc]
            \bussAx{
                \fProc{( P \| Q ) \| R} \sc \fProc{P \| ( Q \| R )}
            }
        \end{bussproof}
        \and
        \begin{bussproof}[sc-res-symm]
            \bussAx{
                \fProc{\nu{xy} P} \sc \fProc{\nu{yx} P}
            }
        \end{bussproof}
        \and
        \begin{bussproof}[sc-res-assoc]
            \bussAx{
                \fProc{\nu{xy} \nu{zw} P} \sc \fProc{\nu{zw} \nu{xy} P}
            }
        \end{bussproof}
        \and
        \begin{bussproof}[sc-res-comm]
            \bussAssume{
                \fProc{x},\fProc{y} \notin \fProc{\fn(Q)}
            }
            \bussUn{
                \fProc{\nu{xy} ( P \| Q )} \sc \fProc{\nu{xy} P \| Q}
            }
        \end{bussproof}
    \end{mathpar}

    Reduction ($\fProc{P} \redd \fProc{Q}$):
    \begin{mathpar}
        \begin{bussproof}[red-close-wait]
            \bussAx{
                \fProc{\nu{xy} ( \pClose x[] \| \pWait y() ; P )} \redd \fProc{P}
            }
        \end{bussproof}
        \and
        \begin{bussproof}[red-sel-bra]
            \bussAssume{
                \fProc{j} \in \fProc{I}
            }
            \bussUn{
                \fProc{\nu{xy} ( \pSel x[b]<j \| \pBra y(w)>\{ i : Q_i \}_{i \in I} )} \redd \fProc{Q_j \pSubst{ b/w }}
            }
        \end{bussproof}
        \and
        \begin{bussproof}[red-send-recv]
            \bussAx{
                \fProc{\nu{xy} ( \pSend x[a,b] \| \pRecv y(z,w) ; Q )} \redd \fProc{Q \pSubst{ a/z,b/w }}
            }
        \end{bussproof}
        \and
        \begin{bussproof}[red-sc]
            \bussAssume{
                \fProc{P} \sc \fProc{P'}
            }
            \bussAssume{
                \fProc{P'} \redd \fProc{Q'}
            }
            \bussAssume{
                \fProc{Q'} \sc \fProc{Q}
            }
            \bussTern{
                \fProc{P} \redd \fProc{Q}
            }
        \end{bussproof}
        \and
        \begin{bussproof}[red-par]
            \bussAssume{
                \fProc{P} \redd \fProc{P'}
            }
            \bussUn{
                \fProc{P \| Q} \redd \fProc{P' \| Q}
            }
        \end{bussproof}
        \and
        \begin{bussproof}[red-res]
            \bussAssume{
                \fProc{P} \redd \fProc{P'}
            }
            \bussUn{
                \fProc{\nu{xy} P} \redd \fProc{\nu{xy} P'}
            }
        \end{bussproof}
    \end{mathpar}}

    \caption{Structural congruence (top) and reduction (bottom); cf.\ \Cref{d:procSCRedd}.}\label{f:procSCRedd}
\end{figure}

\begin{definition}[Reduction Semantics]
    \label{d:procSCRedd}
    \emph{Structural congruence} is the least congruence on the syntax of processes (i.e., closed under arbitrary process contexts), denoted $\fProc{P} \sc \fProc{Q}$, induced by the axioms in \Cref{f:procSCRedd} (top).

    \emph{Reduction} is a binary relation on processes, denoted $\fProc{P} \redd \fProc{Q}$, defined by the rules in \Cref{f:procSCRedd} (bottom).
    We write $\fProc{P} \nredd$ to denote that there is no $\fProc{Q}$ such that $\fProc{P} \redd \fProc{Q}$.
\end{definition}

Rule~\ruleLabel{sc-alpha} allows alpha conversion, i.e., renaming bound names.
Rule~\ruleLabel{sc-par-nil} defines $\fProc{0}$ as the unit of parallel composition, and Rules~\ruleLabel{sc-par-symm} and~\ruleLabel{sc-par-assoc} define parallel composition as symmetric and associative, respectively.
Rules~\ruleLabel{sc-res-symm} and~\ruleLabel{sc-res-assoc} define symmetry and associativity of restriction, respectively.
Rule~\ruleLabel{sc-res-comm} defines commutativity of restriction, as long as this does not capture or free any names; this is often referred to as \emph{scope extrusion}.

Rules~\ruleLabel{red-close-wait}, \ruleLabel{red-sel-bra}, and~\ruleLabel{red-send-recv} define synchronizations of complementary communications on names connected by restriction; these rules formalize the behavior described below \Cref{d:procSyntax}.
Rules~\ruleLabel{red-sc}, \ruleLabel{red-par}, and~\ruleLabel{red-res} close reduction under structural congruence, parallel composition, and restriction, respectively.

\begin{example}
    We illustrate process semantics on $\fProc{A_{secure}}$ defined in \Cref{x:syntax}.
    We have
    \begin{align*}
        \fProc{A_{secure}} &= \fProc{\nu{a_{\fIFC{H}} a_{\fIFC{L}}} \nu{a_I i_A} ( Gov_A^{\fIFC{L}} \| Gov_A^{\fIFC{H}} \| Int_A )}
        \\
        &\sc \fProc{\nu{a_I i_A} ( \nu{a_{\fIFC{H}}^{1'} a_{\fIFC{H}}^1} ( \nu{a_{\fIFC{H}} a_{\fIFC{L}}} ( \pSel a_{\fIFC{H}}[a_{\fIFC{H}}^{1'}]<oc_2 \| \pBra a_{\fIFC{L}}(a_{\fIFC{L}}^1)>\{ \ldots \} ) \| \pClose a_{\fIFC{H}}^1[] ) \| Int_A )}
        \\
        &\redd \fProc{\nu{a_I i_A} ( \nu{a_{\fIFC{H}}^{1'} a_{\fIFC{H}}^1} ( \nu{a_I^{1'} a_I^1} ( \pSel a_I[a_I^{1'}]<wait \| \pWait a_{\fIFC{H}}^{1'}() ; \pClose a_I^1[] ) \| \pClose a_{\fIFC{H}}^1[] ) \| Int_A )},
    \end{align*}
    from where asynchronous communication enables further communication between $\fProc{a_{\fIFC{H}}^{1'}}$ and $\fProc{a_{\fIFC{H}}^1}$ or between $\fProc{a_I}$ and $\fProc{i_A}$; for example,
    \[
        {} \redd \fProc{\nu{a_I i_A} ( \nu{a_I^{1'} a_I^1} ( \pSel a_I[a_I^{1'}]<wait \| \pClose a_I^1[] ) \| Int_A )}.
    \]
\end{example}

\subparagraph*{Types.}

We use linear session types to ``tame'' our processes.
The system we use is derived from classical linear logic, so types are expressed as linear-logic propositions\footnote{
    This choice is usually motivated as it comes with deadlock freedom, but we have two different reasons: (1)~it allows for direct compatibility with session-type systems for deadlock freedom, and (2)~a logical basis gives us a very clean and well-behaved linear session type system, which we can carefully extend to serve our goals (here, guaranteeing noninterference by typing).
}; they are typeset in a \textcolor{RoyalBlue}{\sffamily blue and sans-serif} font.

\begin{definition}[Types]
    \label{d:typeSyntax}
    \emph{Types} $\fType{A},\fType{B},\fType{C},\ldots$ are defined by the following syntax:
    \[
        \fType{A},\fType{B},\fType{C},\ldots ::= \fType{1} \sepr \fType{\bot} \sepr \fType{\oplus \{ i : A \}_{i \in I}} \sepr \fType{\with \{ i : A \}_{i \in I}} \sepr \fType{A \tensor B} \sepr \fType{A \parr B}
    \]

    \emph{Duality} is a unary operation on types, denoted $\fType{\dual{A}}$, defined as follows:
    \begin{align*}
        \fType{\dual{1}} &:= \fType{\bot}
        &
        \fType{\dual{\oplus \{ i : A_i \}_{i \in I}}} &:= \fType{\with \{ i : \dual{A_i} \}_{i \in I}}
        &
        \fType{\dual{A \tensor B}} &:= \fType{\dual{A} \parr \dual{B}}
        \\
        \fType{\dual{\bot}} &:= \fType{1}
        &
        \fType{\dual{\with \{ i : A_i \}_{i \in I}}} &:= \fType{\oplus \{ i : \dual{A_i} \}_{i \in I}}
        &
        \fType{\dual{A \parr B}} &:= \fType{\dual{A} \tensor \dual{B}}
    \end{align*}
\end{definition}

Type~$\fType{1}$ is associated with names that close channels, and~$\fType{\bot}$ with names that wait for channels to close.
Types~$\fType{\oplus \{ i : A_i \}_{i \in I}}$ and $\fType{\with \{ i : A_i \}_{i \in I}}$ are associated with names that make and expect labeled selections, respectively; given $\fProc{j} \in \fProc{I}$, $\fType{A_j}$ is the type of the continuation after $\fProc{j}$ has been selected/received.
Types~$\fType{A \tensor B}$ and $\fType{A \parr B}$ are associated with names that send and receive, respectively; $\fType{A}$ and $\fType{B}$ are the types of the payload and continuation afterwards.

Duality is a key component of session types, as it defines precisely what is meant by complementary behavior; for example, $\fType{\dual{1}} = \fType{\bot}$ is complementary to $\fType{1}$.
Clearly, duality is an involution (i.e., $\fType{\dual{(\dual{A})}} = \fType{A}$).

Our type system is defined as a sequent calculus.
In the following, ignore the annotations in \textcolor{RedOrange}{\textit{red and italic}}; these annotations are for IFC, explained in \Cref{s:IFC:IFC}.

\begin{figure}[t]
    {\small
    \begin{mathpar}
        \begin{bussproof}[typ-inact]
            \bussAx{
                \fType{\fIFC{\Omega} \vdash \fIFC{\fProc{0} \at d} :: \emptyset}
            }
        \end{bussproof}
        \and
        \begin{bussproof}[typ-par]
            \bussAssume{
                \fIFC{\Omega \Vdash d \lleq d'_1 \lcap d'_2}
            }
            \bussAssume{
                \fType{\fIFC{\Omega} \vdash \fIFC{\fProc{P} \at d'_1} :: \Gamma}
            }
            \bussAssume{
                \fType{\fIFC{\Omega} \vdash \fIFC{\fProc{Q} \at d'_2} :: \Delta}
            }
            \bussTern{
                \fType{\fIFC{\Omega} \vdash \fIFC{\fProc{P \| Q} \at d} :: \Gamma , \Delta}
            }
        \end{bussproof}
        \and
        \begin{bussproof}[typ-res]
            \bussAssume{
                \fType{\fIFC{\Omega} \vdash \fIFC{\fProc{P} \at d} :: \Gamma , \fProc{x}:A\fIFC{[c]}, \fProc{y}:\dual{A}\fIFC{[c]}}
            }
            \bussUn{
                \fType{\fIFC{\Omega} \vdash \fIFC{\fProc{\nu{xy} P} \at d} :: \Gamma}
            }
        \end{bussproof}
        \and
        \begin{bussproof}[typ-close]
            \bussAssume{
                \fIFC{\Omega \Vdash d \lleq c}
            }
            \bussUn{
                \fType{\fIFC{\Omega} \vdash \fIFC{\fProc{\pClose x[]} \at d} :: \fProc{x}:1\fIFC{[c]}}
            }
        \end{bussproof}
        \and
        \begin{bussproof}[typ-wait]
            \bussAssume{
                \fIFC{\Omega \Vdash d' = d \lcup c}
            }
            \bussAssume{
                \fType{\fIFC{\Omega} \vdash \fIFC{\fProc{P} \at d'} :: \Gamma}
            }
            \bussBin{
                \fType{\fIFC{\Omega} \vdash \fIFC{\fProc{\pWait x() ; P} \at d} :: \Gamma , \fProc{x}:\bot\fIFC{[c]}}
            }
        \end{bussproof}
        \and
        \begin{bussproof}[typ-sel]
            \bussAssume{
                \fIFC{\Omega \Vdash d \lleq c}
            }
            \bussAssume{
                \fProc{j} \in \fProc{I}
            }
            \bussBin{
                \fType{\fIFC{\Omega} \vdash \fIFC{\fProc{\pSel x[b]<j} \at d} :: \fProc{x}:\oplus \{ i : A_i \}_{i \in I}\fIFC{[c]} , \fProc{b}:\dual{A_j}\fIFC{[c]}}
            }
        \end{bussproof}
        \and
        \begin{bussproof}[typ-bra]
            \bussAssume{
                \fIFC{\Omega \Vdash d' = d \lcup c}
            }
            \bussAssume{
                \forall \fProc{i} \in \fProc{I}.~ \fType{\fIFC{\Omega} \vdash \fIFC{\fProc{P_i} \at d'} :: \Gamma , \fProc{z}:A_i\fIFC{[c]}}
            }
            \bussBin{
                \fType{\fIFC{\Omega} \vdash \fIFC{\fProc{\pBra x(z)>\{i:P_i\}_{i \in I}} \at d} :: \Gamma , \fProc{x}:\& \{ i : A_i \}_{i \in I}\fIFC{[c]}}
            }
        \end{bussproof}
        \and
        \begin{bussproof}[typ-send]
            \bussAssume{
                \fIFC{\Omega \Vdash d \lleq c}
            }
            \bussUn{
                \fType{\fIFC{\Omega} \vdash \fIFC{\fProc{\pSend x[a,b]} \at d} :: \fProc{x}:A \tensor B\fIFC{[c]} , \fProc{a}:\dual{A}\fIFC{[c]} , \fProc{b}:\dual{B}\fIFC{[c]}}
            }
        \end{bussproof}
        \and
        \begin{bussproof}[typ-recv]
            \bussAssume{
                \fIFC{\Omega \Vdash d' = d \lcup c}
            }
            \bussAssume{
                \fType{\fIFC{\Omega} \vdash \fIFC{\fProc{P} \at d'} :: \Gamma , \fProc{y}:A\fIFC{[c]} , \fProc{z}:B\fIFC{[c]}}
            }
            \bussBin{
                \fType{\fIFC{\Omega} \vdash \fIFC{\fProc{\pRecv x(y,z) ; P} \at d} :: \Gamma , \fProc{x}:A \parr B\fIFC{[c]}}
            }
        \end{bussproof}
    \end{mathpar}
    }
    \caption{Typing rules; cf.\ \Cref{d:typeSys}.}\label{f:typeSys}
\end{figure}

\begin{definition}[Type System]
    \label{d:typeSys}
    \emph{Typing contexts} $\fType{\Gamma},\fType{\Delta},\ldots$ are defined by the following syntax:
    \[
        \fType{\Gamma},\fType{\Delta},\ldots ::= \fType{\emptyset} \sepr \fType{\Gamma, \fProc{x}:{A}\fIFC{[c]}}
    \]

    \emph{Typing judgments} are denoted $\fType{\fIFC{\Omega} \vdash \fIFC{\fProc{P} \at d} :: \Gamma}$.
    They are derived using the rules in \Cref{f:typeSys}.

\end{definition}

Typing contexts are thus sets of types assigned to names; the type system allows implicitly reordering these assignments in typing contexts.
Whenever we write $\fType{\Gamma , \Delta}$, we assume that the sets of names appearing in $\fType{\Gamma}$ and $\fType{\Delta}$ are disjoint.

Rule~\ruleLabel{typ-inact} types inaction under empty context.
Rule~\ruleLabel{typ-par} types parallel composition by splitting the typing context into disjoint parts, one for each parallel process.
Rule~\ruleLabel{typ-res} types restriction by requiring the connected names to be dually typed.
Rule~\ruleLabel{typ-close} types a close with only its subject in the context.
Dually, Rule~\ruleLabel{typ-wait} types a wait by removing its subject from the context of the continuation.
Rule~\ruleLabel{typ-sel} types a selection; note that the continuation is typed dually to the continuation type of the selection itself, as this name will be received by a corresponding branch and used for further communications there.
Dually, Rule~\ruleLabel{typ-bra} types a branch; it requires every continuation to be typed with the same context besides the type of the continuation name.
Rule~\ruleLabel{typ-send} types a send; the payload and continuation are typed dually, similar to the continuation in Rule~\ruleLabel{typ-sel}.
Dually, Rule~\ruleLabel{typ-recv} types a receive.

\begin{example}
    \label{x:typing}
    To illustrate process typing, we type the secure variant of $\fProc{Gov_A^{\fIFC{H}}}$ introduced in \Cref{x:syntax} as follows.
    We omit the \textcolor{RedOrange}{\textit{red and italic}} IFC annotations entirely, as well as the typing of the $\fProc{oc_2}$ branch which is analogous to the $\fProc{oc_1}$ branch.
    \[
        \begin{bussproof}
            \bussAx[\ruleLabel{typ-sel}]{
                \fType{
                    \begin{array}[t]{@{}l@{}}
                        \vdash
                        \fProc{\pSel a_I[a_I^{1'}]<act}
                        \\
                        ::
                        \fProc{a_I} : \oplus \{ act : 1 , wait : 1 \}
                        ,
                        \fProc{a_I^{1'}} : \bot
                    \end{array}
                }
            }
            \bussAx[\ruleLabel{typ-close}]{
                \fType{
                    \vdash
                    \fProc{\pClose a_I^1[]}
                    ::
                    \fProc{a_I^1} : 1
                }
            }
            \bussUn[\ruleLabel{typ-wait}]{
                \fType{
                    \begin{array}[t]{@{}l@{}}
                        \vdash
                        \fProc{\pWait a_{\fIFC{L}}^1() ; \pClose a_I^1[]}
                        \\
                        ::
                        \fProc{a_{\fIFC{L}}^1} : \bot
                        ,
                        \fProc{a_I^1} : 1
                    \end{array}
                }
            }
            \bussBin[\ruleLabel{typ-par}]{
                \fType{
                    \begin{array}[t]{@{}l@{}}
                        \vdash
                        \fProc{\pSel a_I[a_I^{1'}]<act \| \pWait a_{\fIFC{L}}^1() ; \pClose a_I^1[]}
                        \\
                        ::
                        \fProc{a_{\fIFC{L}}^1} : \bot
                        ,
                        \fProc{a_I} : \oplus \{ act : 1 , wait : 1 \}
                        ,
                        \fProc{a_I^{1'}} : \bot
                        ,
                        \fProc{a_I^1} : 1
                    \end{array}
                }
            }
            \bussUn[\ruleLabel{typ-res}]{
                \fType{
                    \begin{array}[t]{@{}l@{}}
                        \vdash
                        \fProc{\nu{a_I^{1'} a_I^1} ( \pSel a_I[a_I^{1'}]<act \| \pWait a_{\fIFC{L}}^1() ; \pClose a_I^1[] )}
                        \\
                        ::
                        \fProc{a_{\fIFC{L}}^1} : \bot
                        ,
                        \fProc{a_I} : \oplus \{ act : 1 , wait : 1 \}
                    \end{array}
                }
            }
            \bussAssume{
                \vdots
            }
            \bussBin[\ruleLabel{typ-bra}]{
                \fType{
                    \vdash
                    \fProc{\pBra a_{\fIFC{L}}(a_{\fIFC{L}}^{1})>\left\{ \begin{array}{@{}l@{}}
                        \fProc{oc_1 : \nu{a_I^{1'} a_I^1} ( \pSel a_I[a_I^{1'}]<act \| \pWait a_{\fIFC{L}}^1() ; \pClose a_I^1[] ) ,} \\
                        \fProc{oc_2 : \nu{a_I^{1'} a_I^1} ( \pSel a_I[a_I^{1'}]<wait \| \pWait a_{\fIFC{L}}^1() ; \pClose a_I^1[] )}
                    \end{array} \right\}}
                    :: \begin{array}[t]{@{}l@{}}
                        \fProc{a_{\fIFC{L}}} : \& \{ oc_1 : \bot , oc_2 : \bot \}
                        , \\
                        \fProc{a_I} : \oplus \{ act : 1 , wait : 1 \}
                    \end{array}
                }
            }
        \end{bussproof}
    \]
\end{example}

\subsection{Information Flow Control}
\label{s:IFC:IFC}

We now enrich the type system presented thus far with IFC, such that well typedness guarantees noninterference (\Cref{s:DSNI}).
That is, we introduce and explain the annotations in \textcolor{RedOrange}{\textit{red and italic}} in \Cref{f:typeSys}, and formalize the intuititions given in \Cref{s:ideas:IFC}.

We use $\fIFC{c},\fIFC{d},\ldots$ to denote secrecy levels.
The relation between secrecy levels is denoted~$\fIFC{c \lleq d}$ ($\fIFC{c}$ is at most as secret as $\fIFC{d}$), forming a lattice~$\fIFC{\Omega}$.
We write $\fIFC{\Omega \Vdash \phi}$ to denote that the relation~$\fIFC{\phi}$ between secrecies holds within $\fIFC{\Omega}$.
The least upper bound (join) and greatest lower bound (meet) are denoted $\fIFC{c \lcup d}$ and $\fIFC{c \lcap d}$, respectively.

Every name is assigned a secrecy level, denoted in typing contexts using square brackets after the name's type, as in $\fType{\fProc{x}:A\fIFC{[c]}}$.
To remember the level of secrecy of the messages received by a process, we annotate the process in typing judgments with a \emph{running secrecy}, denoted~$\fIFC{\fProc{P} \at d}$.
Given the running secrecy~$\fIFC{d}$ of the process before an input and the secrecy level~$\fIFC{c}$ of the input's subject, the input updates the running secrecy to the join $\fIFC{d \lcup c}$.
When a process then performs an output, to make sure that the name is secured for handling the secrecy level of the outgoing message, our IFC requires that its running secrecy is not higher than that of the output's subject name.

We make these intuitions precise by discussing the IFC annotations on each typing rule in \Cref{f:typeSys}.
Since Rule~\ruleLabel{typ-inact} does not involve communication, no secrecy checks are necessary.
Rule~\ruleLabel{typ-close} types an output, so it requires that the running secrecy $\fIFC{d}$ of the close is at most the secrecy level $\fIFC{c}$ of the closed name ($\fIFC{d \lleq c}$): the information received so far (the running secrecy) is not more secret than the closed name.
Rules~\ruleLabel{typ-sel} and~\ruleLabel{typ-send} also type outputs, so their checks are similar.
Rule~\ruleLabel{typ-wait} types an input, so it sets the running secrecy $\fIFC{d'}$ of the continuation of the wait to the least upper bound of the running secrecy $\fIFC{d}$ before the wait and the secrecy level~$\fIFC{c}$ of the name of the wait ($\fIFC{d' = d \lcup c}$; i.e., to the least secrecy level that is at least as high as both involved secrecies).
Rules~\ruleLabel{typ-bra} and~\ruleLabel{typ-recv} also type inputs, so they update running secrecies similarly.
Rule~\ruleLabel{typ-par} combines the running secrecies $\fIFC{d'_1}$ and $\fIFC{d'_2}$ of the parallel processes by taking a secrecy level~$\fIFC{d}$ that is at most the greatest lower bound of $\fIFC{d'_1}$ and $\fIFC{d'_2}$ ($\fIFC{d \lleq d'_1 \lcap d'_2}$); this way, the parallel composition has a running secrecy of at most the least common secrecy level of its components, ensuring that each process in the composition has at least as much information as the parallel composition itself.
Rule~\ruleLabel{typ-res} requires that the secrecies of the connected names coincide, ensuring that secrecy checks are consistent on both names of the created channel.
Note that channels in typing contexts may have different secrecy levels.
However, typing rules enforce that the channels in the same session (e.g., a send and its continuation) have the same secrecy levels.
For example, Rule~\ruleLabel{typ-sel} ensures that channel $\fProc{x}$ and its continuation $\fProc{b}$ are both of the same secrecy level $\fIFC{c}$.

\begin{example}
    \label{x:ifc}
    To illustrate IFC in our type system, we consider again the typing of $\fProc{Gov_A^{\fIFC{H}}}$ from \Cref{x:syntax,x:typing}.
    We repeat the typing derivation and include IFC annotations, but omit processes and types to save space.
    Let $\fIFC{\Omega}$ be the lattice with the only relation $\fIFC{L \llt H}$.
    \[
        \begin{bussproof}
            \bussAssume{
                \fIFC{L \lleq H}
            }
            \bussUn[\ruleLabel{typ-sel}]{
                \fType{
                    \fIFC{\Omega}
                    \vdash
                    \fIFC{{} \at L}
                    ::
                    \fProc{a_I} : \fIFC{[H]}
                    ,
                    \fProc{a_I^{1'}} : \fIFC{[H]}
                }
            }
            \bussAssume{
                \fIFC{L \lleq H}
            }
            \bussUn[\ruleLabel{typ-close}]{
                \fType{
                    \fIFC{\Omega}
                    \vdash
                    \fIFC{{} \at L \lcup L = L}
                    ::
                    \fProc{a_I^1} : \fIFC{[H]}
                }
            }
            \bussUn[\ruleLabel{typ-wait}]{
                \fType{
                    \fIFC{\Omega}
                    \vdash
                    \fIFC{{} \at L}
                    ::
                    \fProc{a_{\fIFC{L}}^1} : \fIFC{[L]}
                    ,
                    \fProc{a_I^1} : \fIFC{[H]}
                }
            }
            \bussBin[\ruleLabel{typ-par}]{
                \fType{
                    \fIFC{\Omega}
                    \vdash
                    \fIFC{{} \at L}
                    ::
                    \fProc{a_{\fIFC{L}}^1} : \fIFC{[L]}
                    ,
                    \fProc{a_I} : \fIFC{[H]}
                    ,
                    \fProc{a_I^{1'}} : \fIFC{[H]}
                    ,
                    \fProc{a_I^1} : \fIFC{[H]}
                }
            }
            \bussUn[\ruleLabel{typ-res}]{
                \fType{
                    \fIFC{\Omega}
                    \vdash
                    \fIFC{{} \at L \lcup L = L}
                    ::
                    \fProc{a_{\fIFC{L}}^1} : \fIFC{[L]}
                    ,
                    \fProc{a_I} : \fIFC{[H]}
                }
            }
            \bussAssume{
                \vdots
            }
            \bussBin[\ruleLabel{typ-bra}]{
                \fType{
                    \fIFC{\Omega}
                    \vdash
                    \fIFC{{} \at L}
                    ::
                    \fProc{a_{\fIFC{L}}} : \fIFC{[L]}
                    ,
                    \fProc{a_I} : \fIFC{[H]}
                }
            }
        \end{bussproof}
    \]
    Hence, $\fProc{Gov_A^{\fIFC{H}}}$ is considered secure in our type system, and as we will show in \Cref{s:DSNI} this means that noninterference holds for this process.

    However, the initial assignment of maximum secrecies to endpoints is chosen by the user.
    Well typedness and thus noninterference depends on this initial choice.
    To illustrate, reconsider the derivation above but now swapping the initial maximum secrecies:
    \[
        \begin{bussproof}
            \bussAssume{
                \fIFC{H \not\lleq L}
            }
            \bussUn[\ruleLabel{typ-sel}]{
                \fType{
                    \fIFC{\Omega}
                    \vdash
                    \fIFC{{} \at H}
                    ::
                    \fProc{a_I} : \fIFC{[L]}
                    ,
                    \fProc{a_I^{1'}} : \fIFC{[L]}
                }
            }
            \bussAssume{
                \fIFC{H \not\lleq L}
            }
            \bussUn[\ruleLabel{typ-close}]{
                \fType{
                    \fIFC{\Omega}
                    \vdash
                    \fIFC{{} \at H \lcup H = H}
                    ::
                    \fProc{a_I^1} : \fIFC{[L]}
                }
            }
            \bussUn[\ruleLabel{typ-wait}]{
                \fType{
                    \fIFC{\Omega}
                    \vdash
                    \fIFC{{} \at H}
                    ::
                    \fProc{a_{\fIFC{L}}^1} : \fIFC{[H]}
                    ,
                    \fProc{a_I^1} : \fIFC{[L]}
                }
            }
            \bussBin[\ruleLabel{typ-par}]{
                \fType{
                    \fIFC{\Omega}
                    \vdash
                    \fIFC{{} \at H}
                    ::
                    \fProc{a_{\fIFC{L}}^1} : \fIFC{[H]}
                    ,
                    \fProc{a_I} : \fIFC{[L]}
                    ,
                    \fProc{a_I^{1'}} : \fIFC{[L]}
                    ,
                    \fProc{a_I^1} : \fIFC{[L]}
                }
            }
            \bussUn[\ruleLabel{typ-res}]{
                \fType{
                    \fIFC{\Omega}
                    \vdash
                    \fIFC{{} \at L \lcup H = H}
                    ::
                    \fProc{a_{\fIFC{L}}^1} : \fIFC{[H]}
                    ,
                    \fProc{a_I} : \fIFC{[L]}
                }
            }
            \bussAssume{
                \vdots
            }
            \bussBin[\ruleLabel{typ-bra}]{
                \fType{
                    \fIFC{\Omega}
                    \vdash
                    \fIFC{{} \at L}
                    ::
                    \fProc{a_{\fIFC{L}}} : \fIFC{[H]}
                    ,
                    \fProc{a_I} : \fIFC{[L]}
                }
            }
        \end{bussproof}
    \]
    This time, $\fProc{Gov_A^{\fIFC{H}}}$ is not well typed: the IFC requirements of Rules~\ruleLabel{typ-sel} and~\ruleLabel{typ-close} do not hold.
\end{example}

\subsection{Type Preservation}
\label{s:IFC:typePres}

Our type system guarantees by well typedness the usual correctness properties: \emph{session fidelity} and \emph{communication safety}.
The former states that a process correctly implements the session types assigned to its names, and the latter that no communication mismatches take place (such as simultaneous outputs on both names of a channel).

Both these properties follow directly from \emph{type preservation}: well typedness is preserved across structural congruences (subject congruence; \Cref{t:subjCong}) and reduction (subject reduction; \Cref{t:subjRed}).
These results rely on two lemmas:
\begin{itemize}

    \item
        \Cref{l:notFnImpliesNotInDom} states that names that are not free in a process are not assigned in the typing of the process.

    \item
        \Cref{l:substitution} states that substitution in a process is reflected in its typing.

\end{itemize}

\begin{lemma}
    \label{l:notFnImpliesNotInDom}
    Given $\fType{\fIFC{\Omega} \vdash \fIFC{\fProc{P} \at d} :: \Gamma}$, if $\fProc{x} \notin \fn(\fProc{P})$, then $\fProc{x} \notin \dom(\fType{\Gamma})$.
\end{lemma}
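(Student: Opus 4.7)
The plan is to proceed by straightforward induction on the derivation of $\fType{\fIFC{\Omega} \vdash \fIFC{\fProc{P} \at d} :: \Gamma}$, performing case analysis on the last rule applied. Throughout, I tacitly assume the Barendregt convention that bound names in $\fProc{P}$ are chosen distinct from any name of interest (in particular, from the name $\fProc{x}$ in the statement); this is justified by Rule~\ruleLabel{sc-alpha} and the standard quotient of processes up to $\alpha$-equivalence.

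The base cases are \ruleLabel{typ-inact}, where $\fType{\Gamma = \emptyset}$ so $\fProc{x} \notin \dom(\fType{\Gamma})$ trivially, and \ruleLabel{typ-close}, where $\fType{\Gamma = \fProc{y}:1\fIFC{[c]}}$ for some $\fProc{y}$ with $\fProc{y} \in \fn(\fProc{\pClose y[]})$; since by hypothesis $\fProc{x} \notin \fn(\fProc{\pClose y[]}) = \{\fProc{y}\}$, we have $\fProc{x} \neq \fProc{y}$, hence $\fProc{x} \notin \dom(\fType{\Gamma})$. The rules \ruleLabel{typ-sel} and \ruleLabel{typ-send} are handled analogously, each time using that the names appearing in the typing context are exactly the free names of the output prefix.

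For the inductive cases, I apply the induction hypothesis to the premise(s), using in each case the definition of $\fn(\cdot)$ for the corresponding syntactic form. In \ruleLabel{typ-par}, $\fn(\fProc{P \| Q}) = \fn(\fProc{P}) \cup \fn(\fProc{Q})$, so $\fProc{x}$ is not free in either subterm and the IH gives $\fProc{x} \notin \dom(\fType{\Gamma})$ and $\fProc{x} \notin \dom(\fType{\Delta})$. In \ruleLabel{typ-res}, the names $\fProc{x}$ and $\fProc{y}$ bound by the restriction are removed from the context, and by Barendregt they are distinct from the name in question, so the IH on the premise suffices. The output-continuation cases inside a context (none in this system beyond what is listed) are straightforward.

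The only step requiring a bit of care is the input rules \ruleLabel{typ-wait}, \ruleLabel{typ-bra}, and \ruleLabel{typ-recv}, since here names that are bound in $\fProc{P}$ become free in the continuation that is typed in the premise: e.g.\ for $\fProc{\pRecv y(u,v) ; P'}$ we have $\fn(\fProc{\pRecv y(u,v) ; P'}) = \{\fProc{y}\} \cup (\fn(\fProc{P'}) \setminus \{\fProc{u},\fProc{v}\})$, while the premise types $\fProc{P'}$ with context $\fType{\Gamma', \fProc{u}:A\fIFC{[c]}, \fProc{v}:B\fIFC{[c]}}$. From $\fProc{x} \notin \fn(\fProc{\pRecv y(u,v) ; P'})$ we conclude $\fProc{x} \neq \fProc{y}$ and, by Barendregt, $\fProc{x} \neq \fProc{u}, \fProc{v}$, hence $\fProc{x} \notin \fn(\fProc{P'})$. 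The IH then yields $\fProc{x} \notin \dom(\fType{\Gamma', \fProc{u}:A\fIFC{[c]}, \fProc{v}:B\fIFC{[c]}})$, and since $\fProc{x} \neq \fProc{y}$ we conclude $\fProc{x} \notin \dom(\fType{\Gamma', \fProc{y}:A\parr B\fIFC{[c]}})$. This is the main (and only) subtlety; once the $\alpha$-convention is fixed, the argument is mechanical. The IFC annotations in \textcolor{RedOrange}{\textit{red}} play no role in the proof, as they concern secrecy constraints rather than the shape of the context.
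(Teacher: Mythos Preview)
Your proposal is correct. The paper states this lemma without proof, treating it as immediate; your induction on the typing derivation with the Barendregt convention is exactly the routine argument the authors presumably had in mind.
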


\begin{restatable}[Substitution]{lemma}{lSubstitution}
    \label{l:substitution}
    Given $\fType{\fIFC{\Omega} \vdash \fIFC{\fProc{P} \at d} :: \Gamma , \fProc{x}:A\fIFC{[c]}}$, we have
    \[
        \fType{\fIFC{\Omega} \vdash \fIFC{\fProc{P \{ y/x \}} \at d} :: \Gamma , \fProc{y}:A\fIFC{[c]}}.
    \]
\end{restatable}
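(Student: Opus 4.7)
The plan is to proceed by induction on the derivation of $\fType{\fIFC{\Omega} \vdash \fIFC{\fProc{P} \at d} :: \Gamma , \fProc{x}:A\fIFC{[c]}}$, performing a case analysis on the last typing rule applied. By \Cref{l:notFnImpliesNotInDom}, the hypothesis forces $\fProc{x} \in \fn(\fProc{P})$, so the substitution $\fProc{P\{y/x\}}$ is non-trivial; I also rely on the implicit freshness assumption $\fProc{y} \notin \dom(\fType{\Gamma})$ required for the target typing context to be well-formed.

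The base-case output rules (\ruleLabel{typ-close}, \ruleLabel{typ-sel}, \ruleLabel{typ-send}) force $\fProc{x}$ to occur syntactically as one of the names named in the rule's conclusion, as subject, payload, or continuation. The substitution $\fProc{P\{y/x\}}$ renames that single occurrence to $\fProc{y}$, and reapplying the same rule gives the desired judgment: the type $\fType{A}$, the secrecy level $\fIFC{c}$, and the running secrecy $\fIFC{d}$ are untouched, so the IFC side condition $\fIFC{\Omega \Vdash d \lleq c}$ is preserved verbatim. Rule \ruleLabel{typ-inact} cannot arise (empty context). For \ruleLabel{typ-par}, disjointness of the split places $\fProc{x}$ in the context of exactly one parallel subprocess; I apply the induction hypothesis there and recombine, exploiting $(\fProc{P_1 \| P_2})\fProc{\{y/x\}} = \fProc{P_1\{y/x\} \| P_2\{y/x\}}$ together with the fact that $\fProc{x}$ is not free in the other component (again by \Cref{l:notFnImpliesNotInDom}); the IFC constraint $\fIFC{d \lleq d'_1 \lcap d'_2}$ carries over unchanged.

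The inductive cases with binders---\ruleLabel{typ-res}, \ruleLabel{typ-wait}, \ruleLabel{typ-bra}, and \ruleLabel{typ-recv}---are the most delicate. In each, the bound names reside only in the premise and hence cannot be $\fProc{x}$, so I invoke the induction hypothesis on the premise and reapply the rule. The running-secrecy update $\fIFC{d' = d \lcup c}$ appearing in the input rules is preserved because neither $\fIFC{d}$ nor $\fIFC{c}$ depends on the identity of the subject name. The main obstacle is precisely here: ensuring that $\fProc{y}$ is not captured by these binders (e.g., the $\fProc{z}$ in $\fProc{\pBra x(z)>\{i:P_i\}_{i \in I}}$, or the bound endpoints in $\fProc{\nu{uv} P'}$). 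The capture-avoiding convention of $\fProc{P\{y/x\}}$ implicitly alpha-renames any offending bound name, and rule \ruleLabel{sc-alpha}, together with the standard fact that typing respects alpha-equivalence, justifies these renamings inside the derivation. Threading this bookkeeping through each binder case is mechanical but must be stated explicitly for a fully formal argument.
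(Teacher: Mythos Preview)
Your proposal is correct and follows the same approach as the paper: induction on the typing derivation, with the substitution propagated through each rule. The paper's own proof is a two-sentence sketch (``inductively substitute \ldots\ everywhere in this derivation without affecting anything else''), whereas you spell out the case analysis, the role of \Cref{l:notFnImpliesNotInDom}, the freshness assumption on $\fProc{y}$, and the capture-avoidance bookkeeping in the binder cases---all of which the paper leaves implicit. One minor inaccuracy: \ruleLabel{typ-wait} does not actually introduce a binder (only \ruleLabel{typ-res}, \ruleLabel{typ-bra}, and \ruleLabel{typ-recv} do), but this does not affect the argument.
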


\begin{restatable}[Subject Congruence]{theorem}{tSubjCong}
    \label{t:subjCong}
    If $\fType{\fIFC{\Omega} \vdash \fIFC{\fProc{P} \at d} :: \Gamma}$ and $\fProc{P} \sc \fProc{Q}$ for some $\fProc{Q}$, then $\fType{\fIFC{\Omega} \vdash \fIFC{\fProc{Q} \at d} :: \Gamma}$.
\end{restatable}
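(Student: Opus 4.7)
The plan is to proceed by induction on the derivation of $\fProc{P} \sc \fProc{Q}$, with a case analysis on the last rule used. Structural congruence is defined as the least congruence closed under the axioms in \Cref{f:procSCRedd} (top), so the cases split into (i) reflexivity, symmetry, and transitivity, (ii) the congruence closure under each process constructor, and (iii) the base axioms. The first group is immediate (symmetry is handled by proving each axiom in both directions, since structural congruence is itself symmetric). The second group is standard: for each constructor---parallel composition, restriction, wait, branch, receive---I would invert the typing of the surrounding context and appeal to the induction hypothesis on the relevant subprocess, then reapply the same typing rule.

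For the axioms, the key cases are as follows. For \ruleLabel{sc-alpha}, I would induct on the structure of $\fProc{P}$ and repeatedly apply the Substitution lemma (\Cref{l:substitution}) to justify renaming each bound name; since substitution preserves both the type assignment and the secrecy annotation of the renamed name, the running secrecy $\fIFC{d}$ and the typing context $\fType{\Gamma}$ are unchanged. For \ruleLabel{sc-par-symm} and \ruleLabel{sc-par-assoc}, inverting \ruleLabel{typ-par} yields running secrecies $\fIFC{d'_1}, \fIFC{d'_2}$ (and $\fIFC{d'_{12}}$ for associativity) with lattice constraints $\fIFC{d \lleq d'_1 \lcap d'_2}$ and analogously for the nested case; symmetry and associativity of $\lcap$ in the lattice $\fIFC{\Omega}$ let me rebuild the derivation with the premises grouped the other way. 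For \ruleLabel{sc-res-symm}, the fact that duality is involutive ($\fType{\dual{(\dual{A})}} = \fType{A}$) means that swapping the two endpoints in \ruleLabel{typ-res} still yields a valid dual pairing. Rule~\ruleLabel{sc-res-assoc} is a trivial rearrangement since \ruleLabel{typ-res} does not touch the running secrecy. For \ruleLabel{sc-res-comm} (scope extrusion), the side condition $\fProc{x},\fProc{y} \notin \fProc{\fn(Q)}$ together with \Cref{l:notFnImpliesNotInDom} ensures that the names $\fProc{x},\fProc{y}$ are not in the typing context of $\fProc{Q}$, so the typing of the parallel composition can be split cleanly across the restriction.

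The subtle case is \ruleLabel{sc-par-nil}. Assuming $\fType{\fIFC{\Omega} \vdash \fIFC{\fProc{P \| 0} \at d} :: \Gamma}$, inversion of \ruleLabel{typ-par} gives $\fType{\fIFC{\Omega} \vdash \fIFC{\fProc{P} \at d'_1} :: \Gamma}$ and $\fType{\fIFC{\Omega} \vdash \fIFC{\fProc{0} \at d'_2} :: \emptyset}$ with $\fIFC{d \lleq d'_1 \lcap d'_2}$, and I must produce $\fType{\fIFC{\Omega} \vdash \fIFC{\fProc{P} \at d} :: \Gamma}$ where $\fIFC{d}$ may be strictly below $\fIFC{d'_1}$. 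The converse direction, from $\fType{\fIFC{\Omega} \vdash \fIFC{\fProc{P} \at d} :: \Gamma}$ to $\fType{\fIFC{\Omega} \vdash \fIFC{\fProc{P \| 0} \at d} :: \Gamma}$, is immediate by applying \ruleLabel{typ-par} with both sub-running-secrecies set to $\fIFC{d}$ (and using \ruleLabel{typ-inact} for $\fProc{0}$). I therefore need an auxiliary \emph{running-secrecy weakening} lemma: if $\fType{\fIFC{\Omega} \vdash \fIFC{\fProc{P} \at d'} :: \Gamma}$ and $\fIFC{d_0 \lleq d'}$, then $\fType{\fIFC{\Omega} \vdash \fIFC{\fProc{P} \at d_0} :: \Gamma}$. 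This lemma is proved by straightforward induction on the typing derivation, noting that lowering $\fIFC{d'}$ only strengthens the side conditions $\fIFC{d \lleq c}$ on output rules, and lowers the join $\fIFC{d \lcup c}$ in input rules (which is then discharged by the induction hypothesis on the continuation).

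The main obstacle will be keeping the IFC bookkeeping consistent across these rearrangements; once the weakening lemma is established, each axiom reduces to a mechanical verification that the lattice constraints can be rewired to match the new derivation shape. The remaining subtlety is the congruence-closure cases for input binders (wait, branch, receive), where the running secrecy of the continuation is determined by $\fIFC{d' = d \lcup c}$; since the induction hypothesis preserves both $\fIFC{d'}$ and the context, nothing extra is needed beyond applying the same rule on the congruent continuation.
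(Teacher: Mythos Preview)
Your plan is correct and follows the same overall structure as the paper's proof: induction on the derivation of $\fProc{P} \sc \fProc{Q}$, with the congruence-closure cases handled by inversion plus the IH, and each axiom verified by inverting and rebuilding the typing derivation while checking the lattice side conditions.

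Where you go slightly beyond the paper is in the \ruleLabel{sc-par-nil} case. The paper's appendix proof simply displays a derivation of $\fProc{P \| 0}$ at running secrecy $\fIFC{d}$ in which both subderivations are also at $\fIFC{d}$, and asserts this is interderivable with $\fProc{P}$ at $\fIFC{d}$. But in the direction from $\fProc{P \| 0}$ to $\fProc{P}$, inversion of \ruleLabel{typ-par} only yields $\fProc{P}$ at some $\fIFC{d'_1}$ with $\fIFC{d \lleq d'_1}$, not necessarily at $\fIFC{d}$ itself. Your auxiliary running-secrecy weakening lemma is precisely what closes this gap, and your sketch of its proof is correct: lowering the running secrecy only strengthens the output-rule side conditions $\fIFC{d \lleq c}$, and for input rules the lowered join $\fIFC{d_0 \lcup c \lleq d \lcup c}$ lets the IH apply to the continuation. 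The paper leaves this step implicit; your version is more explicit and hence more rigorous on this point. The remaining axiom cases (in particular your treatment of \ruleLabel{sc-par-assoc} via associativity of $\fIFC{\lcap}$ and of \ruleLabel{sc-res-comm} via \Cref{l:notFnImpliesNotInDom}) line up with the paper's treatment.
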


\begin{proof}
    By induction on the derivation of $\fProc{P} \sc \fProc{Q}$.
    The inductive cases correspond to closure under arbitrary process contexts in \Cref{d:procSyntax}; these cases follow from the IH straightforwardly.
    The base cases correspond to the seven rules in \Cref{f:procSCRedd} (top).
    In each case, we apply inversion on the typing of $\fProc{P}$ to derive the typing of $\fProc{Q}$, and vice versa, with straightforward reasoning about running secrecies.
    The only interesting case is Rule~\ruleLabel{sc-par-assoc} ($\fProc{( P \| Q ) \| R} \sc \fProc{P \| ( Q \| R )}$), where we derive the running secrecy of $\fProc{Q \| R}$ from that of $\fProc{P \| Q}$, and vice versa.
    The full proof is in \appref[proof:t:subjCong].
\end{proof}

The following theorem states that (i)~reduction preserves the well typedness of processes, and (ii)~the running secrecy of processes may either stay the same or increase during reduction.
This implies that a process never forgets the secrets it has learned, but it may learn more secrets as it reduces.

\begin{restatable}[Subject Reduction]{theorem}{tSubjRed}
    \label{t:subjRed}
    If $\fType{\fIFC{\Omega} \vdash \fIFC{\fProc{P} \at d} :: \Gamma}$ and $\fProc{P} \redd \fProc{Q}$ for some $\fProc{Q}$, then $\fType{\fIFC{\Omega} \vdash \fIFC{\fProc{Q} \at d'} :: \Gamma}$ for some $\fIFC{d'}$ such that $\fIFC{\Omega \Vdash d \lleq d'}$.
\end{restatable}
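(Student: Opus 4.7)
The plan is to proceed by induction on the derivation of $\fProc{P} \redd \fProc{Q}$, distinguishing the six rules of \Cref{f:procSCRedd} (bottom). The three structural rules are routine: for \ruleLabel{red-sc}, I invoke \Cref{t:subjCong} on both sides of the inner reduction, and the IH on the inner reduction supplies the witness $\fIFC{d'}$ with $\fIFC{\Omega \Vdash d \lleq d'}$. For \ruleLabel{red-par} and \ruleLabel{red-res}, inversion of \ruleLabel{typ-par} (resp.\ \ruleLabel{typ-res}) exposes a sub-derivation to which the IH applies. Concretely, if $\fProc{P}$ is typed at $\fIFC{d_1}$, $\fProc{Q}$ at $\fIFC{d_2}$, and $\fIFC{d \lleq d_1 \lcap d_2}$, the IH yields $\fProc{P'}$ typed at some $\fIFC{d'_1}$ with $\fIFC{d_1 \lleq d'_1}$; the original $\fIFC{d}$ can simply be reused because $\fIFC{d \lleq d_1 \lleq d'_1}$, so $\fIFC{d \lleq d'_1 \lcap d_2}$ still holds and \ruleLabel{typ-par} reapplies.

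For the three base cases I would invert the typing of $\fProc{P}$ through \ruleLabel{typ-res} and \ruleLabel{typ-par} down to the matched communication primitives. In each case, let $\fIFC{c}$ be the common secrecy of the two restricted endpoints (forced by \ruleLabel{typ-res}), and let $\fIFC{d_1}, \fIFC{d_2}$ be the running secrecies of the two parallel components so that $\fIFC{d \lleq d_1 \lcap d_2}$. Inversion of the \emph{input}-side rule (\ruleLabel{typ-wait}, \ruleLabel{typ-bra}, or \ruleLabel{typ-recv}) exposes the continuation typed at running secrecy $\fIFC{d_2 \lcup c}$; the output-side rule only contributes the side-condition $\fIFC{d_1 \lleq c}$, which we do not need to exploit beyond consistency. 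Setting $\fIFC{d' := d_2 \lcup c}$ gives $\fIFC{d \lleq d_2 \lleq d'}$ as required. For \ruleLabel{red-close-wait} the residual is literally this continuation, and the typing context reduces to exactly $\fType{\Gamma}$ after removing $\fProc{x}:\fType{1}\fIFC{[c]}$ and $\fProc{y}:\fType{\bot}\fIFC{[c]}$. For \ruleLabel{red-sel-bra} one picks out the branch indexed by $\fProc{j}$, whose context differs from the others only in the type of the continuation name. For \ruleLabel{red-send-recv} the substitution lemma (\Cref{l:substitution}) is invoked twice to rename the bound receive names $\fProc{z}, \fProc{w}$ to the send's names $\fProc{a}, \fProc{b}$, which preserves both the typing context and the running secrecy.

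The main obstacle I anticipate is the bookkeeping needed in the \ruleLabel{red-send-recv} case, where three alignments must be verified simultaneously: the structural duality forced by \ruleLabel{typ-res} (so that $\fType{A \tensor B}$ on $\fProc{x}$ matches $\fType{\dual{A} \parr \dual{B}}$ on $\fProc{y}$), the agreement of the secrecy annotation $\fIFC{c}$ on all four names $\fProc{x}, \fProc{y}, \fProc{a}, \fProc{b}$ and the receive's binders $\fProc{z}, \fProc{w}$ (which is built into \ruleLabel{typ-send} and \ruleLabel{typ-recv} by attaching the same $\fIFC{c}$ to a subject and its payload/continuation), and the preservation of the \emph{running} secrecy $\fIFC{d_2 \lcup c}$ under the two substitutions. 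Once these are aligned, \Cref{l:substitution} handles the renaming cleanly, and the witness $\fIFC{d' = d_2 \lcup c}$ satisfies $\fIFC{\Omega \Vdash d \lleq d'}$ uniformly with the other two base cases, completing the induction.
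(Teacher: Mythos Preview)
Your proposal is correct and follows essentially the same approach as the paper: induction on the reduction derivation, inversion through \ruleLabel{typ-res}/\ruleLabel{typ-par} to the communication primitives, with the witness $\fIFC{d'} = \fIFC{d_2 \lcup c}$ in the base cases and \Cref{l:substitution} for the name-passing reductions. Your handling of \ruleLabel{red-par} (reusing $\fIFC{d}$ as the witness since $\fIFC{d \lleq d'_1 \lcap d_2}$) is in fact slightly simpler than the paper's choice of $\fIFC{(d''_1 \lcap d'_2) \lcup d}$, though both work; note that \ruleLabel{red-sel-bra} also needs one application of \Cref{l:substitution} for the continuation name, which you did not state explicitly.
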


\begin{proof}
    By induction on the derivation of $\fProc{P} \redd \fProc{Q}$.
    The cases correspond to the reduction rules in \Cref{f:procSCRedd} (bottom).
    In each case, we apply inversion on the typing of $\fProc{P}$ to derive the typing of $\fProc{Q}$.
    The full proof is in \appref[proof:t:subjRed]; here, we show the interesting case of Rule~\ruleLabel{red-send-recv}: $\fProc{\nu{xy} ( \pSend x[a,b] \| \pRecv y(z,w) ; P )} \redd \fProc{P \pSubst{a/z,b/w}}$.
    Given
    \begin{align}
        & \fIFC{\Omega \Vdash d \lleq d'_1 \lcap d'_2},
        \label{eq:subjRed:sendRecv1}
        \\
        & \begin{bussproof}
            \bussAssume{
                \fIFC{\Omega \Vdash d'_1 \lleq c}
            }
            \bussUn[\ruleLabel{typ-send}]{
                \fType{\fIFC{\Omega} \vdash \fIFC{\fProc{\pSend x[a,b]} \at d'_1} :: \fProc{x}:A \tensor B\fIFC{[c]} , \fProc{a}:\dual{A}\fIFC{[c]} , \fProc{b}:\dual{B}\fIFC{[c]}}
            }
        \end{bussproof},
        \label{eq:subjRed:sendRecv2}
        \\
        & \fIFC{\Omega \Vdash d''_2 = d'_2 \lcup c},
        \label{eq:subjRed:sendRecv3}
    \end{align}
    we have
    \begin{mathpar}
        \begin{bussproof}
            \bussAssume{
                \eqref{eq:subjRed:sendRecv1}
            }
            \bussAssume{
                \eqref{eq:subjRed:sendRecv2}
            }
            \bussAssume{
                \eqref{eq:subjRed:sendRecv3}
            }
            \bussAssume{
                \fType{\fIFC{\Omega} \vdash \fIFC{\fProc{P} \at d''_2} :: \Gamma , \fProc{z}:\dual{A}\fIFC{[c]} , \fProc{w}:\dual{B}\fIFC{[c]}}
            }
            \bussBin[\ruleLabel{typ-recv}]{
                \fType{\fIFC{\Omega} \vdash \fIFC{\fProc{\pRecv y(z,w) ; P} \at d'_2} :: \Gamma , \fProc{y}:\dual{A} \parr \dual{B}\fIFC{[c]}}
            }
            \bussTern[\ruleLabel{typ-par}]{
                \fType{\fIFC{\Omega} \vdash \fIFC{\fProc{\pSend x[a,b] \| \pRecv y(z,w) ; P} \at d} :: \Gamma , \fProc{x}:A \tensor B\fIFC{[c]} , \fProc{y}:\dual{A} \parr \dual{B}\fIFC{[c]} , \fProc{a}:\dual{A}\fIFC{[c]} , \fProc{b}:\dual{B}\fIFC{[c]}}
            }
            \bussUn[\ruleLabel{typ-res}]{
                \fType{\fIFC{\Omega} \vdash \fIFC{\fProc{\nu{xy} ( \pSend x[a,b] \| \pRecv y(z,w) ; P )} \at d} :: \Gamma , \fProc{a}:\dual{A}\fIFC{[c]} , \fProc{b}:\dual{B}\fIFC{[c]}}
            }
        \end{bussproof}
        \and
        \Rightarrow
        \and
        \begin{bussproof}
            \bussAssume{
                \text{\Cref{l:substitution} twice}
            }
            \bussUn{
                \fType{\fIFC{\Omega} \vdash \fIFC{\fProc{P \pSubst{a/z,b/w}} \at d''_2} :: \Gamma , \fProc{a}:\dual{A}\fIFC{[c]} , \fProc{b}:\dual{B}\fIFC{[c]}}
            }
        \end{bussproof}
    \end{mathpar}
    By assumption and by definition, $\fIFC{\Omega \Vdash d''_2 \lgeq d'_2}$.
    Also, by definition, $\fIFC{\Omega \Vdash d'_2 \lgeq d'_1 \lcap d'_2}$, so, by assumption, $\fIFC{\Omega \Vdash d'_2 \lgeq d}$.
    Hence, $\fIFC{\Omega \Vdash d''_2 \lgeq d}$.
\end{proof}

\subparagraph*{Liveness / Progress.}

Liveness / Progress properties specify the conditions under which processes can reduce.
The progress property of APCP states that reduction takes place for a syntactic notion of ``live'' processes~\cite{journal/scico/vdHeuvelP22}.
Since this result does not rely on APCP's priority mechanisms, it applies to our process language as well.

\section{Logical Relation}
\label{s:logRel}

This section defines an equivalence on typed processes up to ``observable messages'' (\Cref{d:DSNIRel}) that we will use to state and prove DSNI in \Cref{s:DSNI}.
We first give some preliminary definitions in \Cref{s:logRel:prelim}, before defining the logical relation that induces this equivalence in \Cref{s:logRel:theRel}.

\subsection{Preliminary Definitions}
\label{s:logRel:prelim}

As anticipated in \Cref{s:ideas:logRel}, we are interested in the behavior of a process when it runs in different contexts.
That is, we want to connect all the free names of the process in arbitrary ways.
To this end, we define evaluation contexts: processes with a hole inside which a process may reduce (so under parallel composition and restriction).
Evaluation contexts are typeset using an \textcolor{YellowOrange}{\ttfamily orange and monospaced} font.

\begin{definition}[Evaluation Context]
    \label{d:evalCtx}
    \emph{Evaluation contexts} ($\fCtx{E}$) are defined as follows:
    \[
        \fCtx{E} ::= \fCtx{\hole} \sepr \fCtx{E \| P} \sepr \fCtx{\nu{xy} E}
    \]
    We write $\fCtx{E[\fProc{P}]}$ to denote the process obtained by replacing the hole $\fCtx{\hole}$ in $\fCtx{E}$ by $\fProc{P}$.

    Any definitions on processes before and after this definition are lifted to evaluation contexts, without assigning any meaning to the hole.
    The exception is that alpha renaming does not apply to names that are bound by restriction but not free inside the scope of the restriction.
\end{definition}

\begin{example}
    \label{x:ctx}
    The following is an evaluation context:
    \begin{align*}
        \fCtx{E} := \fCtx{\nu{uw} \nu{xy} \nu{zv} ( \pWait x() ; \pClose u[] \| \pClose z[] \| \hole )}
        \label{eq:ctx}
    \end{align*}
    Both $\fProc{u}$ and $\fProc{w}$ are bound in $\fCtx{E}$.
    Since $\fProc{u}$ appears free within the scope of the restriction as the subject of a close, alpha renaming applies: $\fCtx{E} \alpheq \fCtx{\nu{aw} \nu{xy} \nu{zv} ( \pWait x() ; \pClose a[] \| \pClose z[] \| \hole )}$.
    However, the same does not hold for $\fProc{w}$: $\fCtx{E} \not\alpheq \fCtx{\nu{ua} \nu{xy} \nu{zv} ( \pWait x() ; \pClose u[] \| \pClose z[] \| \hole )}$.
\end{example}

We refer to the names that connect the process and its context as the \emph{interface}.
Our logical relation focuses on messages between context and process, i.e., messages that must pass through the interface.
The following definition identifies outputs in process and context that are not blocked by prefixes. In particular, $\aon(\fProc{P})$ is the set of names \emph{along} which $\fProc{P}$ is ready to output, and $\acon(\fCtx{E})$ is the set of names \emph{to} which the context is ready to output.

\begin{definition}[Active Interface Names]
    \label{d:ain}
    We define the set of \emph{active output names} of $\fProc{P}$, denoted $\aon(\fProc{P})$, as the subjects of non-blocked outputs in $\fProc{P}$:
    \begin{align*}
        \aon(\fProc{0}) &:= \emptyset
        &
        \aon(\fProc{P \| Q}) &:= \aon(\fProc{P}) \cup \aon(\fProc{Q})
        &
        \aon(\fProc{\nu{xy} P}) &:= \aon(\fProc{P}) \setminus \{\fProc{x},\fProc{y}\}
        \\
        \aon(\fProc{\pClose x[]}) &:= \{\fProc{x}\}
        &
        \aon(\fProc{\pSend x[a,b]}) &:= \{\fProc{x}\}
        &
        \aon(\fProc{\pSel x[b]<j}) &:= \{\fProc{x}\}
        \\
        \aon(\fProc{\pWait x() ; P}) &:= \emptyset
        &
        \aon(\fProc{\pRecv x(y,z) ; P}) &:= \emptyset
        &
        \mathllap{
            \aon(\fProc{\pBra x(z)>\{i:P_i\}_{i \in I}})
        } &:= \emptyset
    \end{align*}

    We define the set of \emph{active context output names} of $\fCtx{E}$, denoted $\acon(\fCtx{E})$, as the names in the interface of $\fCtx{E}$ that are connected to active output names of $\fCtx{E}$ through restriction:
    \[
        \acon(\fCtx{E}) :=
        \{ \fProc{x} \mid \exists \fProc{y},\fCtx{E'} .~ \big( \fCtx{E} \sc \fCtx{\nu{xy} E'} \wedge \fProc{x} \notin \fn(\fCtx{E'}) \wedge \fProc{y} \in \aon(\fCtx{E'}) \big) \}
    \]

    We define the set of \emph{active interface names} of $\fCtx{E}$ and $\fProc{P}$ as the union of the active context output names of $\fCtx{E}$ and the active output names of $\fProc{P}$:
    \[
        \ain(\fCtx{E} , \fProc{P}) := \acon(\fCtx{E}) \cup \aon(\fProc{P})
    \]
\end{definition}

\begin{example}
    We illustrate the active interface names between $\fProc{P} := \fProc{\pClose y[] \| \pWait w() ; \pWait v() ; 0}$ and $\fCtx{E}$ from \Cref{x:ctx}.
    It is easy to see that $\aon(\fProc{P}) = \{\fProc{y}\}$.
    To determine $\acon(\fCtx{E})$ we search for names in $\fCtx{E}$ that are bound by restriction to names used for output, but not used themselves.
    That is, we look for names in the interface between the context and the containing process, on which the containing process can expect to receive an output from the context.
    For example, in $\fCtx{E}$, name $\fProc{v}$ is bound to $\fProc{z}$ which is used for an output, while $\fProc{v}$ itself is not used (it appears in the interface).
    Since there are no further such names, we have $\acon(\fCtx{E}) = \{\fProc{v}\}$.
    As such, $\ain(\fCtx{E},\fProc{P}) = \{\fProc{y},\fProc{v}\}$.
\end{example}

The interface is where an attacker (cf.\ \Cref{s:ideas:logRel}) may observe the behavior of our process.
In this, we assume that the attacker can only observe messages up to a certain secrecy level~$\fIFC{\xi}$.
As such, our relation is only interested in the behavior of the process on \emph{observable} channels.
To this end, we define a projection on typing contexts to filter out unobservable channels.
Also, we define when a process in context is well typed with respect to a given typing context of observable channels.

\begin{definition}[Projection and Networks]
    \label{d:projNetw}
    Given a secrecy lattice $\fIFC{\Omega}$, a secrecy level $\fIFC{\xi} \in \dom(\fIFC{\Omega})$, and a typing context $\fType{\Gamma}$, we define the \emph{projection} $\fType{\Gamma \proj_{\fIFC{\Omega}} \fIFC{\xi}}$ as follows:
    \begin{align*}
        \fType{(\Gamma , \fProc{x}:A\fIFC{[c]}) \proj_{\fIFC{\Omega}} \fIFC{\xi}} &:= \begin{cases}
            \fType{(\Gamma \proj_{\fIFC{\Omega}} \fIFC{\xi}) , \fProc{x}:A\fIFC{[c]}} & \text{if $\fIFC{\Omega \Vdash c \lleq \xi}$}
            \\
            \fType{\Gamma \proj_{\fIFC{\Omega}} \fIFC{\xi}} & \text{if $\fIFC{\Omega \Vdash c \not\lleq \xi}$}
        \end{cases}
        &
        \fType{\emptyset \proj_{\fIFC{\Omega}} \fIFC{\xi}} &:= \fType{\emptyset}
    \end{align*}
    We often omit $\fIFC{\Omega}$ when it is clear from the context.

    We say $\fCtx{E}$ and $\fProc{P}$ \emph{form a network with interface $\fType{\Gamma}$ observable up to $\fIFC{\xi}$ under $\fIFC{\Omega}$}, denoted $(\fCtx{E},\fProc{P}) \in \netw{\fIFC{\Omega};\fIFC{\xi}}(\fType{\Gamma})$ if and only if there are $\fIFC{d},\fIFC{d'},\fType{\Gamma'}$ such that $\fType{\Gamma} = \fType{\Gamma' \proj_{\fIFC{\Omega}} \fIFC{\xi}}$, $\fType{\fIFC{\Omega} \vdash \fIFC{\fProc{P} \at d'} :: \Gamma'}$, and $\fType{\fIFC{\Omega} \vdash \fIFC{\fCtx{E[\fProc{P}]} \at d} :: \emptyset}$.
    By abuse of notation, we write $(\fCtx{E_1},\fProc{P_1} ; \fCtx{E_2},\fProc{P_2}) \in \netw{\fIFC{\Omega};\fIFC{\xi}}(\fType{\Gamma})$ to denote $(\fCtx{E_1},\fProc{P_1}) \in \netw{\fIFC{\Omega};\fIFC{\xi}}(\fType{\Gamma})$ and $(\fCtx{E_2},\fProc{P_2}) \in \netw{\fIFC{\Omega};\fIFC{\xi}}(\fType{\Gamma})$.
\end{definition}

\begin{example}
    We anticipate illustrating noninterference on the secure running example introduced in \Cref{x:syntax} on a $\fIFC{L}$ow secrecy channel, by considering the projection of its typing context and an evaluation context to form a network.
    Recall the typing and IFC annotations from \Cref{x:typing,x:ifc}:
    \[
        \fType{\vdash \fIFC{\fProc{Gov_A^{\fIFC{H}}} \at L} :: \Gamma' = \fProc{a_{\fIFC{L}}} : \& \{ oc_1 : \bot , oc_2 : \bot \}\fIFC{[L]} , \fProc{a_I} : \oplus \{ act : 1 , wait : 1 \}\fIFC{[H]}}.
    \]
    We have $\fType{\Gamma} := \fType{\Gamma' \proj \fIFC{L}} = \fType{\fProc{a_{\fIFC{L}}} : \& \{ oc_1 : \bot , oc_2 : \bot \}\fIFC{[L]}}$.
    Let $\fCtx{E} := \fCtx{\nu{a_{\fIFC{H}} a_{\fIFC{L}}} \nu{a_I i_A} ( Gov_A^{\fIFC{L}} \| \hole \| Int_A )}$.
    It is straightforward to confirm that
    $
        \fType{\vdash \fIFC{\fCtx{E[\fProc{Gov_A^{\fIFC{H}}}]} \at L} :: \emptyset}
    $.
    Hence, $(\fCtx{E},\fProc{Gov_A^{\fIFC{H}}}) \in \netw{\fIFC{L}}(\fType{\Gamma})$.
\end{example}

In our relation, we want to exhaust reductions on unobservable channels, after which we scrutinize behavior on observable names in the interface.
To this end, we define \emph{unobservable reductions}, which entail reductions internal to the process or the context, but also communications between process and context on unobservable interface channels.

\begin{definition}[Unobservable Reduction]
    \label{d:uRedd}
    We define \emph{unobservable reduction} as
    \[
        \fCtx{E},\fProc{P} \uRedd{\fIFC{\Omega};\fIFC{\xi};\fType{\Gamma}} \fCtx{E'},\fProc{P'}
    \]
    if and only if $(\fCtx{E},\fProc{P}) \in \netw{\fIFC{\Omega};\fIFC{\xi}}(\fType{\Gamma})$, $\fCtx{E[\fProc{P}]} \redd \fCtx{E'[\fProc{P'}]}$ and $(\fCtx{E'},\fProc{P'}) \in \netw{\fIFC{\Omega};\fIFC{\xi}}(\fType{\Gamma})$.
    We write $\uReddQ{\fIFC{\Omega};\fIFC{\xi};\fType{\Gamma}}$ (resp.\ $\uRedd*{\fIFC{\Omega};\fIFC{\xi};\fType{\Gamma}}$) for the reflexive (resp.\ reflexive transitive) closure of $\uRedd{\fIFC{\Omega};\fIFC{\xi};\fType{\Gamma}}$, and $\fCtx{E},\fProc{P} \nuRedd{\fIFC{\Omega};\fIFC{\xi};\fType{\Gamma}}$ to denote that there are no $\fCtx{E'},\fProc{P'}$ such that $\fCtx{E},\fProc{P} \uRedd{\fIFC{\Omega};\fIFC{\xi};\fType{\Gamma}} \fCtx{E'},\fProc{P'}$.
\end{definition}

Our relation often requires ``zooming in'' on specific parts of processes.
To this end, we define notions to deal with atomic parts of processes.

\begin{definition}[Nodes and Normal Forms]
    \label{d:nodesNf}
    Given a process $\fProc{P}$, we say $\fProc{P}$ is a \emph{node} if $\fProc{P} \not\sc \fProc{Q \| R}$, $\fProc{P} \not\sc \fProc{\nu{xy} Q}$, and $\fProc{P} \not\sc \fProc{0}$.

    We say a process $\fProc{Q} = \fProc{\nu{x_i y_i}_{i \in I} \prod_{j \in J} P_j}$ is in \emph{normal form} if, for every $\fProc{j} \in \fProc{J}$, $\fProc{P_j}$ is a node,
    and $\fProc{Q}$ is a \emph{normal form of $\fProc{P}$} if $\fProc{P} \sc \fProc{Q}$.
    Normal forms are closed under structural congruence: every process induces an equivalence class of structurally congruent normal forms.

    Given a process in normal form $\fProc{Q} = \fProc{\nu{x_i y_i}_{i \in I} \prod_{j \in J} P_j}$, we define $\nodes(\fProc{Q}) := \{ \fProc{P_j} \mid \fProc{j} \in \fProc{J} \}$ and $\binders(\fProc{Q}) := \big\{ \{\fProc{x_i},\fProc{y_i}\} \mid \fProc{i} \in \fProc{I} \big\}$.
    By abuse of notation, given a process $\fProc{P}$ not necessarily in normal form, we write $\nodes(\fProc{P})$ to denote $\nodes(\fProc{Q})$ for an arbitrary normal form $\fProc{Q}$ of $\fProc{P}$.
\end{definition}

\noindent
For example, let
\begin{align*}
    \hspace{-1ex}
    \fProc{P} &:= \fProc{\nu{uw} \big( \nu{xy} ( \pWait z() ; \pClose x[] \| \pWait y() ; \pClose u[] ) \| \pWait w() ; 0 \big) \| 0}
    &
    \fProc{Q} &:= \fProc{\nu{uw} \nu{xy} ( \pWait z() ; \pClose x[] \| \pWait y() ; \pClose u[] \| \pWait w() ; 0 )}.
\end{align*}
Then $\fProc{Q}$ is a normal form of $\fProc{P}$, with $\nodes(\fProc{Q}) = \{\fProc{\pWait z() ; \pClose x[]} , \fProc{\pWait y() ; \pClose u[]} , \fProc{\pWait w() ; 0}\}$ and $\binders(\fProc{Q}) = \{ \{\fProc{u},\fProc{w}\} , \{\fProc{x},\fProc{y}\} \}$.

\subsection{The Relation}
\label{s:logRel:theRel}

Having presented all its ingredients, we now introduce our logical relation.
As usual, the relation consists of two parts: a \emph{term interpretation} and a \emph{value interpretation}, defined by mutual multiset induction on the interfaces of processes.
The term interpretation is the main part of the relation, and is responsible for calling on the value interpretation when a message is ready to be communicated across the observable interface, as well as ensuring deadlock sensitivity of our noninterference result.
The value interpretation zooms in on the interface, and ensures that the two runs of the process behave identically on observable messages that are to be communicated across the interface.

\begin{figure}[t]
  {\small  \begin{align}
        &
        (\fCtx{E_1[\fProc{P_1}]} ; \fCtx{E_2[\fProc{P_2}]}) \in \termrel{\fIFC{\Omega}}{\fIFC{\xi}}{\fType{\Gamma}}
        \iff
        \label{eq:logRel:term1}
        \\ & \wedge
        (\fCtx{E_1},\fProc{P_1} ; \fCtx{E_2},\fProc{P_2}) \in \netw{\fIFC{\Omega};\fIFC{\xi}}(\fType{\Gamma})
        \label{eq:logRel:term2}
        \\ & \wedge
        \forall \fCtx{E'_1},\fProc{P'_1}.~
        \fCtx{E_1},\fProc{P_1} \uRedd*{\fIFC{\Omega};\fIFC{\xi};\fType{\Gamma}} \fCtx{E'_1},\fProc{P'_1} \nuRedd{\fIFC{\Omega};\fIFC{\xi};\fType{\Gamma}}
        \label{eq:logRel:term3}
        \\ & \hphantom{{}\wedge{}} \implies
        \exists \fCtx{E'_2},\fProc{P'_2}.~
        \begin{array}[t]{@{}l@{}}
            \fCtx{E_2},\fProc{P_2} \uRedd*{\fIFC{\Omega};\fIFC{\xi};\fType{\Gamma}} \fCtx{E'_2},\fProc{P'_2} \nuRedd{\fIFC{\Omega};\fIFC{\xi};\fType{\Gamma}}
            \\ \wedge
            \forall \fProc{x} \in \big( \ain(\fCtx{E'_1},\fProc{P'_1}) \cup \ain(\fCtx{E'_2},\fProc{P'_2}) \big) \cap \dom(\fType{\Gamma}) .~ (\fCtx{E'_1},\fProc{P'_1};\fCtx{E'_2},\fProc{P'_2}) \in \mathrlap{
                \valrel{\fIFC{\Omega}}{\fIFC{\xi}}{\fType{\Gamma}}{\fProc{x}}
            }
            \\ \wedge
            \aon(\fProc{P'_1}) \cap \dom(\fType{\Gamma}) = \aon(\fProc{P'_2}) \cap \dom(\fType{\Gamma})
        \end{array}
        \label{eq:logRel:term4}
    \end{align}}

    \caption{Term relation.}\label{f:termRel}
\end{figure}

Let us start by presenting our term interpretation, denoted $\termrel{\fIFC{\Omega}}{\fIFC{\xi}}{\fType{\Gamma}}$, in \Cref{f:termRel}.
It relates pairs of processes, given a secrecy lattice $\fIFC{\Omega}$, a secrecy level $\fIFC{\xi} \in \dom(\fIFC{\Omega})$, and an interface~$\fType{\Gamma}$.
We break down its definition part by part.
Part~\eqref{eq:logRel:term1} implicitly requires each process to be separable into a context $\fCtx{E_i}$ and a process $\fProc{P_i}$.
Part~\eqref{eq:logRel:term2} then requires the interface between $\fCtx{E_i}$ and $\fProc{P_i}$ to correspond to $\fType{\Gamma}$ up to observability $\fIFC{\xi}$ (cf.\ \Cref{d:projNetw}).
Part~\eqref{eq:logRel:term3} exhausts unobservable reductions for $\fCtx{E_1},\fProc{P_1}$ (cf.\ \Cref{d:uRedd}) in every way possible, resulting in $\fCtx{E'_1},\fProc{P'_1}$.
Part~\eqref{eq:logRel:term4} first requires $\fCtx{E_2},\fProc{P_2}$ to ``catch up'' through exhaustive unobservable reductions, resulting in $\fCtx{E'_2},\fProc{P'_2}$.
Then, Part~\eqref{eq:logRel:term4} invokes the value interpretation, presented next, to scrutinize any messages that are ready to be transferred across the observable part of the interface of either $\fCtx{E'_i},\fProc{P'_i}$ (cf.\ \Cref{d:ain}).
Finally, Part~\eqref{eq:logRel:term4} ensures deadlock sensitivity by requiring the observable messages of $\fProc{P'_1}$ and $\fProc{P'_2}$ to coincide.
In particular, if $\fProc{P'_1}$ does not produce any observable messages along a name in the interface due to a deadlock imposed by a secret, Part~\eqref{eq:logRel:term4} guarantees that $\fProc{P'_2}$ does not produce any observable messages on that name either.

\begin{figure}[t]
   {\small \begin{align*}
        \fType{1}\quad
        &
        (\fCtx{E_1}, \fProc{P_1} ; \fCtx{E_2}, \fProc{P_2}) \in \valrel{\fIFC{\Omega}}{\fIFC{\xi}}{\fType{\Gamma, \fProc{x}:1\fIFC{[c]}}}{\fProc{x}}
        \iff \big(
            (\fCtx{E_1}, \fProc{P_1} ; \fCtx{E_2}, \fProc{P_2}) \in \netw{\fIFC{\Omega};\fIFC{\xi}}(\fType{\Gamma, \fProc{x}:1\fIFC{[c]}})
            \\ & \wedge
            \fProc{P_1} \sc \fProc{\pClose x[] \| P'_1}
            \wedge
            \fProc{P_2} \sc \fProc{\pClose x[] \| P'_2}
            \wedge
            (\fCtx{E_1 \big[ \pClose x[] \| \fProc{P'_1} \big]} ; \fCtx{E_2 \big[ \pClose x[] \| \fProc{P'_2} \big]}) \in \termrel{\fIFC{\Omega}}{\fIFC{\xi}}{\fType{\Gamma}}
        \big)
        \\[2pt] \hline \displaybreak[1] \\[-12pt]
        \fType{\oplus}\quad
        &
        (\fCtx{E_1},\fProc{P_1} ; \fCtx{E_2},\fProc{P_2}) \in \valrel{\fIFC{\Omega}}{\fIFC{\xi}}{\fType{\Gamma,\fProc{x}:\oplus \{ i : A_i \}_{i \in I}\fIFC{[c]}}}{\fProc{x}}
        \iff
        \numberthis\label{eq:logRel:val:oplus1}
        \\ & \hphantom{{}\wedge{}}
        (\fCtx{E_1}, \fProc{P_1} ; \fCtx{E_2}, \fProc{P_2}) \in \netw{\fIFC{\Omega};\fIFC{\xi}}(\fType{\Gamma, \fProc{x}:\oplus \{ i : A_i \}_{i \in I}\fIFC{[c]}})
        \numberthis\label{eq:logRel:val:oplus2}
        \\ & \wedge
        \exists \fProc{j} \in \fProc{I}.~ \fProc{\pSel x[b_1]<j} \in \nodes(\fProc{P_1}) \wedge \fProc{\pSel x[b_2]<j} \in \nodes(\fProc{P_2})
        \numberthis\label{eq:logRel:val:oplus3}
        \\ & \wedge
        \begin{array}[t]{@{}l@{}l@{}}
            \fProc{b_1} \in \dom(\fType{\Gamma}) \implies {}
            & \fProc{b_1} = \fProc{b_2}
            \wedge \fProc{P_1} \sc \fProc{\pSel x[b_1]<j \| P'_1}
            \wedge \fProc{P_2} \sc \fProc{\pSel x[b_2]<j \| P'_2}
            \\
            & {}\wedge{}
            (\fCtx{E_1 \big[ \pSel x[b_1]<j \| \fProc{P'_1} \big]} ; \fCtx{E_2 \big[ \pSel x[b_2]<j \| \fProc{P'_2} \big]}) \in \termrel{\fIFC{\Omega}}{\fIFC{\xi}}{\fType{\Gamma \setminus \fProc{b_1}}}
        \end{array}
        \numberthis\label{eq:logRel:val:oplus4}
        \\ & \wedge
        \fProc{b_1} \notin \dom(\fType{\Gamma}) \implies \big(
            \fProc{b_2} \notin \dom(\fType{\Gamma})
            \numberthis\label{eq:logRel:val:oplus5}
            \\ & \hphantom{{}\wedge{}} \begin{array}[t]{@{}l@{}}
                {}\wedge
                \fProc{P_1} \sc \fProc{\nu{b_1b'} ( \pSel x[b_1]<j \| P'_1 )}
                \wedge \fProc{P_2} \sc \fProc{\nu{b_2b'} ( \pSel x[b_2]<j \| P'_2 )}
                \\
                {}\wedge
                (\fCtx{E_1 \big[ \nu{b_1b'} ( \pSel x[b_1]<j \| \fProc{P'_1} ) \big]} ; \fCtx{E_2 \big[ \nu{b_2b'} ( \pSel x[b_2]<j \| \fProc{P'_2} ) \big]}) \in \termrel{\fIFC{\Omega}}{\fIFC{\xi}}{\fType{\Gamma, \fProc{b'}:A_j\fIFC{[c]}}}
            \big)
        \end{array}
        \\[2pt] \hline \displaybreak[1] \\[-12pt]
        \fType{\tensor}\quad
        &
        (\fCtx{E_1}, \fProc{P_1} ; \fCtx{E_2}, \fProc{P_2}) \in \valrel{\fIFC{\Omega}}{\fIFC{\xi}}{\fType{\Gamma, \fProc{x}:A \tensor B\fIFC{[c]}}}{\fProc{x}}
        \iff
        \\ & \hphantom{{}\wedge{}}
        (\fCtx{E_1}, \fProc{P_1} ; \fCtx{E_2}, \fProc{P_2}) \in \netw{\fIFC{\Omega};\fIFC{\xi}}(\fType{\Gamma, \fProc{x}:A \tensor B\fIFC{[c]}})
        \\ & \wedge
        \fProc{\pSend x[a_1,b_1]} \in \nodes(\fProc{P_1}) \wedge \fProc{\pSend x[a_2,b_2]} \in \nodes(\fProc{P_2})
        \\ & \wedge
        \begin{array}[t]{@{}l@{}l@{}}
            \fProc{a_1},\fProc{b_1} \in \dom(\fType{\Gamma}) \implies {}
            & \fProc{a_1} = \fProc{b_1} \wedge \fProc{a_2} = \fProc{b_2}
            \wedge \fProc{P_1} \sc \fProc{\pSend x[a_1,b_1] \| P'_1}
            \wedge \fProc{P_2} \sc \fProc{\pSend x[a_2,b_2] \| P'_2}
            \\ & {}\wedge{}
            (\fCtx{E_1 \big[ \pSend x[a,b] \| \fProc{P'_1} \big]} ; \fCtx{E_2 \big[ \pSend x[a,b] \| \fProc{P'_2} \big]}) \in \termrel{\fIFC{\Omega}}{\fIFC{\xi}}{\fType{\Gamma \setminus \fProc{a},\fProc{b}}}
        \end{array}
        \\ & \wedge
        (\fProc{a_1} \in \dom(\fType{\Gamma}) \wedge \fProc{b_1} \notin \dom(\fType{\Gamma})) \implies \big(
            \fProc{a_1} = \fProc{a_2} \wedge \fProc{b_2} \notin \dom(\fType{\Gamma})
            \\ & \hphantom{{}\wedge{}} \begin{array}[t]{@{}l@{}}
                {}\wedge
                \fProc{P_1} \sc \fProc{\nu{b_1b'} ( \pSend x[a_1,b_1] \| P'_1 )}
                \wedge
                \fProc{P_2} \sc \fProc{\nu{b_2b'} ( \pSend x[a_2,b_2] \| P'_2 )}
                \\
                {}\wedge
                (\fCtx{E_1 \big[ \nu{b_1b'} ( \pSend x[a_1,b_1] \| \fProc{P'_1} ) \big]} ; \fCtx{E_2 \big[ \nu{b_2b'} ( \pSend x[a_2,b_2] \| \fProc{P'_2} ) \big]}) \in \termrel{\fIFC{\Omega}}{\fIFC{\xi}}{\fType{\Gamma, \fProc{b'}:B\fIFC{[c]} \setminus \fProc{a}}}
            \big)
        \end{array}
        \\ & \wedge
        (\fProc{a_1} \notin \dom(\fType{\Gamma}) \wedge \fProc{b_1} \in \dom(\fType{\Gamma})) \implies \big(
            \fProc{a_2} \notin \dom(\fType{\Gamma}) \wedge \fProc{b_1} = \fProc{b_2}
            \\ & \hphantom{{}\wedge{}}
            \begin{array}[t]{@{}l@{}}
                {}\wedge
                \fProc{P_1} \sc \fProc{\nu{a_1a'} ( \pSend x[a_1,b_1] \| P'_1 )}
                \wedge
                \fProc{P_2} \sc \fProc{\nu{a_2a'} ( \pSend x[a_2,b_2] \| P'_2 )}
                \\
                {}\wedge
                (\fCtx{E_1 \big[ \nu{a_1a'} ( \pSend x[a_1,b_1] \| \fProc{P'_1} ) \big]} ; \fCtx{E_2 \big[ \nu{a_2a'} ( \pSend x[a_2,b_2] \| \fProc{P'_2} ) \big]}) \in \termrel{\fIFC{\Omega}}{\fIFC{\xi}}{\fType{\Gamma, \fProc{a'}:C\fIFC{[c]} \setminus \fProc{b}}}
            \big)
        \end{array}
        \\ & \wedge
        \fProc{a_1},\fProc{b_1} \notin \dom(\fType{\Gamma}) \implies \big(
            \fProc{a_2},\fProc{b_2} \notin \dom(\fType{\Gamma})
            \\ & \hphantom{{}\wedge{}}
            \begin{array}[t]{@{}l@{}}
                {}\wedge
                \fProc{P_1} \sc \fProc{\nu{a_1a'} \nu{b_1b'} ( \pSend x[a_1,b_1] \| P'_1 )}
                \wedge
                \fProc{P_2} \sc \fProc{\nu{a_2a'} \nu{b_2b'} ( \pSend x[a_2,b_2] \| P'_2 )}
                \\
                {}\wedge
                ( \begin{array}[t]{@{}l@{}}
                    \fCtx{E_1 \big[ \nu{a_1a'}\nu{b_1b'} ( \pSend x[a_1,b_1] \| \fProc{P'_1} ) \big]} ;
                    \\
                    \fCtx{E_2 \big[ \nu{a_2a'}\nu{b_2b'} ( \pSend x[a_2,b_2] \| \fProc{P'_2} ) \big]}
                ) \in \termrel{\fIFC{\Omega}}{\fIFC{\xi}}{\fType{\Gamma, \fProc{a'}:A\fIFC{[c]}, \fProc{b'}:B\fIFC{[c]}}}
                \big)
            \end{array}
        \end{array}
    \end{align*}
}
    \caption{
        Value interpretation, output cases ($\fType{1},\fType{\oplus},\fType{\tensor}$).
    }\label{f:valRel:output}
\end{figure}

\begin{figure}[t]
   {\small \begin{align*}
        \fType{\bot}\quad
        &
        (\fCtx{E_1}, \fProc{P_1} ; \fCtx{E_2}, \fProc{P_2}) \in \valrel{\fIFC{\Omega}}{\fIFC{\xi}}{\fType{\Gamma, \fProc{x}:\bot\fIFC{[c]}}}{\fProc{x}}
        \iff \big(
            (\fCtx{E_1}, \fProc{P_1} ; \fCtx{E_2}, \fProc{P_2}) \in \netw{\fIFC{\Omega};\fIFC{\xi}}(\fType{\Gamma, \fProc{x}:\bot\fIFC{[c]}})
            \\ & \wedge \big(
                \fCtx{E_1} \sc \fCtx{\nu{yx} ( \pClose y[] \| E'_1 )}
                \wedge
                \fCtx{E_2} \sc \fCtx{\nu{yx} ( \pClose y[] \| E'_2 )}
            \big)
            \\ & \hphantom{\wedge} \implies
            (\fCtx{E'_1 \big[ \fProc{\nu{yx} ( \pClose y[] \| P_1 )} \big]} ; \fCtx{E'_2 \big[ \fProc{\nu{yx} ( \pClose y[] \| P_2 )} \big]}) \in \termrel{\fIFC{\Omega}}{\fIFC{\xi}}{\fType{\Gamma}}
        \big)
        \\[2pt] \hline \displaybreak[1] \\[-12pt]
        \fType{\with}\quad
        &
        (\fCtx{E_1}, \fProc{P_1} ; \fCtx{E_2}, \fProc{P_2}) \in \valrel{\fIFC{\Omega}}{\fIFC{\xi}}{\fType{\Gamma, \fProc{x}:\with \{ i : A_i \}_{i \in I}\fIFC{[c]}}}{\fProc{x}}
        \iff
        \numberthis\label{eq:logRel:val:with1}
        \\ & \hphantom{{}\wedge{}}
        (\fCtx{E_1}, \fProc{P_1} ; \fCtx{E_2}, \fProc{P_2}) \in \netw{\fIFC{\Omega};\fIFC{\xi}}(\fType{\Gamma, \fProc{x}:\with \{ i : A_i \}_{i \in I}\fIFC{[c]}})
        \numberthis\label{eq:logRel:val:with2}
        \\ & \wedge \big( \exists \fProc{j} \in \fProc{I}.~
            \fCtx{E_1} \sc \fCtx{\nu{bb'} \nu{yx} ( \pSel y[b]<j \| E'_1 )}
            \wedge
            \fCtx{E_2} \sc \fCtx{\nu{bb'} \nu{yx} ( \pSel y[b]<j \| E'_2 )}
        \big)
        \numberthis\label{eq:logRel:val:with3}
        \\ & \hphantom{\wedge} \implies
        ( \begin{array}[t]{@{}l@{}}
            \fCtx{\nu{bb'} E'_1 \big[ \fProc{\nu{yx} ( \pSel y[b]<j \| P_1 )} \big]} ;
            \\
            \fCtx{\nu{bb'} E'_2 \big[ \fProc{\nu{yx} ( \pSel y[b]<j \| P_2 )} \big]}) \in \termrel{\fIFC{\Omega}}{\fIFC{\xi}}{\fType{\Gamma , \fProc{b}:A_j\fIFC{[c]}}}
        \end{array}
        \numberthis\label{eq:logRel:val:with4}
        \\[2pt] \hline \displaybreak[1] \\[-12pt]
        \fType{\parr}\quad
        &
        (\fCtx{E_1}, \fProc{P_1} ; \fCtx{E_2}, \fProc{P_2}) \in \valrel{\fIFC{\Omega}}{\fIFC{\xi}}{\fType{\Gamma, \fProc{x}:A \parr B\fIFC{[c]}}}{\fProc{x}}
        \iff
        \\ & \hphantom{{}\wedge{}}
        (\fCtx{E_1}, \fProc{P_1} ; \fCtx{E_2}, \fProc{P_2}) \in \netw{\fIFC{\Omega};\fIFC{\xi}}(\fType{\Gamma, \fProc{x}:A \parr B\fIFC{[c]}})
        \\ & \wedge \big(
            \fCtx{E_1} \sc \fCtx{\nu{aa'} \nu{bb'} \nu{yx} ( \pSend y[a,b] \| E'_1 )}
            \wedge
            \fCtx{E_2} \sc \fCtx{\nu{aa'} \nu{bb'} \nu{yx} ( \pSend y[a,b] \| E'_2 )}
        \big)
        \\ & \hphantom{{}\wedge{}} \implies
        ( \begin{array}[t]{@{}l@{}}
            \fCtx{\nu{aa'} \nu{bb'} E'_1 \big[ \fProc{\nu{yx} ( \pSend y[a,b] \| P_1 )} \big]} ;
            \\
            \fCtx{\nu{aa'} \nu{bb'} E'_2 \big[ \fProc{\nu{yx} ( \pSend y[a,b] \| P_2 )} \big]}) \in \termrel{\fIFC{\Omega}}{\fIFC{\xi}}{\fType{\Gamma , \fProc{a}:A\fIFC{[c]} , \fProc{b}:B\fIFC{[c]}}}
        \end{array}
    \end{align*}}

    \caption{Value interpretation, input cases ($\fType{\bot},\fType{\with},\fType{\parr}$).}\label{f:valRel:input}
\end{figure}

We present our value interpretation, denoted $\valrel{\fIFC{\Omega}}{\fIFC{\xi}}{\fType{\Gamma}}{\fProc{x}}$, in \Cref{f:valRel:output,f:valRel:input}.
It relates pairs of context-process tuples $(\fCtx{E_1},\fProc{P_1};\fCtx{E_2},\fProc{P_2})$, given a secrecy lattice $\fIFC{\Omega}$, a secrecy level $\fIFC{\xi} \in \dom(\fIFC{\Omega})$, an (observable) interface $\fType{\Gamma}$, and a name $\fProc{x} \in \dom(\fType{\Gamma})$.
The relation is defined by cases on the type $\fType{A}$ assigned to $\fProc{x}$ in $\fType{\Gamma}$.
If $\fType{A}$ is output-like ($\fType{1},\fType{\oplus},\fType{\tensor}$; \Cref{f:valRel:output}), the relation looks for a corresponding output on $\fProc{x}$ in the processes to (observably) move across the interface into the contexts\footnote{
    In the rest of the paper, we often write $\fType{\Gamma \setminus \fProc{x}}$ to denote $\fType{\Gamma'}$ given $\fType{\Gamma} = \fType{\Gamma', \fProc{x}:A\fIFC{[c]}}$.
}; if $\fType{A}$ is input-like ($\fType{\bot},\fType{\with},\fType{\parr}$; \Cref{f:valRel:input}), the relation looks for a corresponding output on a name $\fProc{y}$ connected by restriction to $\fProc{x}$ in the contexts to (observably) move across the interface into the processes.
We detail the representative cases where $\fType{A} \in \{\fType{\oplus},\fType{\with}\}$.

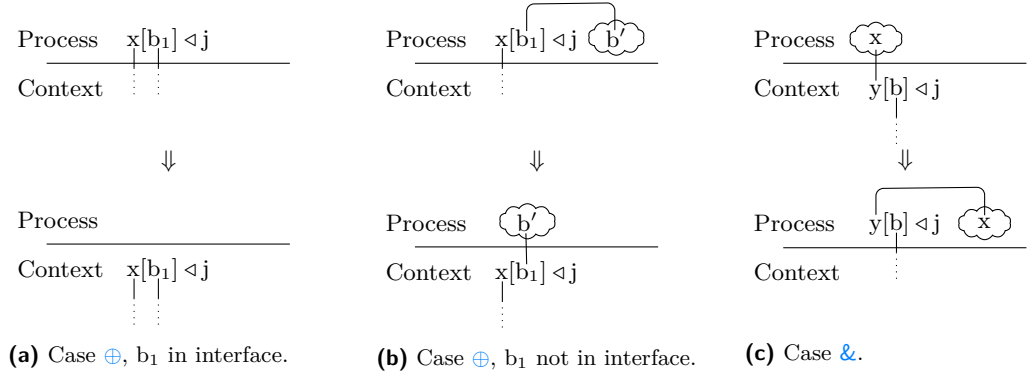
\begin{figure}[t]
   {\small
    \newlength{\xName} \newlength{\bName} \newlength{\bpName} \newlength{\yName}
    \begin{subfigure}[t]{.3\textwidth}
        \begin{tikzpicture}[font=\small,baseline={(0,0)}]
            \draw (5mm, 0) -- (\textwidth-5mm, 0);
            \node[anchor=south west] at (0, 1mm) {Process};
            \node[anchor=north west] at (0, -1mm) {Context};
            \node (sel) [anchor=south] at (.5\textwidth, .4mm) {$\fProc{\pSel x[b_1]<j}$};
            \setlength{\xName}{-4.5mm}
            \draw ([shift={(\xName,1.6mm)}]sel.south) -- ([shift={(\xName,-1mm)}]sel.south);
            \draw[dotted] ([shift={(\xName,-1mm)}]sel.south) -- ([shift={(\xName,-5mm)}]sel.south);
            \setlength{\bName}{-1.3mm}
            \draw ([shift={(\bName,1.6mm)}]sel.south) -- ([shift={(\bName,-1mm)}]sel.south);
            \draw[dotted] ([shift={(\bName,-1mm)}]sel.south) -- ([shift={(\bName,-5mm)}]sel.south);
        \end{tikzpicture}
        \vspace{3mm}
        \begin{center} $\Downarrow$ \end{center}
        \vspace{1mm}
        \begin{tikzpicture}[font=\small,baseline={(0,0)}]
            \draw (5mm, 0) -- (\textwidth-5mm, 0);
            \node[anchor=south west] at (0, 1mm) {Process};
            \node[anchor=north west] at (0, -1mm) {Context};
            \node (sel) [anchor=north] at (.5\textwidth, -.7mm) {$\fProc{\pSel x[b_1]<j}$};
            \setlength{\xName}{-4.5mm}
            \draw ([shift={(\xName,1.6mm)}]sel.south) -- ([shift={(\xName,-1mm)}]sel.south);
            \draw[dotted] ([shift={(\xName,-1mm)}]sel.south) -- ([shift={(\xName,-5mm)}]sel.south);
            \setlength{\bName}{-1.3mm}
            \draw ([shift={(\bName,1.6mm)}]sel.south) -- ([shift={(\bName,-1mm)}]sel.south);
            \draw[dotted] ([shift={(\bName,-1mm)}]sel.south) -- ([shift={(\bName,-5mm)}]sel.south);
        \end{tikzpicture}
        \caption{Case~$\fType{\oplus}$, $\fProc{b_1}$ in interface.}\label{f:valRel:oplusIn}
    \end{subfigure}\hfill
    \begin{subfigure}[t]{.3\textwidth}
        \begin{tikzpicture}[font=\small,baseline={(0,0)}]
            \draw (5mm, 0) -- (\textwidth-5mm, 0);
            \node[anchor=south west] at (0, 1mm) {Process};
            \node[anchor=north west] at (0, -1mm) {Context};
            \node (sel) [anchor=south] at (.5\textwidth, .4mm) {$\fProc{\pSel x[b_1]<j}$};
            \node[cloud, draw, cloud puffs=10, cloud puff arc=120, aspect=2, inner sep=0mm] (bp) [anchor=south] at (.75\textwidth, 1mm) {$\fProc{b'}$};
            \setlength{\xName}{-4.5mm}
            \draw ([shift={(\xName,1.6mm)}]sel.south) -- ([shift={(\xName,-1mm)}]sel.south);
            \draw[dotted] ([shift={(\xName,-1mm)}]sel.south) -- ([shift={(\xName,-5mm)}]sel.south);
            \setlength{\bName}{-1.3mm}
            \setlength{\bpName}{-.2mm}
            \draw ([shift={(\bName,-1.6mm)}]sel.north) -- ([shift={(\bName,1mm)}]sel.north) to[out=90,in=180] ([shift={(\bName+1mm,2mm)}]sel.north) -- ([shift={(\bpName-1mm,2.2mm)}]bp.north) to[out=0,in=90] ([shift={(\bpName,1.2mm)}]bp.north) -- ([shift={(\bpName,-1.4mm)}]bp.north);
        \end{tikzpicture}
        \vspace{3mm}
        \begin{center} $\Downarrow$ \end{center}
        \vspace{1mm}
        \begin{tikzpicture}[font=\small,baseline={(0,0)}]
            \draw (5mm, 0) -- (\textwidth-5mm, 0);
            \node[anchor=south west] at (0, 1mm) {Process};
            \node[anchor=north west] at (0, -1mm) {Context};
            \node (sel) [anchor=north] at (.5\textwidth, -.7mm) {$\fProc{\pSel x[b_1]<j}$};
            \setlength{\bName}{-1.3mm}
            \node[cloud, draw, cloud puffs=10, cloud puff arc=120, aspect=2, inner sep=0mm] (bp) [anchor=south] at (.5\textwidth+\bName+.1mm, 1mm) {$\fProc{b'}$};
            \setlength{\xName}{-4.5mm}
            \draw ([shift={(\xName,1.6mm)}]sel.south) -- ([shift={(\xName,-1mm)}]sel.south);
            \draw[dotted] ([shift={(\xName,-1mm)}]sel.south) -- ([shift={(\xName,-5mm)}]sel.south);
            \setlength{\bpName}{-.2mm}
            \draw ([shift={(\bName,-1.6mm)}]sel.north) -- ([shift={(\bpName,.8mm)}]bp.south);
        \end{tikzpicture}
        \caption{Case~$\fType{\oplus}$, $\fProc{b_1}$ not in interface.}\label{f:valRel:oplusNotIn}
    \end{subfigure}\hfill
    \begin{subfigure}[t]{.3\textwidth}
        \begin{tikzpicture}[font=\small,baseline={(0,0)}]
            \draw (5mm, 0) -- (\textwidth-5mm, 0);
            \node[anchor=south west] at (0, 1mm) {Process};
            \node[anchor=north west] at (0, -1mm) {Context};
            \node (sel) [anchor=north] at (.5\textwidth, -.7mm) {$\fProc{\pSel y[b]<j}$};
            \setlength{\yName}{-3.8mm}
            \setlength{\xName}{0mm}
            \node[cloud, draw, cloud puffs=10, cloud puff arc=120, aspect=2, inner sep=.5mm] (x) [anchor=south] at (.5\textwidth+\yName+.0mm, 1mm) {$\fProc{x}$};
            \draw ([shift={(\yName,-1.6mm)}]sel.north) -- ([shift={(\xName,.8mm)}]x.south);
            \setlength{\bName}{-1.1mm}
            \draw ([shift={(\bName,1.6mm)}]sel.south) -- ([shift={(\bName,-1mm)}]sel.south);
            \draw[dotted] ([shift={(\bName,-1mm)}]sel.south) -- ([shift={(\bName,-5mm)}]sel.south);
        \end{tikzpicture}
        \vspace{-3mm}
        \begin{center} $\Downarrow$ \end{center}
        \vspace{-1mm}
        \begin{tikzpicture}[font=\small,baseline={(0,0)}]
            \draw (5mm, 0) -- (\textwidth-5mm, 0);
            \node[anchor=south west] at (0, 1mm) {Process};
            \node[anchor=north west] at (0, -1mm) {Context};
            \node (sel) [anchor=south] at (.5\textwidth, .4mm) {$\fProc{\pSel y[b]<j}$};
            \node[cloud, draw, cloud puffs=10, cloud puff arc=120, aspect=2, inner sep=.5mm] (x) [anchor=south] at (.75\textwidth, 1mm) {$\fProc{x}$};
            \setlength{\yName}{-3.8mm}
            \setlength{\xName}{0mm}
            \setlength{\bName}{-1.1mm}
            \draw ([shift={(\bName,1.6mm)}]sel.south) -- ([shift={(\bName,-1mm)}]sel.south);
            \draw[dotted] ([shift={(\bName,-1mm)}]sel.south) -- ([shift={(\bName,-5mm)}]sel.south);
            \draw ([shift={(\yName,-1.5mm)}]sel.north) -- ([shift={(\yName,1mm)}]sel.north) to[out=90,in=180] ([shift={(\yName+1mm,2mm)}]sel.north) -- ([shift={(\xName-1mm,2.5mm)}]x.north) to[out=0,in=90] ([shift={(\xName,1.5mm)}]x.north) -- ([shift={(\xName,-1.2mm)}]x.north);
        \end{tikzpicture}
        \vspace{5mm}
        \caption{Case~$\fType{\with}$.}\label{f:valRel:with}
    \end{subfigure}
    }
    \caption{
        Illustrations of the value interpretation on selections: the selection is moved to/from the process, influencing name connections through the interface.
        Names in clouds represent parts of the process where the name is used.
    }\label{f:valRel:diags}
\end{figure}

When $\fType{A} = \fType{\oplus \{i:A_i\}_{i \in I}}$~\eqref{eq:logRel:val:oplus1}, we first check well typedness as usual~\eqref{eq:logRel:val:oplus2} (cf.\ \Cref{d:projNetw}).
We then assert that both $\fProc{P_1}$ and $\fProc{P_2}$ have ready a selection on $\fProc{x}$, both on the same label $\fProc{j} \in \fProc{I}$~\eqref{eq:logRel:val:oplus3}.
We find that the selections carry continuations $\fProc{b_1}$ and $\fProc{b_2}$, respectively.
Since we intend to move the selections across the interface into the contexts, we need to inspect where these $\fProc{b_i}$ are bound: in the context or in the process.
\begin{itemize}

    \item
        If $\fProc{b_1}$ appears in the interface~\eqref{eq:logRel:val:oplus4}, it is bound in $\fCtx{E_1}$.
        We then assert that $\fProc{b_1}$ and $\fProc{b_2}$ actually represent the same name, and thus that $\fProc{b_2}$ is bound in $\fCtx{E_2}$.
        Next, we use structural congruence (\Cref{d:procSCRedd}) to separate the selections from the rest of the processes.
        The case ends with a call on the term interpretation, where the selections have been moved into the contexts.
        Note that here we remove $\fProc{b_1}$ ($= \fProc{b_2}$) from the interface, as the processes have relinquished control over this name to their respective contexts: we no longer need to monitor behavior on $\fProc{b_1}$.
        \Cref{f:valRel:oplusIn} illustrates this case.

    \item
        If $\fProc{b_1}$ does not appear in the interface~\eqref{eq:logRel:val:oplus5}, it is bound in $\fProc{P_1}$.
        We first assert that $\fProc{b_2}$ also does not appear in the interface, and thus is bound in $\fProc{P_2}$.
        We then use structural congruence to identify the names to which each $\fProc{b_i}$ is bound---since they are both bound, we conveniently apply alpha conversion and use $\fProc{b'}$ in both cases---, and to separate the selections from the rest of the processes.
        Finally, we call on the term interpretation, where the selections along with the binders $\fProc{\nu{b_ib'}}$ are moved into the contexts.
        Here, we add $\fProc{b'}$ to the interface, as it must be used in the remainder of the processes, and thus must be monitored.
        \Cref{f:valRel:oplusNotIn} shows this case.

\end{itemize}

When $\fType{A} = \fType{\with \{ i : A_i \}_{i \in I}}$~\eqref{eq:logRel:val:with1}, the processes are expecting a selection from the contexts.
We again start with the usual well typedness check~\eqref{eq:logRel:val:with2} (cf.\ \Cref{d:projNetw}).
The purpose of our relation is to compare runs of the same process in different contexts, and so we cannot make assertions about the readiness of the contexts to make the required selection, or that these are selections of the same label.
We therefore proceed only under the condition that indeed the contexts are both ready to select the same label~\eqref{eq:logRel:val:with3}.
This condition uses structural congruence to identify the names in the contexts to which $\fProc{x}$ is connected, conveniently referred to as $\fProc{y}$ in both contexts.
It also identifies the continuations~$\fProc{b}$ of the selections and the names $\fProc{b'}$ to which they are connected, as well as the remainder of the contexts.
It then calls on the term evaluation~\eqref{eq:logRel:val:with4}, where the selections along with the restrictions binding $\fProc{x}$ to $\fProc{y}$ are moved into the processes.
As such, $\fProc{x}$ is no longer in the interface, but now the continuations of the selections are: we add $\fProc{b}$ to the interface.
\Cref{f:valRel:with} illustrates this case.

Finally, we use our logical relation to define \emph{equivalence up to observable messages}.
We say two processes are equivalent up to secrecy level $\fIFC{\xi}$ if they agree on their observable interface and they are related by the logical relation when placed inside any two arbitrary evaluation contexts (cf.\ \Cref{d:evalCtx}).
This ensures that, regardless of the context in which the processes run, they will behave the same with respect to the observable interface.

\begin{definition}[Equivalence up to Observable Messages]
    \label{d:DSNIRel}
    The relation
    \[
        ( \fType{\fIFC{\Omega} \vdash \fIFC{\fProc{P_1} \at d_1} :: \Gamma_1} ) \equiv_{\fIFC{\xi}} ( \fType{\fIFC{\Omega} \vdash \fIFC{\fProc{P_2} \at d_2} :: \Gamma_2} )
    \]
    holds if and only if
    $\fType{\Gamma_1 \proj \fIFC{\xi}} = \fType{\Gamma_2 \proj \fIFC{\xi}} = \fType{\Gamma}$,
    and
    for every $\fCtx{E_1},\fCtx{E_2}$ such that $\fType{\fIFC{\Omega} \vdash \fIFC{\fCtx{E_1[\fProc{P_1}]} \at d_1} :: \emptyset}$ and $\fType{\fIFC{\Omega} \vdash \fIFC{\fCtx{E_2[\fProc{P_2}]} \at d_2} :: \emptyset}$, $(\fCtx{E_1[\fProc{P_1}]};\fCtx{E_2[\fProc{P_2}]}) \in \termrel{\fIFC{\Omega}}{\fIFC{\xi}}{\fType{\Gamma}}$ and $(\fCtx{E_2[\fProc{P_2}]};\fCtx{E_1[\fProc{P_1}]}) \in \termrel{\fIFC{\Omega}}{\fIFC{\xi}}{\fType{\Gamma}}$.
\end{definition}

\begin{example}
    \label{x:equiv}
    Consider again the secure variant of $\fProc{Gov_A^{\fIFC{H}}}$ from \Cref{x:syntax}.
    Anticipating noninterference, it is straightforward to check that the continuations of the initial branch are equivalent up to observable messages:
    \begin{align*}
        & ( \fType{\vdash \fIFC{\fProc{\nu{a_I^{1'} a_I^1} ( \pSel a_I[a_I^{1'}]<act \| \pWait a_{\fIFC{L}}^1() ; \pClose a_I^1[] )} \at L} :: \fProc{a_I} : \oplus \{ act : 1 , wait : 1 \}\fIFC{[H]} , \fProc{a_{\fIFC{L}}^1} : \bot\fIFC{[L]}} )
        \\
        \equiv_{\fIFC{L}}
        & ( \fType{\vdash \fIFC{\fProc{\nu{a_I^{1'} a_I^1} ( \pSel a_I[a_I^{1'}]<wait \| \pWait a_{\fIFC{L}}^1() ; \pClose a_I^1[] )} \at L} :: \fProc{a_I} : \oplus \{ act : 1 , wait : 1 \}\fIFC{[H]} , \fProc{a_{\fIFC{L}}^1} : \bot\fIFC{[L]}} )
    \end{align*}
    The crucial part is that the different selections on $\fProc{a_I}$ are unobservable.

    On the other hand, consider also the insecure variant of $\fProc{Gov_A^{\fIFC{H}}}$ from \Cref{x:syntax}.
    Even though their typing contexts are equal (and, hence, so are the projections onto $\fIFC{L}$), the continuations of the initial branch are \emph{not} equivalent up to observable messages:
    \begin{align*}
        & ( \fType{\vdash \fIFC{\fProc{\nu{a_{\fIFC{L}}^{1'} a_{\fIFC{L}}^1} ( \pSel a_{\fIFC{L}}[a_{\fIFC{L}}^{1'}]<inf_1 \| \pWait a_I^1() ; \pClose a_{\fIFC{L}}^1[] )}} :: \fProc{a_{\fIFC{L}}} : \oplus \{ inf_1 : 1 , inf_2 : 1 \}\fIFC{[L]} , \fProc{a_I^1} : \bot\fIFC{[H]}} )
        \\
        \not\equiv_{\fIFC{L}}
        & ( \fType{\vdash \fIFC{\fProc{\nu{a_{\fIFC{L}}^{1'} a_{\fIFC{L}}^1} ( \pSel a_{\fIFC{L}}[a_{\fIFC{L}}^{1'}]<inf_2 \| \pWait a_I^1() ; \pClose a_{\fIFC{L}}^1[] )}} :: \fProc{a_{\fIFC{L}}} : \oplus \{ inf_1 : 1 , inf_2 : 1 \}\fIFC{[L]} , \fProc{a_I^1} : \bot\fIFC{[H]}} )
    \end{align*}
    Here, the different selections on $\fProc{a_{\fIFC{L}}}$ \emph{are} observable.
\end{example}

\section{Deadlock-Sensitive Noninterference (DSNI)}
\label{s:DSNI}

Our main result is that the observable behavior (up to a given secrecy level $\fIFC{\xi}$) of any well-typed process is the same when placed in different contexts.
We formalize this using our logical relation (\Cref{d:DSNIRel}):

\begin{restatable}[DSNI]{theorem}{tDSNI}
    \label{t:DSNI}
    For all secrecy lattices $\fIFC{\Omega}$, secrecy levels $\fIFC{\xi} \in \dom(\fIFC{\Omega})$ and processes $\fType{\fIFC{\Omega} \vdash \fIFC{\fProc{P} \at d} :: \Gamma}$, we have
    \(
        ( \fType{\fIFC{\Omega} \vdash \fIFC{\fProc{P} \at d} :: \Gamma} ) \equiv_{\fIFC{\xi}} ( \fType{\fIFC{\Omega} \vdash \fIFC{\fProc{P} \at d} :: \Gamma} ).
    \)
\end{restatable}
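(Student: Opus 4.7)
The plan is to derive \Cref{t:DSNI} directly as a corollary of the fundamental theorem (\Cref{t:fundamental}). Unfolding \Cref{d:DSNIRel}, the goal $(\fType{\fIFC{\Omega} \vdash \fIFC{\fProc{P} \at d} :: \Gamma}) \equiv_{\fIFC{\xi}} (\fType{\fIFC{\Omega} \vdash \fIFC{\fProc{P} \at d} :: \Gamma})$ reduces to exhibiting, for every pair of evaluation contexts $\fCtx{E_1},\fCtx{E_2}$ under which $\fCtx{E_1[\fProc{P}]}$ and $\fCtx{E_2[\fProc{P}]}$ are well-typed at the empty context, that both $(\fCtx{E_1[\fProc{P}]};\fCtx{E_2[\fProc{P}]})$ and $(\fCtx{E_2[\fProc{P}]};\fCtx{E_1[\fProc{P}]})$ lie in $\termrel{\fIFC{\Omega}}{\fIFC{\xi}}{\fType{\Gamma \proj \fIFC{\xi}}}$. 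The fundamental theorem is expected to supply exactly this, by relating any two executions of the same well-typed process under arbitrary contexts, so two applications (one per direction) discharge the goal. The projection equality $\fType{\Gamma \proj \fIFC{\xi}} = \fType{\Gamma \proj \fIFC{\xi}}$ required by \Cref{d:DSNIRel} is immediate.

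The real work thus lies in \Cref{t:fundamental}, which I would prove by well-founded induction on the multiset of types in the observable interface $\fType{\Gamma \proj \fIFC{\xi}}$, mirroring the mutual multiset induction structuring the term and value interpretations (\Cref{f:termRel,f:valRel:output,f:valRel:input}). Given configurations $(\fCtx{E_1},\fProc{P})$ and $(\fCtx{E_2},\fProc{P})$, I would first drive $(\fCtx{E_1},\fProc{P})$ through an arbitrary maximal chain of unobservable reductions to a $\uRedd{\fIFC{\Omega};\fIFC{\xi};\fType{\Gamma}}$-stuck configuration $(\fCtx{E'_1},\fProc{P'_1})$, then invoke the catch-up lemma (\Cref{l:catchUp}) to synthesise a matching chain out of $(\fCtx{E_2},\fProc{P})$ ending in some $\uRedd{\fIFC{\Omega};\fIFC{\xi};\fType{\Gamma}}$-stuck $(\fCtx{E'_2},\fProc{P'_2})$, keeping typing contexts and running secrecies aligned by \Cref{t:subjRed,t:subjCong}. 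For each name in the active interface I would dispatch on its type into the corresponding clause of the value interpretation, use structural congruence to split off the responsible node from $\fProc{P'_i}$, and appeal to the induction hypothesis on the (strictly smaller or lexicographically reduced) residual interface. The deadlock-sensitivity conjunct of Part~\eqref{eq:logRel:term4} follows because the two executions originate from the same $\fProc{P}$: their sets of relevant nodes (\Cref{d:relNode}) stay in correspondence under the catch-up construction, so the active observable output names of $\fProc{P'_1}$ and $\fProc{P'_2}$ must coincide on $\dom(\fType{\Gamma \proj \fIFC{\xi}})$.

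The hard part will be the catch-up argument under cyclicity. When a message on an observable channel carries a continuation, that continuation may be bound inside the sending process or, because of cycles, inside the surrounding context (cf.\ \Cref{f:valRel:diags}), and each sub-case updates the interface differently and feeds a distinct clause of the value interpretation. Keeping the two chains of unobservable reductions structurally synchronised in the presence of alpha renaming over context-bound names---so that the structural-congruence witnesses delivered by \Cref{l:catchUp} compose coherently with those extracting observable actions in the value interpretation---will require a careful invariant relating the shape of $\fCtx{E'_i[\fProc{P'_i}]}$ to that of the original network. The $\fType{\oplus}$ and $\fType{\tensor}$ cases, which branch on whether continuation names land in the interface or stay private, are where this bookkeeping will be most delicate, and where the departure from tree-shaped settings such as~\cite{DerakhshanLICS2021} demands a genuinely new argument.
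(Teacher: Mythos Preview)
Your approach is correct and mirrors the paper's: \Cref{t:DSNI} is an immediate instance of \Cref{t:fundamental} with $\fProc{P_1} = \fProc{P_2} = \fProc{P}$, and your sketch of the fundamental theorem (induction on the interface weight, the catch-up lemma, case analysis on the type at each active interface name, relevant-node correspondence for the deadlock clause) tracks the paper's proof closely. The only step you leave implicit is verifying the hypothesis $\fProc{P} \obseq_{\fIFC{\xi}} \fProc{P}$ of \Cref{t:fundamental}, which holds trivially since any normal form of $\fProc{P}$ has the same relevant form as itself; also note that a single application of \Cref{t:fundamental} suffices, as $\equiv_{\fIFC{\xi}}$ in \Cref{d:DSNIRel} already packages both directions of the term relation.
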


\begin{example}
    Following up on \Cref{x:equiv}, we can conclude that DSNI holds for the secure variant of $\fProc{Gov_A^{\fIFC{H}}}$, but not for the insecure variant.
\end{example}

To prove this main result, we prove a more general result (the \emph{fundamental theorem}; \Cref{t:fundamental}) that relates two processes through \Cref{d:DSNIRel} given that they are \emph{observably equivalent}.
We first define precisely what we mean with observable equivalence before presenting and proving our fundamental theorem in \Cref{s:DSNI:fundamental}.

\subsection{Observable Equivalence}
\label{s:DSNI:obsEq}

Towards defining observable equivalence, we want to identify the nodes (cf.\ \Cref{d:nodesNf}) of processes that can contribute to messages on observable names in the interface, referred to as \emph{relevant nodes}.
Nodes with running secrecy $\fIFC{\not\lleq \xi}$ obviously cannot influence observable interface names.
However, nodes with running secrecy $\fIFC{\lleq \xi}$ are not necessarily capable of influencing observable interface names either.
In particular, two types of nodes with running secrecy~$\fIFC{\lleq \xi}$ cannot influence the observable interface:
\begin{itemize}

    \item
        Nodes that input on unobservable names increase their running secrecy after the input, such that they no longer influence observable interface names.

    \item
        Nodes that output on unobservable names can only influence nodes that input on unobservable names (and thus cannot influence observable interface names indirectly via the receiving node).

\end{itemize}

\noindent
The following notion of \emph{quasi-running secrecy} anticipates these scenarios by assigning a secrecy level to a process based on the influence of its foremost prefix corresponding to the subsequent input/output.
It is defined as the join of the current running secrecy of the process and the secrecy level of the name on which the next input/output occurs.
If either of the two levels is unobservable, the quasi-running secrecy will be unobservable.
In such cases, we know that the foremost prefix of the process cannot influence the observable interface.

\begin{definition}[Quasi-running Secrecy]
    \label{d:quasi}
    Given a node typed $\fType{\fIFC{\Omega} \vdash \fIFC{\fProc{P} \at d} :: \Gamma}$, we define the \emph{quasi-running secrecy} of $\fProc{P}$, denoted $\fIFC{\quasi(\fType{\fIFC{\Omega} \vdash \fIFC{\fProc{P} \at d} :: \Gamma})}$ as follows:
    \[
        \fIFC{\quasi(\fType{\fIFC{\Omega} \vdash \fIFC{\fProc{P} \at d} :: \Gamma})} := \begin{cases}
            \fIFC{d \lcup c} & \text{if $\fProc{P} = \fProc{\pClose x[]}$ and $\fType{\fProc{x}:1\fIFC{[c]}} \in \fType{\Gamma}$}
            \\
            \fIFC{d \lcup c} & \text{if $\fProc{P} = \fProc{\pWait x() ; P'}$ and $\fType{\fProc{x}:\bot\fIFC{[c]}} \in \fType{\Gamma}$}
            \\
            \fIFC{d \lcup c} & \text{if $\fProc{P} = \fProc{\pSel x[b]<j}$ and $\fType{\fProc{x}:\oplus \{ i : A_i \}_{i \in I}\fIFC{[c]}} \in \fType{\Gamma}$}
            \\
            \fIFC{d \lcup c} & \text{if $\fProc{P} = \fProc{\pBra x(z)>\{ i : P_i \}_{i \in I}}$ and $\fType{\fProc{x}:\with \{ i : A_i \}_{i \in I}\fIFC{[c]}} \in \fType{\Gamma}$}
            \\
            \fIFC{d \lcup c} & \text{if $\fProc{P} = \fProc{\pSend x[a,b]}$ and $\fType{\fProc{x}:A \tensor B\fIFC{[c]}} \in \fType{\Gamma}$}
            \\
            \fIFC{d \lcup c} & \text{if $\fProc{P} = \fProc{\pRecv x(y,z) ; P'}$ and $\fType{\fProc{x}:A \parr B\fIFC{[c]}} \in \fType{\Gamma}$}
        \end{cases}
    \]
\end{definition}

To compute which nodes of a process are relevant, we start with nodes that have connections to the interface (through free names).
We then look at nodes that are connected to these relevant nodes through restrictions.
However, not all connections imply a possible influence on the observable interface.
Consider a node $\fProc{\pSend x[a,b]}$ that is connected to a relevant node on~$\fProc{a}$: the node does not define behavior on $\fProc{a}$ but merely outputs the name, and so~$\fProc{a}$ cannot influence the observable interface through this name.
For example, in $\fProc{\nu{xy} \nu{au} ( \pSend x[a,b] \| \pRecv y(z,w) ; \pWait z() ; \ldots \| \pClose u[] )}$, the name $\fProc{a}$ is not used for communication until it has been received on $\fProc{y}$; hence, the close on $\fProc{u}$ is not considered relevant even if the send on $\fProc{x}$ were relevant.
We make this precise by defining \emph{free communication names}: free names that are used as the subjects of unblocked prefixes.

\begin{definition}[Free Communication Names]
    \label{d:fcn}
    We define the \emph{free communication names} of~$\fProc{P}$, denoted $\fcn(\fProc{P})$, as follows:
    \begin{align*}
        \fcn(\fProc{0}) &:= \emptyset
        \\
        \fcn(\fProc{P \| Q}) &:= \fcn(\fProc{P}) \cup \fcn(\fProc{Q})
        &
        \fcn(\fProc{\nu{xy} P}) &:= \fcn(\fProc{P}) \setminus \{\fProc{x},\fProc{y}\}
        \\
        \fcn(\fProc{\pClose x[]}) &:= \{\fProc{x}\}
        &
        \fcn(\fProc{\pWait x() ; P}) &:= \{\fProc{x}\} \cup \fcn(\fProc{P})
        \\
        \fcn(\fProc{\pSend x[a,b]}) &:= \{\fProc{x}\}
        &
        \fcn(\fProc{\pRecv x(y,z) ; P}) &:= \{\fProc{x}\} \cup \fcn(\fProc{P}) \setminus \{\fProc{y},\fProc{z}\}
        \\
        \fcn(\fProc{\pSel x[b]<j}) &:= \{\fProc{x}\}
        &
        \fcn(\fProc{\pBra x(z)>\{ i: P_i \}_{i \in I}}) &:= \{\fProc{x}\} \cup \bigcup_{\fProc{i} \in \fProc{I}} \fcn(\fProc{P_i}) \setminus \{\fProc{z}\}
    \end{align*}
\end{definition}

We now have all the ingredients to determine the relevant nodes of a process.
We define the set of relevant nodes of a process inductively by following chains of nodes connected through restriction (of which there are finitely many).
We start with nodes connected to the interface directly, and add them if their quasi-running secrecy is $\fIFC{\lleq \xi}$.
We then keep adding nodes that are connected to already relevant nodes on observable channels (names with secrecy level $\fIFC{\lleq \xi}$) with quasi-running secrecy $\fIFC{\lleq \xi}$.

\begin{definition}[Relevant Nodes and Binders, and Relevant Form]
    \label{d:relNode}
    Suppose given a process in normal form $\fProc{P}$ typed $\fType{\fIFC{\Omega} \vdash \fIFC{\fProc{P} \at d} :: \Gamma}$.
    Suppose every node $\fProc{Q} \in \nodes(\fProc{P})$ is typed $\fType{\fIFC{\Omega} \vdash \fIFC{\fProc{Q} \at d_{\fProc{Q}}} :: \Gamma_{\fProc{Q}}}$.
    Given a secrecy level $\fIFC{\xi} \in \dom(\fIFC{\Omega})$, we define the set of \emph{relevant nodes} of $\fProc{P}$, denoted $\N(\fProc{P})$, by induction on the size of $\binders(\fProc{P})$ as follows ($\N(\fProc{P}) := \N_{|\binders(\fProc{P})|}(\fProc{P})$):
    \begin{align*}
        \N_0(\fProc{P}) &:= \{ \fProc{Q} \in \nodes(\fProc{P}) \mid \exists \fProc{z} \in \fcn(\fProc{Q}) .~ \fProc{z} \in \dom(\fType{\Gamma \proj \fIFC{\xi}}) \wedge \fIFC{\quasi(\fType{\fIFC{\Omega} \vdash \fIFC{\fProc{Q} \at d_{\fProc{Q}}} :: \Gamma_{\fProc{Q}}}) \lleq \xi} \}
        \\
        \N_{n+1}(\fProc{P}) &:= \N_n(\fProc{P}) \cup \left\{ \fProc{Q} \in \nodes(\fProc{P}) ~~\middle|~~
            \begin{array}{@{}l@{}}
                \exists \fProc{z} \in \fcn(\fProc{Q}) .~ \exists \fProc{Q'} \in \N_n(\fProc{P}) .~ \exists \fType{\fProc{w}:A_{\fProc{w}}\fIFC{[c]}} \in \fType{\Gamma_{\fProc{Q'}}} .
                \\
                \quad ( \fIFC{\Omega \Vdash c \lleq \xi} \wedge \{\fProc{z},\fProc{w}\} \in \binders(\fProc{P}) )
                \\
                {} \wedge \fIFC{\quasi(\fType{\fIFC{\Omega} \vdash \fIFC{\fProc{Q} \at d_{\fProc{Q}}} :: \Gamma_{\fProc{Q}}}) \lleq \xi}
            \end{array}
        \right\}
        \\
        & \forall 0 \leq n < |\binders(\fProc{P})|
    \end{align*}
    We also define the set of \emph{relevant binders} of $\fProc{P}$, denoted $\B(\fProc{P})$, as the subset of $\binders(\fProc{P})$ used in the inductive step of the definition of $\N(\fProc{P})$.
    We then define the \emph{relevant form} of a process in normal form $\fProc{P}$, denoted $\fProc{P \proj \fIFC{\xi}}$, as $\fProc{\nu{xy}_{\{x,y\} \in \B(P)} \prod_{Q \in \N(P)} Q}$.
\end{definition}

Processes are then observably equivalent if their relevant nodes and relevant binders are indistinguishable (up to structural congruence).

\begin{definition}[Observable Equivalence]
    \label{d:obsEq}
    We say that two processes $\fProc{P}$ and $\fProc{P'}$ are \emph{observably equivalent}, denoted $\fProc{P} \obseq_{\fIFC{\xi}} \fProc{P'}$, if and only if there are normal forms $\fProc{Q},\fProc{Q'}$ of $\fProc{P},\fProc{P'}$ respectively such that $\fProc{Q \proj \fIFC{\xi}} \sc \fProc{Q' \proj \fIFC{\xi}}$.
\end{definition}

\subsection{The Fundamental Theorem}
\label{s:DSNI:fundamental}

We now state and prove our fundamental theorem, from which DSNI (\Cref{t:DSNI}) follows.

\begin{restatable}[Fundamental Theorem]{theorem}{tFundamental}
    \label{t:fundamental}
    For all secrecy lattices $\fIFC{\Omega}$, secrecy levels $\fIFC{\xi} \in \dom(\fIFC{\Omega})$ and processes $\fType{\fIFC{\Omega} \vdash \fIFC{\fProc{P_1} \at d_1} :: \Gamma_1}$ and $\fType{\fIFC{\Omega} \vdash \fIFC{\fProc{P_2} \at d_2} :: \Gamma_2}$ with $\fProc{P_1} \obseq_{\fIFC{\xi}} \fProc{P_2}$ and $\fType{\Gamma_1 \proj \fIFC{\xi}} = \fType{\Gamma_2 \proj \fIFC{\xi}}$, we have $( \fType{\fIFC{\Omega} \vdash \fIFC{\fProc{P_1} \at d_1} :: \Gamma_1} ) \equiv_{\fIFC{\xi}} ( \fType{\fIFC{\Omega} \vdash \fIFC{\fProc{P_2} \at d_2} :: \Gamma_2} )$.
\end{restatable}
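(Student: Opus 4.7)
The plan is to prove the fundamental theorem by a mutual multiset induction on the projected interface $\fType{\Gamma} = \fType{\Gamma_1 \proj \fIFC{\xi}} = \fType{\Gamma_2 \proj \fIFC{\xi}}$, mirroring the structure of the term and value interpretations. Since $\obseq_{\fIFC{\xi}}$ is symmetric and both directions required by \Cref{d:DSNIRel} follow from the same argument with the roles of $\fProc{P_1}$ and $\fProc{P_2}$ exchanged, it suffices to show $(\fCtx{E_1[\fProc{P_1}]}; \fCtx{E_2[\fProc{P_2}]}) \in \termrel{\fIFC{\Omega}}{\fIFC{\xi}}{\fType{\Gamma}}$ for arbitrary $\fCtx{E_1},\fCtx{E_2}$ with $\fType{\fIFC{\Omega} \vdash \fIFC{\fCtx{E_i[\fProc{P_i}]} \at d_i} :: \emptyset}$.

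Fix such contexts. Part~\eqref{eq:logRel:term2} is immediate from the typing hypotheses together with \Cref{d:projNetw}. For part~\eqref{eq:logRel:term4}, take any maximal unobservable reduction $\fCtx{E_1}, \fProc{P_1} \uRedd*{\fIFC{\Omega};\fIFC{\xi};\fType{\Gamma}} \fCtx{E'_1}, \fProc{P'_1} \nuRedd{\fIFC{\Omega};\fIFC{\xi};\fType{\Gamma}}$. The first key step is to invoke the catch-up lemma (\Cref{l:catchUp}) to produce a matching reduction $\fCtx{E_2}, \fProc{P_2} \uRedd*{\fIFC{\Omega};\fIFC{\xi};\fType{\Gamma}} \fCtx{E'_2}, \fProc{P'_2} \nuRedd{\fIFC{\Omega};\fIFC{\xi};\fType{\Gamma}}$ satisfying $\fCtx{E'_1[\fProc{P'_1}]} \obseq_{\fIFC{\xi}} \fCtx{E'_2[\fProc{P'_2}]}$. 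The coincidence $\aon(\fProc{P'_1}) \cap \dom(\fType{\Gamma}) = \aon(\fProc{P'_2}) \cap \dom(\fType{\Gamma})$ then follows from observable equivalence, because any active output on an observable interface name is a relevant node in the sense of \Cref{d:relNode}, and relevant nodes on both sides coincide up to $\sc$.

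The remaining obligation is to verify the value interpretation at each $\fProc{x} \in (\ain(\fCtx{E'_1},\fProc{P'_1}) \cup \ain(\fCtx{E'_2},\fProc{P'_2})) \cap \dom(\fType{\Gamma})$, by case analysis on the type of $\fProc{x}$ in $\fType{\Gamma}$. For each output-like case ($\fType{1},\fType{\oplus},\fType{\tensor}$), the ready output in one process is matched in the other by observable equivalence, including agreement on the selected label and on whether each continuation name lies in the interface or is internally bound (as illustrated in \Cref{f:valRel:diags}); using \Cref{t:subjCong} we rewrite both sides into the $\sc$-form demanded by the value relation and then discharge the resulting term-relation obligation at the strictly smaller or appropriately modified interface via the induction hypothesis. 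Input-like cases ($\fType{\bot},\fType{\with},\fType{\parr}$) are conditional on the contexts being ready to output; since both contexts then deliver the same external action, moving the prefix into the process reduces us to the IH at the updated interface. Throughout, the $\sc$-rewrites used to expose prefixes preserve relevant nodes and relevant binders up to $\sc$, so observable equivalence is maintained when invoking the IH.

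The main obstacle is the catch-up lemma. It must simultaneously preserve observable equivalence and enforce the strong form of deadlock sensitivity required by part~\eqref{eq:logRel:term4}: when $\fCtx{E_1},\fProc{P_1}$ reaches an unobservable normal form, $\fCtx{E_2},\fProc{P_2}$ must reach one too with matching observable readiness. Cyclic connections complicate this, since unobservable subcomputations may deadlock or diverge in ways that depend on high-secrecy messages, and the proof must track how each unobservable step modifies the set of relevant nodes and binders of \Cref{d:relNode}. A further subtlety is that alpha renaming inside $\sc$ does not extend through binders in $\fCtx{E_i}$ whose names lie in the observable interface (cf.\ \Cref{d:evalCtx}), so the continuation-passing subcases of $\fType{\oplus}$ and $\fType{\tensor}$—where the observability of a continuation depends on whether its binder sits in the process or the context—require explicit bookkeeping to align bound names across the two runs.
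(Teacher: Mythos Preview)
Your overall strategy matches the paper's: induction on (a measure of) the observable interface $\fType{\Gamma}$, use \Cref{l:catchUp} to align the two runs after unobservable reductions while preserving observable equivalence, then discharge the value interpretation at each active interface name by case analysis on its type, invoking the IH at a strictly smaller interface. The paper makes the induction measure explicit via the weight function of \Cref{d:weight}, but your multiset ordering on $\fType{\Gamma}$ amounts to the same thing.

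There is, however, one slip that matters. You state that catch-up yields $\fCtx{E'_1[\fProc{P'_1}]} \obseq_{\fIFC{\xi}} \fCtx{E'_2[\fProc{P'_2}]}$. Taken literally this is useless: the compositions $\fCtx{E'_i[\fProc{P'_i}]}$ are typed under the empty context, so by \Cref{d:relNode} their sets of relevant nodes are empty and observable equivalence between them is vacuous. The invariant you actually need---and what \Cref{l:catchUp} delivers---is $\fProc{P'_1} \obseq_{\fIFC{\xi}} \fProc{P'_2}$ on the \emph{inner} processes, relative to the interface $\fType{\Gamma}$. This is precisely what lets you match the unguarded output on $\fProc{x}$ in $\fProc{P'_1}$ with one in $\fProc{P'_2}$ (same label, same bound-versus-free status of continuations) in the output-like cases, and what gives the coincidence $\aon(\fProc{P'_1}) \cap \dom(\fType{\Gamma}) = \aon(\fProc{P'_2}) \cap \dom(\fType{\Gamma})$ for deadlock sensitivity. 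The rest of your proposal in fact reasons as if you had this correct invariant, so this looks like a notational slip rather than a conceptual one, but it is the central invariant of the proof and should be stated correctly.

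A smaller point: \Cref{l:catchUp} as stated handles a single step and does not change $\fCtx{E_2}$, so reaching $\nuRedd{\fIFC{\Omega};\fIFC{\xi};\fType{\Gamma}}$ on side~2 requires iterating the lemma along side~1's sequence and then extending side~2 to maximality (termination holds since the calculus is non-recursive), applying catch-up with the roles swapped to see that the extra steps preserve $\obseq_{\fIFC{\xi}}$ because side~1 is already stuck. The paper glosses this as a ``straightforward induction''. Finally, your closing paragraph on obstacles is really about the \emph{proof} of \Cref{l:catchUp} itself; in the fundamental theorem that lemma is used as a black box.
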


\noindent
We first give several auxiliary results and definitions, before proving \Cref{t:fundamental} on \Cpageref{proof:t:fundamentalMain}:
\begin{itemize}

    \item
        \Cref{l:redCtx} splits a process that reduces into an evaluation context (\Cref{d:evalCtx}) containing the source of the reduction originating from one of the reduction axioms in \Cref{f:procSCRedd} (bottom).

    \item
        \Cref{l:uRedd} splits unobservable reduction (\Cref{d:uRedd}) into one of three cases: reduction internal in the context, reduction internal in the process, and communication between context and process on unobservable names.

    \item
        \Cref{l:catchUp} asserts that two observably equivalent (\Cref{d:obsEq}) processes can ``catch up'' on each other's unobservable reductions (\Cref{d:uRedd}).
        That is, if one process reduces unobservably, then the other process can do zero or one unobservable reductions such that the resulting processes are again observably equivalent.

    \item
        \Cref{d:weight} defines a \emph{weight} on types and typing context, which we use for induction in the proof of \Cref{t:fundamental} on \Cpageref{proof:t:fundamentalMain}.

\end{itemize}
\Cref{l:redCtx,l:uRedd} are proven in \appref[proof:l:redCtx,proof:l:uRedd].

\begin{restatable}{lemma}{lRedCtx}
    \label{l:redCtx}
    Suppose given a process typed $\fType{\fIFC{\Omega} \vdash \fIFC{\fProc{P} \at d} :: \Gamma}$.
    If $\fProc{P} \redd \fProc{P'}$, then there exists an $\fCtx{E}$ for which either of the following holds:
    \begin{enumerate}

        \item\label{i:redCtx:closeWait}
            $\fProc{P} \sc \fCtx{E[\fProc{\nu{xy} ( \pClose x[] \| \pWait y() ; Q )}]}$ and $\fProc{P'} \sc \fCtx{E[\fProc{Q}]}$;

        \item\label{i:redCtx:sendRecv}
            $\fProc{P} \sc \fCtx{E[\fProc{\nu{xy} ( \pSend x[a,b] \| \pRecv y(z,w) ; Q )}]}$ and $\fProc{P'} \sc \fCtx{E[\fProc{Q \pSubst{ a/z,b/w }}]}$;

        \item\label{i:redCtx:selBra}
            $\fProc{P} \sc \fCtx{E[\fProc{\nu{xy} ( \pSel x[b]<j \| \pBra y(w)>\{ i : Q_i \}_{i \in I} )}]}$ for $\fProc{j} \in \fProc{I}$ and $\fProc{P'} \sc \fCtx{E[\fProc{Q_j \pSubst{ b/w }}]}$.

    \end{enumerate}
\end{restatable}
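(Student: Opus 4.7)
The plan is to induct on the derivation of $\fProc{P} \redd \fProc{P'}$ using the six reduction rules of \Cref{f:procSCRedd} (bottom). The typing hypothesis will only be needed to invoke the induction hypothesis at each subderivation, by straightforward inversion on the typing. I would begin with the three axiomatic base cases---\ruleLabel{red-close-wait}, \ruleLabel{red-send-recv}, and \ruleLabel{red-sel-bra}---which discharge immediately by choosing $\fCtx{E} := \fCtx{\hole}$; reflexivity of $\sc$ then yields items~\ref{i:redCtx:closeWait}, \ref{i:redCtx:sendRecv}, and \ref{i:redCtx:selBra}, respectively.

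Next, for the congruence case \ruleLabel{red-par}, suppose $\fProc{P} = \fProc{P_1 \| Q}$ and $\fProc{P'} = \fProc{P_1' \| Q}$ with $\fProc{P_1} \redd \fProc{P_1'}$. Inverting Rule~\ruleLabel{typ-par} on the typing of $\fProc{P}$ provides a typing derivation for $\fProc{P_1}$, so the induction hypothesis yields an evaluation context $\fCtx{E_1}$ satisfying one of the three items for $\fProc{P_1}$ and $\fProc{P_1'}$. I would then set $\fCtx{E} := \fCtx{E_1 \| Q}$; the required congruences follow by closing $\sc$ under parallel composition (as guaranteed by \Cref{d:procSCRedd}, where $\sc$ is the least congruence closed under arbitrary process contexts). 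The case of \ruleLabel{red-res} is entirely analogous, extending the inductively obtained context by the outer restriction, after inverting Rule~\ruleLabel{typ-res}.

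The remaining case is \ruleLabel{red-sc}, where $\fProc{P} \sc \fProc{P''}$, $\fProc{P''} \redd \fProc{P'''}$, and $\fProc{P'''} \sc \fProc{P'}$. By subject congruence (\Cref{t:subjCong}), $\fProc{P''}$ is well typed, so the induction hypothesis yields an $\fCtx{E}$ satisfying one of the three items for $\fProc{P''}$ and $\fProc{P'''}$; transitivity of $\sc$ then transports the congruences to $\fProc{P}$ and $\fProc{P'}$, since the conclusion of the lemma is itself stated up to $\sc$. I expect no serious obstacle here: the only mild subtleties will be bookkeeping ones, ensuring that any alpha renaming or scope extrusion needed to push the hole through restrictions is absorbed by the structural congruence already present in the conclusion, which is exactly what \ruleLabel{sc-alpha} and \ruleLabel{sc-res-comm} provide.
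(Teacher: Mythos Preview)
Your proposal is correct and follows essentially the same approach as the paper: induction on the reduction derivation, with $\fCtx{E} := \fCtx{\hole}$ for the three axioms and the evident extension of the inductively obtained context for \ruleLabel{red-par}, \ruleLabel{red-res}, and \ruleLabel{red-sc}. One minor remark: the paper's proof never actually invokes the typing hypothesis (no inversion, no appeal to \Cref{t:subjCong}), so your careful threading of typing through the induction is sound but unnecessary---the conclusion is purely about reduction and structural congruence.
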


\begin{restatable}{lemma}{luRedd}
    \label{l:uRedd}
    Suppose $(\fCtx{E},\fProc{P}) \in \netw{\fIFC{\Omega};\fIFC{\xi}}(\fType{\Gamma})$ and $\fCtx{E},\fProc{P} \uRedd{\fIFC{\Omega};\fIFC{\xi};\fType{\Gamma}} \fCtx{E'},\fProc{P'}$.
    Then 
    %
            $\fCtx{E} \redd \fCtx{E'}$ and $\fProc{P} = \fProc{P'}$
            , or
        %
            $\fProc{P} \redd \fProc{P'}$ and $\fCtx{E} = \fCtx{E'}$
            , or
        %
            \Cref{l:redCtx} applies, on names not in $\fType{\Gamma}$.
    %
\end{restatable}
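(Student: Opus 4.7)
The plan is to apply \Cref{l:redCtx} to the whole process $\fCtx{E[\fProc{P}]}$ and then case-analyze on how the resulting redex is distributed between $\fCtx{E}$ and $\fProc{P}$. From the hypothesis, we have $\fCtx{E[\fProc{P}]} \redd \fCtx{E'[\fProc{P'}]}$ and both $(\fCtx{E},\fProc{P}), (\fCtx{E'},\fProc{P'}) \in \netw{\fIFC{\Omega};\fIFC{\xi}}(\fType{\Gamma})$. Applying \Cref{l:redCtx} yields an evaluation context $\fCtx{E_0}$ and one of the three redex forms; I would treat the send/receive case in detail, the close/wait and selection/branch cases being entirely analogous. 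Thus $\fCtx{E[\fProc{P}]} \sc \fCtx{E_0[\fProc{\nu{xy} ( \pSend x[a,b] \| \pRecv y(z,w) ; Q )}]}$ with reduct $\fCtx{E_0[\fProc{Q \pSubst{ a/z,b/w }}]}$.

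Next, using scope extrusion (Rule~\ruleLabel{sc-res-comm}) together with the associativity and commutativity of parallel composition, I would push the restriction $\fProc{\nu{xy}}$ and its two participants into a canonical position with respect to the $\fCtx{E}$-versus-$\fProc{P}$ split. Three subcases arise: either both prefixes lie inside $\fCtx{E}$, or both lie inside $\fProc{P}$, or one lies in each. In the first subcase, the redex is internal to $\fCtx{E}$, so $\fCtx{E} \redd \fCtx{E'}$ while $\fProc{P} = \fProc{P'}$, discharging the first conclusion. In the second, the redex is internal to $\fProc{P}$, giving $\fProc{P} \redd \fProc{P'}$ and $\fCtx{E} = \fCtx{E'}$, discharging the second. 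Preservation of the network condition in both subcases follows from subject reduction (\Cref{t:subjRed}) applied to the local side.

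The main obstacle is the third subcase, where the communication straddles the interface and I must show that its names lie outside $\dom(\fType{\Gamma})$. By construction of the split, at least one of the names---say $\fProc{x}$---is a free name of $\fProc{P}$ bound in $\fCtx{E}$ by the restriction $\fProc{\nu{xy}}$ pulled from $\fCtx{E}$, hence its type assignment appears in the typing context $\fType{\Gamma'}$ of $\fProc{P}$ with $\fType{\Gamma} = \fType{\Gamma' \proj \fIFC{\xi}}$. Linearity ensures that the communication consumes $\fProc{x}$: by inverting the typing of $\fProc{P}$ and invoking \Cref{t:subjRed}, the typing context $\fType{\Gamma'_2}$ of $\fProc{P'}$ does not contain $\fProc{x}$. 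Now suppose for contradiction that $\fProc{x} \in \dom(\fType{\Gamma})$; then the secrecy level of $\fProc{x}$ is $\fIFC{\lleq \xi}$, so the requirement $\fType{\Gamma} = \fType{\Gamma'_2 \proj \fIFC{\xi}}$ imposed by $(\fCtx{E'},\fProc{P'}) \in \netw{\fIFC{\Omega};\fIFC{\xi}}(\fType{\Gamma})$ would force $\fProc{x} \in \dom(\fType{\Gamma'_2})$, a contradiction. The symmetric situation where $\fProc{y}$ is free in $\fProc{P}$ instead is handled identically.

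The hardest part of the argument is the bookkeeping in this third subcase: orchestrating structural-congruence rewrites---including alpha conversions forced by the bound names introduced in \Cref{l:redCtx}---so that the participants produced by that lemma align cleanly with the $\fCtx{E}$/$\fProc{P}$ split, and carefully following the typing contexts along the reduction so the interface-preservation contradiction actually lands. The two orientations (sender in $\fProc{P}$ versus sender in $\fCtx{E}$, and analogously for the other two redex forms) require symmetric arguments but introduce no new ideas.
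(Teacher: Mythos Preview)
Your proposal is correct and follows essentially the same approach as the paper: apply \Cref{l:redCtx} to $\fCtx{E[\fProc{P}]}$, locate the redex relative to the $\fCtx{E}$/$\fProc{P}$ boundary, and in the straddling case derive a contradiction from the fact that the interface name is consumed yet must still appear in the projected typing context by the network condition. The only cosmetic differences are that the paper frames the argument as a global proof by contradiction (assuming none of the three conclusions hold) and illustrates with the close/wait redex rather than send/receive.
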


\begin{restatable}[Catch Up]{lemma}{lcatchUp}
    \label{l:catchUp}
    Suppose $(\fCtx{E_1},\fProc{P_1} ; \fCtx{E_2},\fProc{P_2}) \in \netw{\fIFC{\Omega};\fIFC{\xi}}(\fType{\Gamma})$ such that $\fProc{P_1} \obseq_{\fIFC{\xi}} \fProc{P_2}$.
    If $\fCtx{E_1},\fProc{P_1} \uRedd{\fIFC{\Omega};\fIFC{\xi};\fType{\Gamma}} \fCtx{E'_1},\fProc{P'_1}$, then there exists $\fProc{P'_2}$ such that $\fCtx{E_2},\fProc{P_2} \uReddQ{\fIFC{\Omega};\fIFC{\xi};\fType{\Gamma}} \fCtx{E_2},\fProc{P'_2}$ and $\fProc{P'_1} \obseq_{\fIFC{\xi}} \fProc{P'_2}$.
\end{restatable}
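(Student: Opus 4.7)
The plan is a case analysis on the kind of unobservable step $\fCtx{E_1},\fProc{P_1} \uRedd{\fIFC{\Omega};\fIFC{\xi};\fType{\Gamma}} \fCtx{E'_1},\fProc{P'_1}$, obtained via \Cref{l:uRedd}: (i) the step is internal to $\fCtx{E_1}$ (so $\fProc{P_1}=\fProc{P'_1}$); (ii) it is internal to $\fProc{P_1}$ (so $\fCtx{E_1}=\fCtx{E'_1}$); or (iii) it synchronizes a prefix of $\fCtx{E_1}$ with a prefix of $\fProc{P_1}$ along an interface name not in $\fType{\Gamma}$. In each case the goal is to pick $\fProc{P'_2}$ either as $\fProc{P_2}$ itself (exercising the reflexive option of the conclusion) or by mirroring a single internal reduction of $\fProc{P_2}$, and then verify $\fProc{P'_1} \obseq_{\fIFC{\xi}} \fProc{P'_2}$ by comparing relevant forms (\Cref{d:relNode}). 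Case (i) is immediate with $\fProc{P'_2}:=\fProc{P_2}$. For case (iii), \Cref{l:redCtx} isolates the prefix of $\fProc{P_1}$ that synchronized with $\fCtx{E_1}$; its subject has secrecy $\fIFC{c}$ with $\fIFC{c \not\lleq \xi}$, so its quasi-running secrecy (\Cref{d:quasi}) is $\fIFC{\not\lleq \xi}$ and it is not a relevant node. I would then check that the continuation inherits a running secrecy $\fIFC{\not\lleq \xi}$, propagating non-relevance along every $\fcn$-chain (\Cref{d:fcn}) exposed by the reduction; hence $\fProc{P'_1 \proj \fIFC{\xi}} \sc \fProc{P_1 \proj \fIFC{\xi}}$ and again $\fProc{P'_2}:=\fProc{P_2}$ suffices.

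Case (ii) is the heart of the proof. \Cref{l:redCtx} provides two dually-typed prefixes connected by a restriction $\fProc{\nu{xy}}$ of common secrecy $\fIFC{c}$. If $\fIFC{c \not\lleq \xi}$, both prefixes and their binder are non-relevant and the argument of case (iii) applies verbatim. If $\fIFC{c \lleq \xi}$, the output-side typing rules (\ruleLabel{typ-close}/\ruleLabel{typ-sel}/\ruleLabel{typ-send}) force that prefix to have running secrecy $\fIFC{\lleq c \lleq \xi}$, so its quasi-running secrecy is $\fIFC{\lleq \xi}$ and it is relevant in $\fProc{P_1}$ with $\{\fProc{x},\fProc{y}\}$ among the relevant binders. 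Using $\fProc{P_1 \proj \fIFC{\xi}} \sc \fProc{P_2 \proj \fIFC{\xi}}$, I transport this output-and-binder configuration (modulo alpha renaming via Rule~\ruleLabel{sc-alpha}) into $\fProc{P_2}$; by type preservation (\Cref{t:subjRed}) the dual endpoint on $\fProc{y}$ in $\fProc{P_2}$ must contain a matching input prefix, so I fire the corresponding reduction axiom from \Cref{f:procSCRedd} to produce $\fProc{P_2} \redd \fProc{P'_2}$. A final comparison of $\fProc{P'_1 \proj \fIFC{\xi}}$ and $\fProc{P'_2 \proj \fIFC{\xi}}$ closes the case.

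The main obstacle will be matching the reduction in $\fProc{P_2}$ when the input-side prefix of $\fProc{P_1}$ is itself not relevant: its running secrecy may exceed $\fIFC{\xi}$, and the dual endpoint in $\fProc{P_2}$ is then not directly constrained by observable equivalence. I would argue via type preservation and the discipline of session types that any behavior blocking the dual endpoint must itself be non-relevant, so that the matching step in $\fProc{P_2}$ remains a legitimate unobservable reduction and its effect on the relevant form mirrors that in $\fProc{P_1}$. A second delicate point is Rule~\ruleLabel{red-send-recv}, whose substitution can extend $\fcn$-chains into previously non-relevant regions; I must show that any newly-exposed nodes with observable quasi-running secrecy are identical on both sides. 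This bookkeeping parallels the binding-structure analysis underlying the value interpretation (\Cref{f:valRel:diags}), and as flagged in \Cref{s:ideas:challenges} it is the subtle interaction between alpha renaming and observable names in the interface that demands the most care.
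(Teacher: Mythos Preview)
Your overall decomposition via \Cref{l:uRedd} and \Cref{l:redCtx} matches the paper, and cases (i) and (iii) are fine. The gap is in case (ii), specifically in the step ``its quasi-running secrecy is $\fIFC{\lleq \xi}$ and it is relevant in $\fProc{P_1}$''. Low quasi-running secrecy is necessary but not sufficient for relevance: \Cref{d:relNode} also requires reachability from the observable interface through a chain of relevant binders. For the output prefix $\fProc{\pClose x[]}$ (and likewise $\fProc{\pSel x[b]<j}$, $\fProc{\pSend x[a,b]}$), the only $\fcn$ is $\fProc{x}$, which is bound; hence it can only enter $\N(\fProc{P_1})$ via the binder $\{\fProc{x},\fProc{y}\}$, which in turn requires the input-side node on $\fProc{y}$ to already be relevant. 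So the output is relevant \emph{iff} the input is relevant, and your ``main obstacle'' scenario (input non-relevant, output relevant) never arises.

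This means the correct case split is the paper's: on whether the output node is relevant, not on whether $\fIFC{c \lleq \xi}$. When the output (hence the input, hence the binder) is relevant, all three are transported to $\fProc{P_2}$ directly by $\fProc{P_1 \proj \fIFC{\xi}} \sc \fProc{P_2 \proj \fIFC{\xi}}$; you do not need type preservation here, and in fact type preservation alone would not guarantee that the input in $\fProc{P_2}$ is unguarded. When the output is not relevant (which includes the subcase $\fIFC{c \lleq \xi}$ but $\fIFC{d_{\fProc{y}} \not\lleq \xi}$), neither the close, the wait, nor the binder is transported to $\fProc{P_2}$, so there is no step to mirror. Your proposed fix (``the matching step in $\fProc{P_2}$ remains a legitimate unobservable reduction'') therefore fails. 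The right move in this subcase is the reflexive option: the continuation $\fProc{R}$ is typed at running secrecy $\fIFC{d_{\fProc{y}} \lcup c \not\lleq \xi}$, so by Rule~\ruleLabel{typ-par} none of its nodes can become relevant, hence $\fProc{P_1 \proj \fIFC{\xi}} = \fProc{P'_1 \proj \fIFC{\xi}}$ and $\fProc{P'_2} := \fProc{P_2}$ suffices.
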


\begin{proof}
    For a smoother proof, we consider a normal form $\fProc{Q_1}$ of $\fProc{P_1}$, and obtain from $\fProc{Q_1}$ a normal form $\fProc{Q_2}$ of $\fProc{P_2}$ such that $\fProc{Q_1 \proj \fIFC{\xi}} = \fProc{Q_2 \proj \fIFC{\xi}}$.
    By \Cref{d:nodesNf}, the thesis follows by proving the thesis for these $\fProc{Q_1},\fProc{Q_2}$.

    By \Cref{l:uRedd}, we can distinguish three cases from which $\fCtx{E_1},\fProc{Q_1} \uRedd{\fIFC{\Omega};\fIFC{\xi};\fType{\Gamma}} \fCtx{E'_1},\fProc{Q'_1}$ follows.
    \begin{itemize}

        \item
            \textbf{(Internal in context: $\fCtx{E_1} \redd \fCtx{E'_1}$ and $\fProc{Q_1} = \fProc{Q'_1}$)}
            The thesis holds directly with $\fProc{Q'_2} := \fProc{Q_2}$.

        \item
            \textbf{(Internal in process: $\fProc{Q_1} \redd \fProc{Q'_1}$ and $\fCtx{E_1} = \fCtx{E'_1}$)}
            By \Cref{l:redCtx}, $\fProc{Q_1}$'s reduction is due to one of three possible synchronizations inside some evaluation context.
            Note that \Cref{l:redCtx} may give us processes that are alpha variant to $\fProc{Q_1}$ and $\fProc{Q'_1}$; in the following we implicitly apply further alpha renaming to match the names in $\fProc{Q_1}$ and $\fProc{Q'_1}$.
            For space considerations, we sketch only the \textbf{(Close-Wait)} case; the other two cases are analogous.
            Full details are in \appref[proof:l:catchUp].

            We have $\fProc{Q_1} \sc \fCtx{F_1[\fProc{\nu{xy} ( \pClose x[] \| \pWait y() ; R )}]} \redd \fCtx{F_1[ \fProc{R} ]} \sc \fProc{Q'_1}$.
            The analysis depends on whether the close on $\fProc{x}$ is a relevant node of $\fProc{Q_1}$ or not.

            If not, we derive that the wait on $\fProc{y}$ is also not relevant.
            It follows by well typedness that the continuation $\fProc{R}$ will neither add relevant nodes nor influence relevancy of other nodes, so $\fProc{Q_1 \proj \fIFC{\xi}} = \fProc{Q'_1 \proj \fIFC{\xi}}$ and the thesis follows with $\fProc{Q'_2} := \fProc{Q_2}$.

            If the close is indeed a relevant node of $\fProc{Q_1}$, we derive that the wait on $\fProc{y}$ and the binder between $\fProc{x}$ and $\fProc{y}$ are also relevant.
            By assumption, they are then also relevant in $\fProc{Q_2}$, so we can derive a similar reduction to $\fProc{Q'_2}$.

            It remains to show that $\fProc{Q'_1 \proj \fIFC{\xi}} \sc \fProc{Q'_2 \proj \fIFC{\xi}}$, which boils down to showing that these processes have coinciding sets of relevant nodes and binders.
            Both directions of these set inclusions are analogous, so we focus on one: from $\fProc{Q'_1}$ to $\fProc{Q'_2}$.
            The analysis is by induction on the construction of the sets of relevant nodes and binders.
            In each case, we consider the appearance of the node: in $\fProc{R}$ or in $\fCtx{F_1}$.
            In both cases, a thorough analysis of how the node was included as a relevant node---through a path of relevant binders and nodes that were added before---reveals an analogous relevant node in $\fProc{Q'_2}$.

        \item
            \textbf{(Communication between context and process on names not in $\fType{\Gamma}$)}
            By definition, the secrecy levels of the involved names are incomparable to $\fIFC{\xi}$.
            Therefore, none of the nodes involved are relevant or influence relevancy of any other nodes, so $\fProc{Q_1 \proj \fIFC{\xi}} = \fProc{Q'_1 \proj \fIFC{\xi}}$ and the thesis holds with $\fProc{Q'_2} := \fProc{Q_2}$.
            \qedhere

    \end{itemize}
\end{proof}

\begin{definition}[Weight]
    \label{d:weight}
    The weight of a type $\fType{A}$, denoted $\w(\fType{A})$, is defined as follows:
    \begin{align*}
        \w(\fType{1}) &:= 1
        &
        \w(\fType{A \tensor B}) &:= \w(\fType{A}) + \w(\fType{B}) + 1
        &
        \w(\fType{\oplus \{ i:A_i \}_{i \in I}}) &:= \max_{i \in I}(\w(\fType{A_i})) + 1
        \\
        \w(\fType{\bot}) &:= 1
        &
        \w(\fType{A \parr B}) &:= \w(\fType{A}) + \w(\fType{B}) + 1
        &
        \w(\fType{\with \{ i:A_i \}_{i \in I}}) &:= \max_{i \in I}(\w(\fType{A_i})) + 1
    \end{align*}
    The weight of a typing context $\w(\fType{\Gamma})$ is the sum of the weights of its types.
\end{definition}

\tFundamental*

\begin{proof}
    \label{proof:t:fundamentalMain}
    Let $\fType{\Gamma} := \fType{\Gamma_1 \proj \fIFC{\xi}} = \fType{\Gamma_2 \proj \fIFC{\xi}}$.
    Take any $\fCtx{E_1},\fCtx{E_2}$ such that $\fType{\fIFC{\Omega} \vdash \fIFC{\fCtx{E_1[\fProc{P_1}]} \at d'_1} :: \emptyset}$ and $\fType{\fIFC{\Omega} \vdash \fIFC{\fCtx{E_2[\fProc{P_2}]} \at d'_2} :: \emptyset}$.
    We need to show that $(\fCtx{E_1[\fProc{P_1}]};\fCtx{E_2[\fProc{P_2}]}) \in \termrel{\fIFC{\Omega}}{\fIFC{\xi}}{\fType{\Gamma}}$, which we do by induction on $\w(\fType{\Gamma})$.

    The first condition is that $(\fCtx{E_1},\fProc{P_1} ; \fCtx{E_2},\fProc{P_2}) \in \netw{\fIFC{\Omega};\fIFC{\xi}}(\fType{\Gamma})$; this holds by assumption.

    Next, take any $\fCtx{E'_1},\fProc{P'_1}$ such that $\fCtx{E_1},\fProc{P_1} \uRedd*{\fIFC{\Omega};\fIFC{\xi};\fType{\Gamma}} \fCtx{E'_1},\fProc{P'_1} \nuRedd{\fIFC{\Omega};\fIFC{\xi};\fType{\Gamma}}$.
    A straightforward induction on the length of these unobservable reductions shows that, by \Cref{d:uRedd} and \Cref{l:catchUp}, there are $\fCtx{E'_2},\fProc{P'_2}$ such that $\fCtx{E_2},\fProc{P_2} \uRedd*{\fIFC{\Omega};\fIFC{\xi};\fType{\Gamma}} \fCtx{E'_2},\fProc{P'_2} \nuRedd{\fIFC{\Omega};\fIFC{\xi};\fType{\Gamma}}$, $(\fCtx{E'_1},\fProc{P'_1} ; \fCtx{E'_2},\fProc{P'_2}) \in \netw{\fIFC{\Omega};\fIFC{\xi}}(\fType{\Gamma})$, and $\fProc{P'_1} \obseq_{\fIFC{\xi}} \fProc{P'_2}$.

    Now, we need to show that, for every $\fProc{x} \in \big( \ain(\fCtx{E'_1},\fProc{P'_1}) \cup \ain(\fCtx{E'_2},\fProc{P'_2}) \big) \cap \dom(\fType{\Gamma})$,
    \[
        (\fCtx{E'_1},\fProc{P'_1} ; \fCtx{E'_2},\fProc{P'_2}) \in \valrel{\fIFC{\Omega}}{\fIFC{\xi}}{\fType{\Gamma}}{\fProc{x}}.
    \]
    Take any such $\fProc{x}$.
    Either $\fProc{x} \in \ain(\fCtx{E'_1},\fProc{P'_1})$ or $\fProc{x} \in \ain(\fCtx{E'_2},\fProc{P'_2})$; w.l.o.g., assume the former.
    The rest of the analysis depends on the type of $\fProc{x}$ in $\fType{\Gamma}$.

    First, we discuss the output-like cases ($\fType{1},\fType{\oplus},\fType{\tensor}$).
    In each case, by well typedness, $\fProc{x}$ is the subject of an output-like prefix in $\fProc{P'_1}$.
    Since $\fProc{x} \in \ain(\fCtx{E'_1},\fProc{P'_1})$, this prefix is unguarded.
    Since $\fProc{x} \in \dom(\fType{\Gamma}) = \dom(\fType{\Gamma_1 \proj \fIFC{\xi}})$, the node in which the prefix appears is relevant in $\fProc{P'_1}$.
    Therefore, since $\fProc{P'_1} \obseq_{\fIFC{\xi}} \fProc{P'_2}$, there is also a relevant node in $\fProc{P'_2}$ where this prefix appears unguarded.
    
    For space considerations, we only detail the case where $\fProc{x}$ has type $\fType{\oplus \{ i : A_i \}\fIFC{[c]}}$; the other cases are discussed in \appref[proof:t:fundamental].
    There exists $\fProc{j} \in \fProc{I}$ such that $\fProc{\pSel x[b_1]<j} \in \nodes(\fProc{P'_1})$ and $\fProc{\pSel x[b_2]<j} \in \nodes(\fProc{P'_1})$.
    The analysis depends on whether $\fProc{b_1} \in \dom(\fType{\Gamma})$ or not.
    \begin{itemize}

        \item
            \textbf{($\fProc{b_1} \in \dom(\fType{\Gamma})$)}
            By well typedness, $\fProc{b_1} \in \fn(\fProc{P'_1}) \cap \fn(\fProc{P'_2})$.
            Since $\fProc{P'_1} \obseq_{\fIFC{\xi}} \fProc{P'_2}$, then $\fProc{b_1} = \fProc{b_2}$.
            Hence, $\fProc{P'_1} \sc \fProc{\pSel x[b_1]<j \| P''_1}$ and $\fProc{P'_2} \sc \fProc{\pSel x[b_2]<j \| P''_2}$.
            Similar to the case above, and since $\w(\fType{\Gamma \setminus \fProc{x}}) < \w(\fType{\Gamma})$, it follows from the IH that $(\fCtx{E'_1\big[ \pSel x[b_1]<j \| \fProc{P''_1} \big]} ; \fCtx{E'_2\big[ \pSel x[b_2]<j \| \fProc{P''_2} \big]}) \in \termrel{\fIFC{\Omega}}{\fIFC{\xi}}{\fType{\Gamma \setminus \fProc{x}}}$.
            This proves that $(\fCtx{E'_1},\fProc{P'_1} ; \fCtx{E'_2},\fProc{P'_2}) \in \valrel{\fIFC{\Omega}}{\fIFC{\xi}}{\fType{\Gamma}}{\fProc{x}}$.

        \item
            \sloppy
            \textbf{($\fProc{b_1} \notin \dom(\fType{\Gamma})$)}
            By well typedness, $\fProc{P'_1} \sc \fProc{\nu{b_1b'} ( \pSel x[b_1]<j \| P''_1 )}$.
            The selection on $\fProc{x}$ is a relevant node of $\fProc{P'_1}$.
            Since $\fProc{P'_1} \obseq_{\fIFC{\xi}} \fProc{P'_2}$, it is also a relevant node of $\fProc{P'_2}$.
            Moreover, $\fProc{b_2} \notin \fn(\fProc{P'_2})$: otherwise, $\fProc{b_2} = \fProc{b_1}$, and then $\fProc{b_1} \in \fn(\fProc{P'_1})$.
            Hence, $\fProc{P'_2} \sc \fProc{\nu{b_2b'} ( \pSel x[b_2]<j \| P''_2 )}$.
            Clearly, $\fType{\fIFC{\Omega} \vdash \fIFC{\fProc{P''_1} \at d''_1} :: \Gamma_1 \setminus \fProc{x} , \fProc{b'}}$ and $\fType{\fIFC{\Omega} \vdash \fIFC{\fProc{P''_2} \at d''_2} :: \Gamma_2 \setminus \fProc{x} , \fProc{b'}}$, and $\fType{\fIFC{\Omega} \vdash \fIFC{\fCtx{E'_1\big[\nu{b_1b'} ( \pSel x[b_1]<j \| \fProc{P''_1} )\big]} \at d'''_1} :: \emptyset}$ and $\fType{\fIFC{\Omega} \vdash \fIFC{\fCtx{E'_2\big[\nu{b_2b'} ( \pSel x[b_2]<j \| \fProc{P''_2} )\big]} \at d'''_2} :: \emptyset}$.
            Again, since $\fProc{P'_1} \obseq_{\fIFC{\xi}} \fProc{P'_2}$, the chain of nodes and binders that are relevant in $\fProc{P'_1}$ through the binder $\fProc{\nu{b_1b'}}$ has an equivalent such chain in $\fProc{P'_2}$ through $\fProc{\nu{b_2b'}}$ and the selection on $\fProc{b_2}$.
            Hence, the effect on relevant nodes and binders by removing the binder and the selection on $\fProc{x}$ is the same on $\fProc{P''_1}$ as it is on $\fProc{P''_2}$: $\fProc{P''_1} \obseq_{\fIFC{\xi}} \fProc{P''_2}$.
            Clearly, $\fType{\Gamma_1 \setminus \fProc{x} , \fProc{b'} \proj \fIFC{\xi}} = \fType{\Gamma_2 \setminus \fProc{x} , \fProc{b'}} = \fType{\Gamma \setminus \fProc{x} , \fProc{b'}}$.
            Also, $\w(\fType{A_j}) < \w(\fType{\oplus \{ A_i \}_{i \in I}})$, so $\w(\fType{\Gamma \setminus \fProc{x} , \fProc{b'}}) < \w(\fType{\Gamma})$.
            It then follows from the IH that $(\fCtx{E'_1\big[\nu{b_1b'} ( \pSel x[b_1]<j \| \fProc{P''_1} )\big]} ; \fCtx{E'_2\big[\nu{b_2b'} ( \pSel x[b_2]<j \| \fProc{P''_2} )\big]} ) \in \termrel{\fIFC{\Omega}}{\fIFC{\xi}}{\fType{\Gamma \setminus \fProc{x} , \fProc{b'}}}$.
            This proves that $(\fCtx{E'_1},\fProc{P'_1} ; \fCtx{E'_2},\fProc{P'_2}) \in \valrel{\fIFC{\Omega}}{\fIFC{\xi}}{\fType{\Gamma}}{\fProc{x}}$.

    \end{itemize}


    Next, we discuss the negative cases ($\fType{\bot},\fType{\with},\fType{\parr}$).
    In each case, by well typedness, $\fProc{x}$ is the subject of an input-like prefix in $\fProc{P'_1}$.
    The context $\fCtx{E'_1}$ binds $\fProc{x}$ to some $\fProc{y}$ by restriction, and $\fCtx{E'_1}$ contains a complementary output-like prefix on $\fProc{y}$.
    Following similar reasoning, the same holds for $\fCtx{E'_2}$.
    Since $\fProc{x} \in \ain(\fCtx{E'_1},\fProc{P'_1})$, this output-like prefix appears unguarded in $\fCtx{E'_1}$.
    To prove the thesis, we assume that this prefix also appears unguarded in $\fCtx{E'_2}$.
    
    For space considerations, we only detail the case where $\fProc{x}$ has type $\fType{\bot\fIFC{[c]}}$; the other cases require additional care in handling continuation endpoints and are discussed in \appref[proof:t:fundamental].
    We have $\fCtx{E'_1} \sc \fCtx{\nu{yx}( \pClose y[] \| E''_1 )}$ and $\fCtx{E'_2} \sc \fCtx{\nu{yx}( \pClose y[] \| E''_2 )}$.
    Let $\fProc{P''_1} := \fProc{\nu{yx}( \pClose y[] \| P'_1 )}$ and $\fProc{P''_2} := \fProc{\nu{yx}( \pClose y[] \| P'_2 )}$.
    Clearly, $\fType{\fIFC{\Omega} \vdash \fIFC{\fProc{P''_1} \at d''_1} :: \Gamma_1 \setminus \fProc{x}}$ and $\fType{\fIFC{\Omega} \vdash \fIFC{\fProc{P''_2} \at d''_2} :: \Gamma_2 \setminus \fProc{x}}$, and $\fType{\fIFC{\Omega} \vdash \fIFC{\fCtx{E''_1[\fProc{P''_1}]} \at d'''_1} :: \emptyset}$ and $\fType{\fIFC{\Omega} \vdash \fIFC{\fCtx{E''_2[\fProc{P''_2}]} \at d'''_2} :: \emptyset}$.

    Let $\fProc{Q_1},\fProc{Q_2}$ denote the nodes of $\fProc{P'_1},\fProc{P'_2}$, respectively, in which $\fProc{x}$ appears.
    To prove that $\fProc{P''_1} \obseq_{\fIFC{\xi}} \fProc{P''_2}$, it suffices to show that $\fProc{Q_1}$ and any related binders are relevant in $\fProc{P''_1}$ if and only if $\fProc{Q_2}$ and any related binders are relevant in $\fProc{P''_2}$; any connected nodes/binders follow similar reasoning.
    We detail only the left-to-right direction; the other direction is analogous.
    Suppose $\fProc{Q_1}$ is relevant in $\fProc{P''_1}$.
    Then $\fIFC{\quasi(\fProc{Q_1}) \lleq \xi}$, and thus $\fProc{Q_1}$ is also relevant in $\fProc{P'_1}$ through $\fProc{x}$ in the interface.
    Then also $\fProc{Q_2}$ is relevant in $\fProc{P'_2}$, where $\fProc{Q_1} \sc \fProc{Q_2}$ and $\fIFC{\quasi(\fProc{Q_2}) \lleq \xi}$.
    The analysis depends on how $\fProc{Q_1}$ is relevant in $\fProc{P''_1}$: (i)~through the interface, or (ii)~through a restriction with another relevant node.
    In case~(i), it follows straightforwardly that $\fProc{Q_2}$ is also relevant in $\fProc{P'_2}$.
    In case~(ii), the connected node is also relevant in $\fProc{P'_1}$, and hence there is a related node that is also relevant in $\fProc{P'_2}$.
    Since the two processes agree on observable channels, the channel responsible for including $\fProc{Q_1}$ as a relevant node of $\fProc{P''_1}$ is also bound in $\fProc{P''_2}$.
    Then we can conclude that $\fProc{Q_2}$ is a relevant node of $\fProc{P''_2}$.

    Since $\w(\fType{\Gamma \setminus \fProc{x}}) < \w(\fType{\Gamma})$, it then follows from the IH that $(\fCtx{E''_1[\fProc{P''_1}]} ; \fCtx{E''_2[\fProc{P''_2}]}) \in \termrel{\fIFC{\Omega}}{\fIFC{\xi}}{\fType{\Gamma \setminus \fProc{x}}}$.
    This proves that $(\fCtx{E'_1},\fProc{P'_1} ; \fCtx{E'_2},\fProc{P'_2}) \in \valrel{\fIFC{\Omega}}{\fIFC{\xi}}{\fType{\Gamma}}{\fProc{x}}$.

    Finally, we show that $\aon(\fProc{P'_1}) \cap \dom(\fType{\Gamma}) = \aon(\fProc{P'_2}) \cap \dom(\fType{\Gamma})$.
    To prove this set equality, we take any $\fProc{x} \in \aon(\fProc{P'_1}) \cap \dom(\fType{\Gamma})$ and prove that $\fProc{x} \in \aon(\fProc{P'_2}) \cap \dom(\fType{\Gamma})$; the other direction is analogous.
    Clearly, $\fProc{x}$ is the subject of an output-like prefix in $\fProc{P'_1}$.
    Since $\fProc{x} \in \dom(\fType{\Gamma})$, this output-like prefix must appear unguarded in a node in $\fProc{P'_1}$.
    If the quasi-running secrecy of this node is observable, this node is relevant in $\fProc{P'_1}$.
    Since $\fProc{P'_1} \obseq_{\fIFC{\xi}} \fProc{P'_2}$, $\fProc{P'_2}$ must also have a relevant node in which the output-like prefix appears unguarded.
    Otherwise, the node is not relevant in $\fProc{P'_1}$, and hence the node in which the output-like prefix appears in $\fProc{P'_2}$ is also not relevant in $\fProc{P'_2}$.
    Hence, $\fProc{x} \in \aon(\fProc{P'_2}) \cap \dom(\fType{\Gamma})$.
\end{proof}

\subparagraph*{DSNI and deadlock freedom.}

As mentioned in \Cref{s:IFC:procLang}, our process language is based on the finite fragment of APCP with priority mechanisms removed.
By enriching our process language with APCP's priority mechanisms, we restrict well typedness to deadlock-free processes.
As such, our results remain relevant if we only consider deadlock-free processes.

\section{Related Work}
\label{s:rw}

\subparagraph*{Logical relations for session types.}

Existing logical relations for session types are primarily unary, focusing on proving termination~\cite{PerezESOP2012, PerezARTICLE2014,DeYoungFSCD2020}.
Binary logical relations have been contributed for proving parametricity~\cite{CairesESOP2013} and noninterference~\cite{DerakhshanLICS2021,report/BalzerDHY23}.
All of these logical relations are developed for intuitionistic linear session types, where process networks form trees and, as a result, neither permit cyclic networks nor deadlocks.
Whereas our logical relation has its foundations in linear session types,
it differs in that it is based on classical linear logic and allows for cycles and deadlocks.

Our work is most similar to prior work by Derakhshan et al.~\cite{DerakhshanLICS2021} and Balzer et al.~\cite{report/BalzerDHY23} on a binary logical relation, in which the authors develop a flow-sensitive IFC type system and use the logical relation to prove noninterference.
Our IFC type system is also flow sensitive, but it is designed for an adaptation of APCP (the background of which we discuss separately) that allows for cyclic process networks and deadlocks.
Our logical relation resembles the one by the authors in that it employs an interface of names along which observations can be made.  In contrast to Derakhshan et al.~\cite{DerakhshanLICS2021} and Balzer et al.~\cite{report/BalzerDHY23}, our interface is a set of names with types, rather than a sequent that singles out the providing name from the names being used.  The distinction becomes necessary in an intuitionistic linear logic setting, whereas our work is grounded in classical linear logic.
The contrast between the intuitionistic and classical setting manifests itself in other aspects of our development, too.
For example, in prior intuitionistic IFC session type systems~\cite{DerakhshanLICS2021,report/BalzerDHY23}, the offering channel always caps the secrecy of the channels in the context and the running secrecy.
In our setting, however, the running secrecy of a process and the secrecy levels of its context are not necessarily related.
In particular, waiting for a channel to close may increase the running secrecy of a process.
Once the channel is closed, it disappears from the context, leaving the running secrecy entirely unrelated to the secrecies of the remaining channels.

Like our language, Derakhshan et al.~\cite{DerakhshanLICS2021}'s language lacks any recursion construct.
On the other hand, the possibility of deadlocks introduces a side channel similar to non-termination, which is present in the work by Balzer et al.~\cite{report/BalzerDHY23} that supports general recursive session types and thus possibly looping processes.
Here, we decided to consider non-recursive processes to focus on side channels through deadlocks only and not through non-termination.  In future work we would like to consider scaling our work to support general recursive session types as well.  We envision employing an observation index similar to Balzer et al.~\cite{report/BalzerDHY23} to stratify the logical relation in the number of observable messages exchanged over the interface.  The co-presence of both non-termination and deadlocks will need careful consideration.

Our idea of an observable interface is reminiscent of the free channels with visible communications in prior work by Atkey~\cite{AtkeyESOP2017}.
Atkey establishes observational equivalence for Wadler's Classical Processes (CP) by defining a denotational semantics for CP and a logical-relations argument.
However, the logical relation in Atkey's work does not relate two processes of a certain type but rather identifies the possible observations for each type in terms of the input/output behavior of its connectives.

\subparagraph*{Cyclic process networks.}

Traditionally, (typed) $\pi$-calculi permit cyclic process networks, and as such do not guarantee deadlock freedom.
However, since the discovery of Curry-Howard correspondences between linear logic and session types~\cite{CairesCONCUR2010,journal/jfp/Wadler14}, the majority of works on session types restrict their network shapes to trees, such that deadlock freedom is guaranteed.
The line of work including APCP (to which our process language is highly related)~\cite{conf/concur/Kobayashi06,conf/lics/Padovani14,conf/fossacs/DardhaG18,conf/ice/vdHeuvelP21,journal/scico/vdHeuvelP22} considers restrictions to session type systems that allow cyclic process networks without deadlocks.

\subparagraph*{IFC type systems for multiparty session types.}

Capecchi et al.~\cite{CapecchiCONCUR2010,CapecchiARTICLE2014} explore secure information flow with controlled forms of declassification for multiparty sessions and prove a noninterference result via a bisimulation. 
Our work differs in being flow sensitive and using a binary rather than multiparty session type paradigm.
Our use of a logical relation to show noninterference, and our foundations in linear logic also set us apart.
Follow-up works by Castellani et al.~\cite{CastellaniARTICLE2016,Ciancaglini2016} also study run-time monitoring techniques to ensure secure information flow control in multiparty sessions.

\subparagraph*{IFC type systems for process calculi.}

Several approaches have been explored for designing IFC type systems that prevent the leakage of information through message passing in process calculi~\cite{HondaESOP2000,HondaYoshidaPOPL2002,CrafaARTICLE2002,CrafaTGC2005,CrafaFMSE2006,HondaYoshidaTOPLAS2007,Crafa2007,HENNESSYRIELY2002,HENNESSY20053,KobayashiARTICLE2005,ZDANCEWIC2003,POTTIER2002}.
Some of these approaches include associating a security label with types or channels~\cite{HondaYoshidaTOPLAS2007}, associating a security label with actions~\cite{HondaYoshidaPOPL2002}, associating read and write policies with channels~\cite{HENNESSYRIELY2002,HENNESSY20053}, and associating a security label with processes and capabilities with expressions~\cite{CrafaARTICLE2002}. 
%
Our approach differs from previous work in having a dynamic running secrecy that makes our system flow sensitive, using session types instead of process calculi, and the design of our novel logical relations for establishing noninterference.


\subparagraph*{Logical relations for stateful languages.}

Kripke logical relations have been used to reason about stateful programs~\cite{PittsStarkHOOTS1998}.
The relation is indexed by a possible world that serves as a semantic model for the heap.
It establishes an invariant on the heap and ensures that the invariant is preserved for all future worlds.
When combined with step indexing~\cite{AppelMcAllesterTOPLAS2001,AhmedESOP2006}, Kripke logical relations can address circularity that arises in higher-order stores~\cite{AhmedPOPL2009,DreyerICFP2010,DreyerJFP2012}.
Our logical relation is similar to Kripke logical relations in being developed for a stateful language; names, like locations, are subject to concurrent mutation.
However, our session types are rooted in linear logic and thus internalize Kripke's logical worlds into the type system.



\section{Conclusions}
\label{s:concl}

We have presented a new session type system with information flow control (IFC) for an asynchronous $\pi$-calculus, and a notion of noninterference by means of a logical relation between typed processes.
Our development flexibly supports realistic cyclic process networks that may deadlock.
As such, our main result is that IFC well typedness implies deadlock-sensitive noninterference (DSNI).

In future work, we plan to study the interplay between IFC / DSNI and several interesting features of message-passing concurrency, such as recursion and non-determinism.
As commented on in the previous section, we expect the notion of an observation index~\cite{report/BalzerDHY23} to be applicable to circular process networks with recursive processes, although the interplay between potential leakage through non-termination and deadlocks will need careful consideration.
Support of non-determinism may be more challenging, especially when combining recursion with non-deterministic choice, which without further restriction permits loop guards at mixed confidentiality level.  We will consider focusing on more curtailed but logically motivated notions of non-determinism, such as coexponentials~\cite{QianICFP2021}, $\mathrm{HCP}^-_{\mathrm{ND}}$~\cite{KokkeARTICLE2020}, and linear non-determinism~\cite{VanDenHeuvelAPLAS2023}.
We are also interested in exploring ``name-sensitive'' noninterference: the choice of names in outputs is a possible source of information leakage outside the scope of this paper.

\bibliography{ref.bib}

\appendices

\section{Appendix}
\label{s:appendix}

\begin{definition}[Identity Expanded Forwarders]
    \label{d:fwd}
    \emph{Forwarders} are defined by the following typing rule:
    \[
        \begin{bussproof}[typ-fwd]
            \bussAssume{
                \fIFC{\Omega \Vdash d \lleq c}
            }
            \bussUn{
                \fType{\fIFC{\Omega} \vdash \fIFC{\fProc{\pFwd x y} \at d} :: \fProc{x}:A\fIFC{[c]}, \fProc{y}:\dual{A}\fIFC{[c]}}
            }
        \end{bussproof}
    \]
    This rule expands into the usual rules from \Cref{f:typeSys} by induction on the structure of $\fType{A}$ (we omit dual cases):
    \begin{align*}
        \fType{A} &= \fType{1}
        &
        \fType{\fIFC{\Omega}} &\fType{{}\vdash \fIFC{\fProc{\pWait y() ; \pClose x[]} \at d} :: \fProc{x}:1\fIFC{[c]} , \fProc{y}:\bot\fIFC{[c]}}
        \\
        \fType{A} &= \fType{B \tensor C}
        &
        \fType{\fIFC{\Omega}} &\fType{
            \begin{array}[t]{@{}l@{}}
                {}\vdash \fIFC{\fProc{\pRecv y(v,y_1) ; \nu{uw} \nu{zx_1} ( \pSend x[u,z] \| \pFwd v w \| \pFwd {x_1} {y_1} )} \at d}
                \\
                {}:: \fProc{x}:B \tensor C\fIFC{[c]} , \fProc{y}:\dual{B} \parr \dual{C}\fIFC{[c]}
            \end{array}
        }
        \\
        \fType{A} &= \fType{\oplus \{ i : A_i \}_{i \in I}}
        &
        \fType{\fIFC{\Omega}} &\fType{
            \begin{array}[t]{@{}l@{}}
                {}\vdash \fIFC{\fProc{\pBra y(y_1)>\{ i : \nu{zx_1} ( \pSel x[z]<i \| \pFwd {x_1} {y_1} ) \}_{i \in I}} \at d}
                \\
                {}:: \fProc{x}:\oplus \{ i : A_i \}_{i \in I}\fIFC{[c]} , \fProc{y}:\with \{ i : \dual{A_i} \}_{i \in I}\fIFC{[c]}
            \end{array}
        }
    \end{align*}
\end{definition}

\lSubstitution*

\begin{proof}
    By induction on the derivation of the assumption.
    Since $\fProc{y}$ is free in $\fProc{P}$, $\fProc{y}$ could only have been introduced and not been altered/used thereafter.
    Hence, we can inductively substitute $\fProc{x}$ for $\fProc{y}$ everywhere in this derivation without affecting anything else.
    The thesis follows.
\end{proof}

\tSubjCong*

\begin{proof}
    \label{proof:t:subjCong}
    By induction on the derivation of $\fProc{P} \sc \fProc{Q}$.
    The base cases correspond to the seven rules in \Cref{f:procSCRedd} (top), and we will detail each case below.
    The inductive cases correspond to closure under arbitrary process contexts in \Cref{d:procSyntax}; these cases follow from the IH straightforwardly.

    In each base case, we need to show both directions, but the reasoning usually works in both directions trivially.
    We apply inversion on the typing of $\fProc{P}$ to derive the typing of $\fProc{Q}$.

    \begin{itemize}
        \item
            Rule~\ruleLabel{sc-alpha}: $\fProc{P} \alpheq \fProc{Q} \implies \fProc{P} \sc \fProc{Q}$.
            Assume $\fProc{P}$ and $\fProc{Q}$ are $\alpha$-equivalent, i.e., they differ in bound names only.
            It is easy to see that the typing derivation of $\fProc{P}$ can be applied exactly to $\fProc{Q}$ by adopting the changes in bound names.

        \item
            Rule~\ruleLabel{sc-par-nil}: $\fProc{P \| 0} \sc \fProc{P}$.
            \begin{mathpar}
                \begin{bussproof}
                    \bussAssume{
                        \fIFC{\Omega \Vdash d \lleq d \lcap d}
                    }
                    \bussAssume{
                        \fType{\fIFC{\Omega} \vdash \fIFC{\fProc{P} \at d} :: \Gamma}
                    }
                    \bussAx[\ruleLabel{typ-inact}]{
                        \fType{\fIFC{\Omega} \vdash \fIFC{\fProc{0} \at d} :: \emptyset}
                    }
                    \bussTern[\ruleLabel{typ-par}]{
                        \fType{\fIFC{\Omega} \vdash \fIFC{\fProc{P \| 0} \at d} :: \Gamma}
                    }
                \end{bussproof}
                \and
                \Leftrightarrow
                \and
                \fType{\fIFC{\Omega} \vdash \fIFC{\fProc{P} \at d} :: \Gamma}
            \end{mathpar}
            The requirement that we need to prove is $\fIFC{\Omega \Vdash d \lleq d \lcap d}$; this holds trivially.

        \item
            Rule~\ruleLabel{sc-par-symm}: $\fProc{P \| Q} \sc \fProc{Q \| P}$.
            \begin{mathpar}
                \begin{bussproof}
                    \bussAssume{
                        \fIFC{\Omega \Vdash d \lleq d'_1 \lcap d'_2}
                    }
                    \bussAssume{
                        \fType{\fIFC{\Omega} \vdash \fIFC{\fProc{P} \at d'_1} :: \Gamma}
                    }
                    \bussAssume{
                        \fType{\fIFC{\Omega} \vdash \fIFC{\fProc{Q} \at d'_2} :: \Delta}
                    }
                    \bussTern[\ruleLabel{typ-par}]{
                        \fType{\fIFC{\Omega} \vdash \fIFC{\fProc{P \| Q} \at d} :: \Gamma , \Delta}
                    }
                \end{bussproof}
                \and
                \Leftrightarrow
                \and
                \begin{bussproof}
                    \bussAssume{
                        \fIFC{\Omega \Vdash d \lleq d'_2 \lcap d'_1}
                    }
                    \bussAssume{
                        \fType{\fIFC{\Omega} \vdash \fIFC{\fProc{Q} \at d'_2} :: \Delta}
                    }
                    \bussAssume{
                        \fType{\fIFC{\Omega} \vdash \fIFC{\fProc{P} \at d'_1} :: \Gamma}
                    }
                    \bussTern[\ruleLabel{typ-par}]{
                        \fType{\fIFC{\Omega} \vdash \fIFC{\fProc{P \| Q} \at d} :: \Gamma , \Delta}
                    }
                \end{bussproof}
            \end{mathpar}
            Clearly, $\fIFC{\Omega \Vdash d \lleq d'_1 \lcap d'_2}$ if and only if $\fIFC{\Omega \Vdash d \lleq d'_2 \lcap d'_1}$.

        \item
            Rule~\ruleLabel{sc-par-assoc}: $\fProc{(P \| Q) \| R} \sc \fProc{P \| (Q \| R)}$.
            Given
            \begin{align}
                & \fIFC{\Omega \Vdash d \lleq d'_1 \lcap d'_2},
                \label{eq:app:subjCong:parAssoc1}
                \\
                & \fIFC{\Omega \Vdash d'_1 \lleq d''_1 \lcap d''_2},
                \label{eq:app:subjCong:parAssoc2}
                \\
                & \fIFC{\Omega \Vdash d \lleq d''_1 \lcap d'_3},
                \label{eq:app:subjCong:parAssoc3}
            \end{align}
            we have
            \begin{mathpar}
                \begin{bussproof}
                    \bussAssume{
                        \eqref{eq:app:subjCong:parAssoc1}
                    }
                    \bussAssume{
                        \eqref{eq:app:subjCong:parAssoc2}    
                    }
                    \bussAssume{
                        \fType{\fIFC{\Omega} \vdash \fIFC{\fProc{P} \at d''_1} :: \Gamma}
                    }
                    \bussAssume{
                        \fType{\fIFC{\Omega} \vdash \fIFC{\fProc{Q} \at d''_2} :: \Delta}
                    }
                    \bussTern[\ruleLabel{typ-par}]{
                        \fType{\fIFC{\Omega} \vdash \fIFC{\fProc{P \| Q} \at d'_1} :: \Gamma , \Delta}
                    }
                    \bussAssume{
                        \fType{\fIFC{\Omega} \vdash \fIFC{\fProc{R} \at d'_2} :: \Lambda}
                    }
                    \bussTern[\ruleLabel{typ-par}]{
                        \fType{\fIFC{\Omega} \vdash \fIFC{\fProc{(P \| Q) \| R} \at d} :: \Gamma , \Delta , \Lambda}
                    }
                \end{bussproof}
                \and
                \Leftrightarrow
                \and
                \begin{bussproof}
                    \bussAssume{
                        \eqref{eq:app:subjCong:parAssoc3}
                    }
                    \bussAssume{
                        \fType{\fIFC{\Omega} \vdash \fIFC{\fProc{P} \at d''_1} :: \Gamma}
                    }
                    \bussAssume{
                        \fIFC{\Omega \Vdash d'_3 \lleq d''_2 \lcap d'_2}
                    }
                    \bussAssume{
                        \fType{\fIFC{\Omega} \vdash \fIFC{\fProc{Q} \at d''_2} :: \Delta}
                    }
                    \bussAssume{
                        \fType{\fIFC{\Omega} \vdash \fIFC{\fProc{R} \at d'_2} :: \Lambda}
                    }
                    \bussTern[\ruleLabel{typ-par}]{
                        \fType{\fIFC{\Omega} \vdash \fIFC{\fProc{Q \| R} \at d'_3} :: \Delta , \Lambda}
                    }
                    \bussTern[\ruleLabel{typ-par}]{
                        \fType{\fIFC{\Omega} \vdash \fIFC{\fProc{P \| (Q \| R)} \at d} :: \Gamma , \Delta , \Lambda}
                    }
                \end{bussproof}
            \end{mathpar}
            We need to prove that the security conditions hold, in both directions.
            \begin{itemize}
                \item[($\Rightarrow$)]
                    Assume there exists $\fIFC{d'_1}$ such that $\fIFC{\Omega \Vdash d \lleq d'_1 \lcap d'_2}$ and $\fIFC{\Omega \Vdash d'_1 \lleq d''_1 \lcap d''_2}$.
                    We need to prove that there exists $\fIFC{d'_3}$ such that (i)~$\fIFC{\Omega \Vdash d \lleq d''_1 \lcap d'_3}$ and (ii)~$\fIFC{\Omega \Vdash d'_3 \lleq d''_2 \lcap d'_2}$.
                    By assumption and definition, $\fIFC{\Omega \Vdash d \lleq d'_1,d'_2}$, so $\fIFC{\Omega \Vdash d \lleq d''_1,d''_2}$.
                    Let $\fIFC{d'_3} = \fIFC{d}$.
                    Then $\fIFC{\Omega \Vdash d \lleq d'_3}$, so $\fIFC{\Omega \Vdash d \lleq d''_1 \lcap d'_3}$, proving~(i).
                    Moreover, $\fIFC{\Omega \Vdash d'_3} = \fIFC{d \lleq d''_2 \lcap d'_2}$, proving~(ii).

                \item[($\Leftarrow$)]
                    Assume there exists $\fIFC{d'_3}$ such that $\fIFC{\Omega \Vdash d \lleq d''_1 \lcap d'_3}$ and $\fIFC{\Omega \Vdash d'_3 \lleq d''_2 \lcap d'_2}$.
                    We need to prove that there exists $\fIFC{d'_1}$ such that (i)~$\fIFC{\Omega \Vdash d \lleq d'_1 \lcap d'_2}$ and (ii)~$\fIFC{\Omega \Vdash d'_1 \lleq d''_1 \lcap d''_2}$.
                    By assumption and definition, $\fIFC{\Omega \Vdash d \lleq d''_1,d'_3}$, so $\fIFC{\Omega \Vdash d \lleq d''_2,d'_2}$.
                    Let $\fIFC{d'_1} = \fIFC{d}$.
                    Then $\fIFC{\Omega \Vdash d \lleq d'_1}$, so $\fIFC{\Omega \Vdash d \lleq d'_1 \lcap d'_2}$, proving~(i).
                    Moreover, $\fIFC{\Omega \Vdash d'_1} = \fIFC{d \lleq d''_1 \lcap d''_2}$, proving~(ii).
            \end{itemize}

        \item
            Rule~\ruleLabel{sc-res-symm}: $\fProc{\nu{xy} P} \sc \fProc{\nu{yx} P}$.
            This case holds because the type system implicitly permits permutation of the typing contexts.
            \begin{mathpar}
                \begin{bussproof}
                    \bussAssume{
                        \fType{\fIFC{\Omega} \vdash \fIFC{\fProc{P} \at d} :: \Gamma , \fProc{x}:A\fProc{[c]} , \fProc{y}:\dual{A}\fIFC{[c]}}
                    }
                    \bussUn[\ruleLabel{typ-res}]{
                        \fType{\fIFC{\Omega} \vdash \fIFC{\fProc{\nu{xy} P} \at d} :: \Gamma}
                    }
                \end{bussproof}
                \and
                \Leftrightarrow
                \and
                \begin{bussproof}
                    \bussAssume{
                        \fType{\fIFC{\Omega} \vdash \fIFC{\fProc{P} \at d} :: \Gamma , \fProc{x}:A\fIFC{[c]} , \fProc{y}:\dual{A}\fIFC{[c]}}
                    }
                    \bussUn[\ruleLabel{typ-res}]{
                        \fType{\fIFC{\Omega} \vdash \fIFC{\fProc{\nu{yx} P} \at d} :: \Gamma}
                    }
                \end{bussproof}
            \end{mathpar}

        \item
            Rule~\ruleLabel{sc-res-assoc}: $\fProc{\nu{xy} \nu{zw} P} \sc \fProc{\nu{zw} \nu{xy} P}$.
            \begin{mathpar}
                \begin{bussproof}
                    \bussAssume{
                        \fType{\fIFC{\Omega} \vdash \fIFC{\fProc{P} \at d} :: \Gamma , \fProc{x}:A\fIFC{[c]} , \fProc{y}:\dual{A}\fIFC{[c]} , \fProc{z}:B\fIFC{[c]} , \fProc{w}:\dual{B}\fIFC{[c]}}
                    }
                    \bussUn[\ruleLabel{typ-res}]{
                        \fType{\fIFC{\Omega} \vdash \fIFC{\fProc{\nu{zw} P} \at d} :: \Gamma , \fProc{x}:A\fIFC{[c]} , \fProc{y}:\dual{A}\fIFC{[c]}}
                    }
                    \bussUn[\ruleLabel{typ-res}]{
                        \fType{\fIFC{\Omega} \vdash \fIFC{\fProc{\nu{xy} \nu{zw} P} \at d} :: \Gamma}
                    }
                \end{bussproof}
                \and
                \Leftrightarrow
                \and
                \begin{bussproof}
                    \bussAssume{
                        \fType{\fIFC{\Omega} \vdash \fIFC{\fProc{P} \at d} :: \Gamma , \fProc{x}:A\fIFC{[c]} , \fProc{y}:\dual{A}\fIFC{[c]} , \fProc{z}:B\fIFC{[c]} , \fProc{w}:\dual{B}\fIFC{[c]}}
                    }
                    \bussUn[\ruleLabel{typ-res}]{
                        \fType{\fIFC{\Omega} \vdash \fIFC{\fProc{\nu{xy} P} \at d} :: \Gamma , \fProc{z}:B\fIFC{[c]} , \fProc{w}:\dual{B}\fIFC{[c]}}
                    }
                    \bussUn[\ruleLabel{typ-res}]{
                        \fType{\fIFC{\Omega} \vdash \fIFC{\fProc{\nu{zw} \nu{xy} P} \at d} :: \Gamma}
                    }
                \end{bussproof}
            \end{mathpar}

        \item
            Rule~\ruleLabel{sc-res-comm}: $\fProc{x},\fProc{y} \notin \fn(\fProc{Q}) \implies \fProc{\nu{xy} (P \| Q)} \sc \fProc{\nu{xy} P \| Q}$.
            Assuming $\fProc{x},\fProc{y} \notin \fn(\fProc{Q})$, by \Cref{l:notFnImpliesNotInDom}, $\fProc{x},\fProc{y} \notin \dom(\fProc{\Delta})$.
            \begin{mathpar}
                \begin{bussproof}
                    \bussAssume{
                        \fIFC{\Omega \Vdash d \lleq d'_1 \lcap d'_2}
                    }
                    \bussAssume{
                        \fType{\fIFC{\Omega} \vdash \fIFC{\fProc{P} \at d'_1} :: \Gamma , \fProc{x}:A\fIFC{[c]} , \fProc{y}:\dual{A}\fIFC{[c]}}
                    }
                    \bussAssume{
                        \fType{\fIFC{\Omega} \vdash \fIFC{\fProc{Q} \at d'_2} :: \Delta}
                    }
                    \bussTern[\ruleLabel{typ-par}]{
                        \fType{\fIFC{\Omega} \vdash \fIFC{\fProc{P \| Q} \at d} :: \Gamma , \Delta , \fProc{x}:A\fIFC{[c]} , \fProc{y}:\dual{A}\fIFC{[c]}}
                    }
                    \bussUn[\ruleLabel{typ-res}]{
                        \fType{\fIFC{\Omega} \vdash \fIFC{\fProc{\nu{xy} (P \| Q)} \at d} :: \Gamma , \Delta}
                    }
                \end{bussproof}
                \and
                \Leftrightarrow
                \and
                \begin{bussproof}
                    \bussAssume{
                        \fIFC{\Omega \Vdash d \lleq d'_1 \lcap d'_2}
                    }
                    \bussAssume{
                        \fType{\fIFC{\Omega} \vdash \fIFC{\fProc{P} \at d'_1} :: \Gamma , \fProc{x}:A\fIFC{[c]} , \fProc{y}:\dual{A}\fIFC{[c]}}
                    }
                    \bussUn[\ruleLabel{typ-res}]{
                        \fType{\fIFC{\Omega} \vdash \fIFC{\fProc{\nu{xy} P} \at d'_1} :: \Gamma}
                    }
                    \bussAssume{
                        \fType{\fIFC{\Omega} \vdash \fIFC{\fProc{Q} \at d'_2} :: \Delta}
                    }
                    \bussTern[\ruleLabel{typ-par}]{
                        \fType{\fIFC{\Omega} \vdash \fIFC{\fProc{\nu{xy} P \| Q} \at d} :: \Gamma , \Delta}
                    }
                \end{bussproof}
            \end{mathpar}
            \qedhere
    \end{itemize}
\end{proof}

\tSubjRed*

\begin{proof}
    \label{proof:t:subjRed}
    By induction on the derivation of $\fProc{P} \redd \fProc{Q}$.
    The cases correspond to the reduction rules in \Cref{f:procSCRedd} (bottom).
    In each case, we apply inversion on the typing of $\fProc{P}$ to derive the typing of $\fProc{Q}$.
    \begin{itemize}

        \item
            Rule~\ruleLabel{red-close-wait}: $\fProc{\nu{xy} ( \pClose x[] \| \pWait y() ; P )} \redd \fProc{P}$.
            Given
            \begin{align}
                & \fIFC{\Omega \Vdash d \lleq d'_1 \lcap d'_2},
                \label{eq:app:subjRed:CloseWait1}
            \end{align}
            we have
            \begin{mathpar}
                \begin{bussproof}
                    \bussAssume{
                        \eqref{eq:app:subjRed:CloseWait1}
                    }
                    \bussAssume{
                        \fIFC{\Omega \Vdash d'_1 \lleq c}
                    }
                    \bussUn[\ruleLabel{typ-close}]{
                        \fType{\fIFC{\Omega} \vdash \fIFC{\fProc{\pClose x[]} \at d'_1} :: \fProc{x}:1\fIFC{[c]}}
                    }
                    \bussAssume{
                        \fIFC{\Omega \Vdash d''_2 = d'_2 \lcup c}
                    }
                    \bussAssume{
                        \fType{\fIFC{\Omega} \vdash \fIFC{\fProc{P} \at d''_2} :: \Gamma}
                    }
                    \bussBin[\ruleLabel{typ-wait}]{
                        \fType{\fIFC{\Omega} \vdash \fIFC{\fProc{\pWait y() ; P} \at d'_2} :: \Gamma , \fProc{y}:\bot\fIFC{[c]}}
                    }
                    \bussTern[\ruleLabel{typ-par}]{
                        \fType{\fIFC{\Omega} \vdash \fIFC{\fProc{\pClose x[] \| \pWait y() ; P} \at d} :: \Gamma , \fProc{x}:1\fIFC{[c]} , \fProc{y}:\bot\fIFC{[c]}}
                    }
                    \bussUn[\ruleLabel{typ-res}]{
                        \fType{\fIFC{\Omega} \vdash \fIFC{\fProc{\nu{xy} ( \pClose x[] \| \pWait y() ; P )} \at d} :: \Gamma}
                    }
                \end{bussproof}
                \and
                \Rightarrow
                \and
                \fType{\fIFC{\Omega} \vdash \fIFC{\fProc{P} \at d''_2} :: \Gamma}
            \end{mathpar}
            By assumption and by definition, $\fIFC{\Omega \Vdash d''_2 \lgeq d'_2}$.
            Also, by definition, $\fIFC{\Omega \Vdash d'_2 \lgeq d'_1 \lcap d'_2}$, so, by assumption, $\fIFC{\Omega \Vdash d'_2 \lgeq d}$.
            Hence, $\fIFC{\Omega \Vdash d''_2 \lgeq d}$.

        \item
            Rule~\ruleLabel{res-sel-bra}: $j \in I \implies \nu{xy} (\pSel x[b]<j \| \pBra y(w)>\{i:P_i\}_{i \in I}) \redd P_j \pSubst{b/w}$.
            Given
            \begin{align}
                & \fIFC{\Omega \Vdash d \lleq d'_1 \lcap d'_2},
                \label{eq:app:subjRed:selBra1}
                \\
                & \begin{bussproof}
                    \bussAssume{
                        \fIFC{\Omega \Vdash d'_1 \lleq c}
                    }
                    \bussAssume{
                        \fProc{j} \in \fProc{I}
                    }
                    \bussBin[\ruleLabel{typ-sel}]{
                        \fType{\fIFC{\Omega} \vdash \fIFC{\fProc{\pSel x[b]<j} \at d'_1} :: \fProc{x}:\oplus \{ i : A_i\}_{i \in I}\fIFC{[c]} , \fProc{b}:\dual{A_j}\fIFC{[c]}}
                    }
                \end{bussproof},
                \label{eq:app:subjRed:selBra2}
                \\
                & \fIFC{\Omega \Vdash d''_2 = d'_2 \lcup c}
                \label{eq:app:subjRed:selBra3}
            \end{align}
            \begin{mathpar}
                \begin{bussproof}
                    \bussAssume{
                        \eqref{eq:app:subjRed:selBra1}
                    }
                    \bussAssume{
                        \eqref{eq:app:subjRed:selBra2}
                    }
                    \bussAssume{
                        \eqref{eq:app:subjRed:selBra3}
                    }
                    \bussAssume{
                        \forall \fProc{i} \in \fProc{I}.~ \fType{\fIFC{\Omega} \vdash \fIFC{\fProc{P_i} \at d''_2} :: \Gamma , \fProc{z}:\dual{A_i}\fIFC{[c]}}
                    }
                    \bussBin[\ruleLabel{typ-bra}]{
                        \fType{\fIFC{\Omega} \vdash \fIFC{\fProc{\pBra y(z)>\{i:P_i\}_{i \in I}} \at d'_2} :: \Gamma , \fProc{y}:\with \{ i : \dual{A_i} \}_{i \in I}\fIFC{[c]}}
                    }
                    \bussTern[\ruleLabel{typ-par}]{
                        \fType{
                            \fIFC{\Omega} 
                            \begin{array}[t]{@{}l@{}}
                                {} \vdash \fIFC{\fProc{\pSel x[b]<j \| \pBra y(z)>\{i:P_i\}_{i \in I}} \at d} 
                                \\
                                {} :: \Gamma , \fProc{x}:\oplus \{ i : A_i\}_{i \in I}\fIFC{[c]} , \fProc{y}:\with \{ i : \dual{A_i} \}_{i \in I}\fIFC{[c]} , \fProc{b}:\dual{A_j}\fIFC{[c]}
                            \end{array}
                        }
                    }
                    \bussUn[\ruleLabel{typ-res}]{
                        \fType{\fIFC{\Omega} \vdash \fIFC{\fProc{\nu{xy} ( \pSel x[b]<j \| \pBra y(z)>\{i:P_i\}_{i \in I} )} \at d} :: \Gamma , \fProc{b}:\dual{A_j}\fIFC{[c]}}
                    }
                \end{bussproof}
                \and
                \Rightarrow
                \and
                \begin{bussproof}
                    \bussAssume{
                        \text{\Cref{l:substitution}}
                    }
                    \bussUn{
                        \fType{\fIFC{\Omega} \vdash \fIFC{\fProc{P_j \pSubst{b/z}} \at d''_2} :: \Gamma , \fProc{b}:\dual{A_j}\fIFC{[c]}}
                    }
                \end{bussproof}
            \end{mathpar}
            Following the same reasoning as in the previous two cases, $\fIFC{\Omega \Vdash d''_2 \lgeq d}$.
        \item
            Rule~\ruleLabel{red-send-recv}: $\fProc{\nu{xy} ( \pSend x[a,b] \| \pRecv y(z,w) ; P )} \redd \fProc{P \pSubst{a/z,b/w}}$.
            Given
            \begin{align}
                & \fIFC{\Omega \Vdash d \lleq d'_1 \lcap d'_2},
                \label{eq:app:subjRed:sendRecv1}
                \\
                & \begin{bussproof}
                    \bussAssume{
                        \fIFC{\Omega \Vdash d'_1 \lleq c}
                    }
                    \bussUn[\ruleLabel{typ-send}]{
                        \fType{\fIFC{\Omega} \vdash \fIFC{\fProc{\pSend x[a,b]} \at d'_1} :: \fProc{x}:A \tensor B\fIFC{[c]} , \fProc{a}:\dual{A}\fIFC{[c]} , \fProc{b}:\dual{B}\fIFC{[c]}}
                    }
                \end{bussproof},
                \label{eq:app:subjRed:sendRecv2}
                \\
                & \fIFC{\Omega \Vdash d''_2 = d'_2 \lcup c},
                \label{eq:app:subjRed:sendRecv3}
            \end{align}
            we have
            \begin{mathpar}
                \begin{bussproof}
                    \bussAssume{
                        \eqref{eq:app:subjRed:sendRecv1}
                    }
                    \bussAssume{
                        \eqref{eq:app:subjRed:sendRecv2}
                    }
                    \bussAssume{
                        \eqref{eq:app:subjRed:sendRecv3}
                    }
                    \bussAssume{
                        \fType{\fIFC{\Omega} \vdash \fIFC{\fProc{P} \at d''_2} :: \Gamma , \fProc{z}:\dual{A}\fIFC{[c]} , \fProc{w}:\dual{B}\fIFC{[c]}}
                    }
                    \bussBin[\ruleLabel{typ-recv}]{
                        \fType{\fIFC{\Omega} \vdash \fIFC{\fProc{\pRecv y(z,w) ; P} \at d'_2} :: \Gamma , \fProc{y}:\dual{A} \parr \dual{B}\fIFC{[c]}}
                    }
                    \bussTern[\ruleLabel{typ-par}]{
                        \fType{\fIFC{\Omega} \vdash \fIFC{\fProc{\pSend x[a,b] \| \pRecv y(z,w) ; P} \at d} :: \Gamma , \fProc{x}:A \tensor B\fIFC{[c]} , \fProc{y}:\dual{A} \parr \dual{B}\fIFC{[c]} , \fProc{a}:\dual{A}\fIFC{[c]} , \fProc{b}:\dual{B}\fIFC{[c]}}
                    }
                    \bussUn[\ruleLabel{typ-res}]{
                        \fType{\fIFC{\Omega} \vdash \fIFC{\fProc{\nu{xy} ( \pSend x[a,b] \| \pRecv y(z,w) ; P )} \at d} :: \Gamma , \fProc{a}:\dual{A}\fIFC{[c]} , \fProc{b}:\dual{B}\fIFC{[c]}}
                    }
                \end{bussproof}
                \and
                \Rightarrow
                \and
                \begin{bussproof}
                    \bussAssume{
                        \text{\Cref{l:substitution} twice}
                    }
                    \bussUn{
                        \fType{\fIFC{\Omega} \vdash \fIFC{\fProc{P \pSubst{a/z,b/w}} \at d''_2} :: \Gamma , \fProc{a}:\dual{A}\fIFC{[c]} , \fProc{b}:\dual{B}\fIFC{[c]}}
                    }
                \end{bussproof}
            \end{mathpar}
            Following the same reasoning as in the previous case, $\fIFC{\Omega \Vdash d''_2 \lgeq d}$.

        \item
            Rule~\ruleLabel{res-sc}: $\fProc{P} \sc \fProc{P'} \wedge \fProc{P'} \redd \fProc{Q'} \wedge \fProc{Q'} \sc \fProc{Q} \implies \fProc{P} \redd \fProc{Q}$.
            Since $\fProc{P} \sc \fProc{P'}$, by \Cref{t:subjCong}, $\fType{\fIFC{\Omega} \vdash \fIFC{\fProc{P'} \at d} :: \Gamma}$.
            Then, by the IH, $\fType{\fIFC{\Omega} \vdash \fIFC{\fProc{Q'} \at d'} :: \Gamma}$ for some $\fIFC{d'}$ such that $\fIFC{\Omega \Vdash d \lleq d'}$.
            Hence, since $\fProc{Q'} \sc \fProc{Q}$, by \Cref{t:subjCong}, $\fType{\fIFC{\Omega} \vdash \fIFC{\fProc{Q} \at d'} :: \Gamma}$.

        \item
            Rule~\ruleLabel{red-par}: $\fProc{P} \redd \fProc{P'} \implies \fProc{P \| Q} \redd \fProc{P' \| Q}$.
            \[
                \begin{bussproof}
                    \bussAssume{
                        \fIFC{\Omega \Vdash d \lgeq d'_1 \lcap d'_2}
                    }
                    \bussAssume{
                        \fType{\fIFC{\Omega} \vdash \fIFC{\fProc{P} \at d'_1} :: \Gamma}
                    }
                    \bussAssume{
                        \fType{\fIFC{\Omega} \vdash \fIFC{\fProc{Q} \at d'_2} :: \Delta}
                    }
                    \bussTern[\ruleLabel{typ-par}]{
                        \fType{\fIFC{\Omega} \vdash \fIFC{\fProc{P \| Q} \at d} :: \Gamma , \Delta}
                    }
                \end{bussproof}
            \]
            Since $\fProc{P} \redd \fProc{P'}$, by the IH, $\fType{\fIFC{\Omega} \vdash \fIFC{\fProc{P'} \at d''_1} :: \Gamma}$ for some $\fIFC{d''_1}$ such that $\fIFC{\Omega \Vdash d'_1 \lleq d''_1}$.
            Let $\fIFC{d'} = \fIFC{(d''_1 \lcap d'_2) \lcup d}$.
            Then $\fIFC{\Omega \Vdash d \lleq d'}$.
            \begin{mathpar}
                \Rightarrow
                \and
                \begin{bussproof}
                    \bussAssume{
                        \fIFC{\Omega \Vdash d' \lgeq d''_1 \lcap d'_2}
                    }
                    \bussAssume{
                        \fType{\fIFC{\Omega} \vdash \fIFC{\fProc{P'} \at d''_1} :: \Gamma}
                    }
                    \bussAssume{
                        \fType{\fIFC{\Omega} \vdash \fIFC{\fProc{Q} \at d'_2} :: \Delta}
                    }
                    \bussTern[\ruleLabel{typ-par}]{
                        \fType{\fIFC{\Omega} \vdash \fIFC{\fProc{P' \| Q} \at d'} :: \Gamma , \Delta}
                    }
                \end{bussproof}
            \end{mathpar}

        \item
            Rule~\ruleLabel{red-res}: $\fProc{P} \redd \fProc{P'} \implies \fProc{\nu{xy} P} \redd \fProc{\nu{xy} P'}$.
            \[
                \begin{bussproof}
                    \bussAssume{
                        \fType{\fIFC{\Omega} \vdash \fIFC{\fProc{P} \at d} :: \Gamma , \fProc{x}:A\fIFC{[c]} , \fProc{y}:\dual{A}\fIFC{[c]}}
                    }
                    \bussUn[\ruleLabel{typ-res}]{
                        \fType{\fIFC{\Omega} \vdash \fIFC{\fProc{\nu{xy} P} \at d} :: \Gamma}
                    }
                \end{bussproof}
            \]
            Since $\fProc{P} \redd \fProc{P'}$, by the IH, $\fType{\fIFC{\Omega} \vdash \fIFC{\fProc{P'} \at d'} :: \Gamma , \fProc{x}:A\fIFC{[c]} , \fProc{y}:\dual{A}\fIFC{[c]}}$ for some $\fIFC{d'}$ such that $\fIFC{\Omega \Vdash d \lleq d'}$.
            \[
                \begin{bussproof}
                    \bussAssume{
                        \fType{\fIFC{\Omega} \vdash \fIFC{\fProc{P'} \at d'} :: \Gamma , \fProc{x}:A\fIFC{[c]} , \fProc{y}:\dual{A}\fIFC{[c]}}
                    }
                    \bussUn[\ruleLabel{typ-res}]{
                        \fType{\fIFC{\Omega} \vdash \fIFC{\fProc{\nu{xy} P'} \at d'} :: \Gamma}
                    }
                \end{bussproof}
                \tag*{\qedhere}
            \]
    \end{itemize}
\end{proof}

\lRedCtx*

\begin{proof}
    \label{proof:l:redCtx}
    By induction on the derivation of $\fProc{P} \redd \fProc{P'}$.
    \begin{itemize}

        \item
            \textbf{(Rule~\ruleLabel{red-close-wait})}
            Then Item~\ref{i:redCtx:closeWait} holds with $\fCtx{E} = \fCtx{\hole}$.

        \item
            \textbf{(Rule~\ruleLabel{red-send-recv})}
            Then Item~\ref{i:redCtx:sendRecv} holds with $\fCtx{E} = \fCtx{\hole}$.

        \item
            \textbf{(Rule~\ruleLabel{red-sel-bra})}
            Then Item~\ref{i:redCtx:selBra} holds with $\fCtx{E} = \fCtx{\hole}$.

        \item
            \textbf{(Rule~\ruleLabel{red-sc})}
            We have $\fProc{P} \sc \fProc{Q} \redd \fProc{Q'} \equiv \fProc{P'}$.
            By the IH, $\fProc{Q} \sc \fCtx{E[\fProc{R}]}$ and $\fProc{Q'} \sc \fCtx{E[\fProc{R'}]}$, where $\fProc{R},\fProc{R'}$ are as in one of the shapes given by the thesis.
            The thesis then also holds for $\fProc{P},\fProc{P'}$ by definition of structural congruence.

        \item
            \textbf{(Rule~\ruleLabel{red-par})}
            We have $\fProc{P} = \fProc{Q \| R}$, $\fProc{Q} \redd \fProc{Q'}$, and $\fProc{P'} = \fProc{Q' \| R}$.
            By the IH, $\fProc{Q} \sc \fCtx{F[\fProc{S}]}$ and $\fProc{Q'} \sc \fCtx{F[\fProc{S'}]}$, where $\fProc{S},\fProc{S'}$ are as in one of the shapes given by the thesis.
            The thesis then holds for $\fProc{P},\fProc{P'}$ with $\fCtx{E} := \fCtx{F \| R}$.

        \item
            \textbf{(Rule~\ruleLabel{red-res})}
            We have $\fProc{P} = \fProc{\nu{xy} Q}$, $\fProc{Q} \redd \fProc{Q'}$, and $\fProc{P'} = \fProc{\nu{xy} Q'}$.
            By the IH, $\fProc{Q} \sc \fCtx{F[\fProc{S}]}$ and $\fProc{Q'} \sc \fCtx{F[\fProc{S'}]}$, where $\fProc{S},\fProc{S'}$ are as in one of the shapes given by the thesis.
            The thesis then holds for $\fProc{P},\fProc{P'}$ with $\fCtx{E} := \fCtx{\nu{xy} F}$.
            \qedhere

    \end{itemize}
\end{proof}

\luRedd*

\begin{proof}
    \label{proof:l:uRedd}
    Suppose, toward a contradiction, that none of these three items hold.
    Then the only possibility is that \Cref{l:redCtx} applies, on names of which at least one is in $\fType{\Gamma}$.
    W.l.o.g., assume we have $\fCtx{E[\fProc{P}]} \sc \fCtx{F[\fProc{\nu{xy} ( \pClose x[] \| \pWait y() ; Q )}]} \redd \fCtx{F[\fProc{Q}]} \sc \fCtx{E'[\fProc{P'}]}$.
    At least one of $\fProc{x},\fProc{y}$ appears in $\fType{\Gamma}$, so at least one of the close on $\fProc{x}$ and the wait on $\fProc{y}$ appears in $\fProc{P}$; w.l.o.g., assume the wait on $\fProc{y}$ appears in $\fProc{P}$ and the restriction and close on $\fProc{x}$ appear in $\fCtx{E}$.
    We then have that $\fCtx{E} \sc \fCtx{\nu{xy} ( \pClose x[] \| G )}$, $\fCtx{E'} = \fCtx{G}$, $\fProc{P} \sc \fCtx{H[\fProc{\pWait y() ; Q}]}$, and $\fProc{P'} \sc \fCtx{H[\fProc{Q}]}$.
    By the assumption, $\fType{\fIFC{\Omega} \vdash \fIFC{\fProc{P} \at d} :: \Gamma'}$ where $\fType{\Gamma} = \fType{\Gamma' \proj \fIFC{\xi}}$.
    However, $\fType{\fIFC{\Omega} \vdash \fIFC{\fProc{P'} \at d} :: \Gamma' \setminus \fProc{y}}$, and thus $\fType{(\Gamma' \setminus \fProc{y}) \proj \fIFC{\xi}} = \fType{\Gamma \setminus \fProc{y}}$.
    Clearly, then $(\fCtx{E'},\fProc{P'}) \notin \netw{\fIFC{\Omega};\fIFC{\xi}}(\fType{\Gamma})$.
    This violates the definition of $\uRedd{\fIFC{\Omega};\fIFC{\xi};\fType{\Gamma}}$ (\Cref{d:uRedd}): a contradiction.
\end{proof}

\lcatchUp*

\begin{proof}
    \label{proof:l:catchUp}
    By the assumption, we know the typing of all processes involved (cf.\ \Cref{d:projNetw}).
    Moreover, there are normal forms $\fProc{Q_1},\fProc{Q^\star_2}$ of $\fProc{P_1},\fProc{P_2}$ respectively, such that $\fProc{Q_1 \proj \fIFC{\xi}} \sc \fProc{Q^\star_2 \proj \fIFC{\xi}}$ (cf.\ \Cref{d:obsEq}).
    This means that $\fProc{Q_1 \proj \fIFC{\xi}}$ and $\fProc{Q^\star_2 \proj \fIFC{\xi}}$ are equivalent, up to the ordering of binders and nodes, and $\alpha$-renaming.
    For a smoother proof, we thus obtain $\fProc{Q_2}$ from $\fProc{Q^\star_2}$ by rearranging binders and nodes, and $\alpha$-renaming, such that $\fProc{Q^\star_2} \sc \fProc{Q_2}$ (i.e., $\fProc{Q_2}$ is still a normal form of $\fProc{P_2}$) and $\fProc{Q_1 \proj \fIFC{\xi}} = \fProc{Q_2 \proj \fIFC{\xi}}$.
    By \Cref{d:nodesNf}, it suffices to consider only the normal forms $\fProc{Q_1},\fProc{Q_2}$ instead of their counterparts $\fProc{P_1},\fProc{P_2}$: we show that $\fCtx{E_1},\fProc{Q_1} \uRedd{\fIFC{\Omega};\fIFC{\xi};\fType{\Gamma}} \fCtx{E'_1},\fProc{Q'_1}$ where $\fProc{Q'_1}$ in normal form implies $\fCtx{E_2},\fProc{Q_2} \uReddQ{\fIFC{\Omega};\fIFC{\xi};\fType{\Gamma}} \fCtx{E_2},\fProc{Q'_2}$ where $\fProc{Q'_2}$ in normal form and $\fProc{Q'_1 \proj \fIFC{\xi}} \sc \fProc{Q'_2 \proj \fIFC{\xi}}$, from which the thesis follows by definition.

    By \Cref{l:uRedd}, we can distinguish three cases from which $\fCtx{E_1},\fProc{Q_1} \uRedd{\fIFC{\Omega};\fIFC{\xi};\fType{\Gamma}} \fCtx{E'_1},\fProc{Q'_1}$ follows.
    \begin{itemize}

        \item
            \textbf{(Internal in context: $\fCtx{E_1} \redd \fCtx{E'_1}$ and $\fProc{Q_1} = \fProc{Q'_1}$)}
            Then the thesis holds directly with $\fProc{Q'_2} := \fProc{Q_2}$.

        \item
            \textbf{(Internal in process: $\fProc{Q_1} \redd \fProc{Q'_1}$ and $\fCtx{E_1} = \fCtx{E'_1}$)}
            By \Cref{l:redCtx}, $\fProc{Q_1}$'s reduction is due to one of three possible synchronizations inside some evaluation context.
            Note that \Cref{l:redCtx} may give us processes that are $\alpha$-variant to $\fProc{Q_1}$ and $\fProc{Q'_1}$; in the following we implicitly apply further $\alpha$-renaming to match the names in $\fProc{Q_1}$ and $\fProc{Q'_1}$.
            We consider all three possibilities separately.
            \begin{itemize}

                \item
                    \textbf{(Close-Wait)}
                    We have $\fProc{Q_1} \sc \fCtx{F_1[\fProc{\nu{xy} ( \pClose x[] \| \pWait y() ; R )}]} \redd \fCtx{F_1[ \fProc{R} ]} \sc \fProc{Q'_1}$.
                    Let $\fIFC{e_{\fProc{x}}} := \fIFC{\quasi(\fType{\fIFC{\Omega} \vdash \fIFC{\fProc{\pClose x[]} \at d_{\fProc{x}}} :: \fProc{x}:1\fIFC{[c]}})} = \fIFC{d_{\fProc{x}} \lcup c}$ and $\fIFC{e_{\fProc{y}}} := \fIFC{\quasi(\fType{\fIFC{\Omega} \vdash \fIFC{\fProc{\pWait y() ; R} \at d_{\fProc{y}}} :: \Gamma_{\fProc{y}} , \fProc{y}:\bot\fIFC{[c]}})} = \fIFC{d_{\fProc{y}} \lcup c}$.

                    We distinguish cases on whether or not $\fProc{\pClose x[]} \in \N(\fProc{Q_1})$ (cf.\ \Cref{d:relNode}).
                    \begin{itemize}

                        \item
                            \textbf{($\fProc{\pClose x[]} \in \N(\fProc{Q_1})$)}
                            Since $\fProc{x}$ is bound in $\fProc{Q_1}$, the inclusion of $\fProc{\pClose x[]}$ in $\N(\fProc{Q_1})$ must be because it was added through the inductive clause of \Cref{d:relNode}.
                            We must then have $\{\fProc{x},\fProc{y}\} \in \B(\fProc{Q_1})$ and $\fProc{\pWait y() ; R} \in \N(\fProc{Q_1})$.
                            Then also $\{\fProc{x},\fProc{y}\} \in \B(\fProc{Q_2})$ and $\{\fProc{\pClose x[]} , \fProc{\pWait y() ; R}\} \in \N(\fProc{Q_2})$.
                            Hence, $\fProc{Q_2} \sc \fCtx{F_2[ \fProc{\nu{xy} ( \pClose x[] \| \pWait y() ; R )} ]} \redd \fCtx{F_2[ \fProc{R} ]} := \fProc{Q'_2}$.

                            To conclude the proof, we need to show that $\fProc{Q'_1 \proj \fIFC{\xi}} \sc \fProc{Q'_2 \proj \fIFC{\xi}}$.
                            Let $\fProc{Q''_1} := \fCtx{F_1[\fProc{R}]}$; it suffices to show that $\fProc{Q''_1 \proj \fIFC{\xi}} \sc \fProc{Q'_2 \proj \fIFC{\xi}}$, but now both relevant forms agree on bound names.
                            We have to show that $\B(\fProc{Q''_1}) = \B(\fProc{Q'_2})$ and $\N(\fProc{Q''_1}) = \N(\fProc{Q'_2})$.
                            Both directions of both set inclusions are analogous, so we only detail $b \in \B(\fProc{Q''_1}) \implies b \in \B(\fProc{Q'_2})$ and $n \in \N(\fProc{Q''_1}) \implies n \in \N(\fProc{Q'_2})$.
                            The analysis follows by the inductive definition of $\N(\fProc{Q''_1})$ and $\B(\fProc{Q''_1})$ (cf.\ \Cref{d:relNode}).

                            In the base case, we have some $n \in \N(\fProc{Q''_1})$ added through the interface.
                            The analysis depends on whether $n \in \nodes(\fCtx{F_1})$ or $n \in \nodes(\fProc{R})$.
                            \begin{itemize}

                                \item
                                    If $n \in \nodes(\fCtx{F_1})$, then $n \in \N(\fProc{Q_1})$.
                                    Hence, $n \in \N(\fProc{Q_2})$, so $n \in \nodes(\fCtx{F_2})$.
                                    It follows that $n \in \N(\fProc{Q'_2})$.

                                \item
                                    If $n \in \nodes(\fProc{R})$, it is immediate that $n \in \N(\fProc{Q'_2})$.

                            \end{itemize}

                            In the inductive case, we have some $n \in \N(\fProc{Q''_1})$ added through a binder $b \in \B(\fProc{Q''_1})$ connected to some $n' \in \N(\fProc{Q''_1})$ added in a previous step.
                            The analysis depends on whether $n \in \nodes(\fCtx{F_1})$ or $n \in \nodes(\fProc{R})$.
                            \begin{itemize}

                                \item
                                    If $n \in \nodes(\fCtx{F_1})$, it must be that $b \in \binders(\fCtx{F_1})$.
                                    The analysis depends on whether $n' \in \nodes(\fCtx{F_1})$ or $n' \in \nodes(\fProc{R})$.

                                    If $n' \in \nodes(\fCtx{F_1})$, then $n' \in \N(\fProc{Q_1})$.
                                    Hence, $n' \in \N(\fProc{Q_2})$.
                                    Then also $n \in \N(\fProc{Q_1})$ and $b \in \B(\fProc{Q_1})$, so $n \in \N(\fProc{Q_2})$ and $b \in \B(\fProc{Q_2})$.
                                    This means that $n \in \nodes(\fCtx{F_2})$ and $b \in \binders(\fCtx{F_2})$.
                                    Hence, $n \in \N(\fProc{Q'_2})$ and $b \in \B(\fProc{Q'_2})$.

                                    If $n' \in \nodes(\fProc{R})$, then $\fProc{\pWait x() ; R} \in N(\fProc{Q_1})$ was added through some $b' \in \B(\fProc{Q_1}) \cap \binders(\fCtx{F_1})$.
                                    Hence, $b' \in \B(\fProc{Q_2}) \cap \binders(\fCtx{F_2})$, adding $\fProc{\pWait x() ; R} \in \N(\fProc{Q_2})$.
                                    Therefore, $n' \in \N(\fProc{Q'_2})$.
                                    In conclusion, $n \in \N(\fProc{Q'_2})$ and $b \in \B(\fProc{Q'_2})$.

                                \item
                                    If $n \in \nodes(\fProc{R})$, the analysis depends on whether $b \in \binders(\fCtx{F_1})$ or $b \in \binders(\fProc{R})$.

                                    If $b \in \binders(\fCtx{F_1})$, then $b \in \B(\fProc{Q_1})$ to add $\fProc{\pWait x() ; R} \in \N(\fProc{Q_1})$ via $n' \in \N(\fProc{Q_1}) \setminus \nodes(\fProc{R})$.
                                    Then $b \in \B(\fProc{Q_2})$ and $n' \in \N(\fProc{Q_2}) \setminus \nodes(\fProc{R})$, so $n' \in \nodes(\fCtx{F_2})$.
                                    Hence, $n' \in \N(\fProc{Q'_2})$, and so $n \in \N(\fProc{Q'_2})$ and $b \in \B(\fProc{Q'_2})$.

                                    If $b \in \binders(\fProc{R})$, it follows immediately that $n \in \N(\fProc{Q'_2})$ and $b \in \B(\fProc{Q'_2})$.

                            \end{itemize}

                        \item
                            \textbf{($\fProc{\pClose x[]} \notin \N(\fProc{Q_1})$)}
                            This may be due to either of two reasons.
                            The first possibility is that $\fIFC{e_{\fProc{x}} \not\lleq \xi}$: this would directly violate the requirements for relevancy in \Cref{d:relNode}.
                            The second possibility is that $\fProc{\pWait y() ; R} \notin \N(\fProc{Q_1})$: otherwise, we can follow the reasoning above to contradict the assumption that $\fProc{\pClose x[]} \notin \N(\fProc{Q_1})$.
                            We show that in the former case, the latter case holds as well; i.e., either way, $\fProc{\pWait y() ; R} \notin \N(\fProc{Q_1})$.

                            Assuming $\fIFC{e_{\fProc{x}}} = \fIFC{d_{\fProc{x}} \lcup c \not\lleq \xi}$, either $\fIFC{d_{\fProc{x}} \not\lleq \xi}$ or $\fIFC{c \not\lleq \xi}$.
                            If $\fIFC{d_{\fProc{x}} \not\lleq \xi}$, since Rule~\ruleLabel{typ-close} demands $\fIFC{d_{\fProc{x}} \lleq c}$, also $\fIFC{c \not\lleq \xi}$.
                            Hence, we always have $\fIFC{c \not\lleq \xi}$.
                            Then $\fIFC{e_{\fProc{y}}} = \fIFC{d_{\fProc{y}} \lcup c \not\lleq \xi}$, so $\fProc{\pWait y() ; R} \notin \N(\fProc{Q_1})$.

                            By Rule~\ruleLabel{typ-wait}, $\fType{\fIFC{\Omega} \vdash \fIFC{\fProc{R} \at e_{\fProc{y}}} :: \Gamma_{\fProc{y}}}$.
                            It follows by Rule~\ruleLabel{typ-par} that none of the nodes in $\fProc{R}$ have quasi-running secrecy $\fIFC{\lleq \xi}$.
                            Hence, $\fProc{R}$ cannot affect relevancy of nodes of $\fProc{Q_1}$, so $\fProc{Q_1 \proj \fIFC{\xi}} = \fProc{Q'_1 \proj \fIFC{\xi}}$.
                            Hence, the thesis holds with $\fProc{Q'_2} := \fProc{Q_2}$.

                    \end{itemize}

                \item
                    \textbf{(Send-Receive)}
                    We have $\fProc{Q_1} \sc \fCtx{F_1[\fProc{\nu{xy} ( \pSend x[a,b] \| \pRecv y(w,z) ; R )}]} \redd \fCtx{F_1[\fProc{R \pSubst{ a/w,b/z }}]} \sc \fProc{Q'_1}$.
                    Let $\fIFC{e_{\fProc{x}}} := \fIFC{\quasi(\fType{\fIFC{\Omega} \vdash \fIFC{\fProc{\pSend x[a,b]} \at d_{\fProc{x}}} :: \fProc{x}:A \tensor B\fIFC{[c]} , \fProc{a}:\dual{A}\fIFC{[c]} , \fProc{b}:\dual{B}\fIFC{[c]}})} = \fIFC{d_{\fProc{x}} \lcup c}$ and $\fIFC{e_{\fProc{y}}} := \fIFC{\quasi(\fType{\fIFC{\Omega} \vdash \fIFC{\fProc{\pRecv y(z,w) ; R} \at d_{\fProc{y}}} :: \Gamma_{\fProc{y}} , \fProc{y}:\dual{A} \parr \dual{B}\fIFC{[c]}})} = \fIFC{d_{\fProc{y}} \lcup c}$.

                    We distinguish cases on whether or not $\fProc{\pSend x[a,b]} \in \N(\fProc{Q_1})$.
                    \begin{itemize}

                        \item
                            \textbf{($\fProc{\pSend x[a,b]} \in \N(\fProc{Q_1})$)}
                            Since $\fProc{x}$ is bound in $\fProc{Q_1}$ and $\fProc{a},\fProc{b} \notin \fcn(\fProc{\pSend x[a,b]})$, this inclusion can only hold through the inductive clause in \Cref{d:relNode}.
                            Hence, $\{\fProc{x},\fProc{y}\} \in \B(\fProc{Q_1})$ and $\fProc{\pRecv y(w,z) ; R} \in \N(\fProc{Q_1})$.
                            The rest follows as above.

                        \item
                            \textbf{($\fProc{\pSend x[a,b]} \notin \N(\fProc{Q_1})$)}
                            Analogous to the case above.

                    \end{itemize}

                \item
                    \textbf{(Select-Branch)}
                    Analogous to the case above.

            \end{itemize}

        \item
            \textbf{(Communication between context and process on names not in $\fType{\Gamma}$)}
            By definition, the secrecy levels of the involved names are incomparable to $\fIFC{\xi}$.
            Therefore, none of the nodes involved are relevant or influence relevancy of any other nodes: $\N(\fProc{Q_1}) = \N(\fProc{Q'_1})$ and $\B(\fProc{Q_1}) = \B(\fProc{Q'_1})$.
            Hence, the thesis holds with $\fProc{Q'_2} := \fProc{Q_2}$.
            \qedhere

    \end{itemize}
\end{proof}

\tFundamental*

\begin{proof}
    \label{proof:t:fundamental}
    Let $\fType{\Gamma} := \fType{\Gamma_1 \proj \fIFC{\xi}} = \fType{\Gamma_2 \proj \fIFC{\xi}}$.
    Take any $\fCtx{E_1},\fCtx{E_2}$ such that $\fType{\fIFC{\Omega} \vdash \fIFC{\fCtx{E_1[\fProc{P_1}]} \at d'_1} :: \emptyset}$ and $\fType{\fIFC{\Omega} \vdash \fIFC{\fCtx{E_2[\fProc{P_2}]} \at d'_2} :: \emptyset}$.
    We need to show that $(\fCtx{E_1[\fProc{P_1}]};\fCtx{E_2[\fProc{P_2}]}) \in \termrel{\fIFC{\Omega}}{\fIFC{\xi}}{\fType{\Gamma}}$, which we do by induction on $\w(\fType{\Gamma})$.

    The first condition is that $(\fCtx{E_1},\fProc{P_1} ; \fCtx{E_2},\fProc{P_2}) \in \netw{\fIFC{\Omega};\fIFC{\xi}}(\fType{\Gamma})$; this holds by assumption.

    Next, take any $\fCtx{E'_1},\fProc{P'_1}$ such that $\fCtx{E_1},\fProc{P_1} \uRedd*{\fIFC{\Omega};\fIFC{\xi};\fType{\Gamma}} \fCtx{E'_1},\fProc{P'_1} \nuRedd{\fIFC{\Omega};\fIFC{\xi};\fType{\Gamma}}$.
    A straightforward induction on the length of these unobservable reductions shows that, by \Cref{d:uRedd} and \Cref{l:catchUp}, there are $\fCtx{E'_2},\fProc{P'_2}$ such that $\fCtx{E_2},\fProc{P_2} \uRedd*{\fIFC{\Omega};\fIFC{\xi};\fType{\Gamma}} \fCtx{E'_2},\fProc{P'_2} \nuRedd{\fIFC{\Omega};\fIFC{\xi};\fType{\Gamma}}$, $(\fCtx{E'_1},\fProc{P'_1} ; \fCtx{E'_2},\fProc{P'_2}) \in \netw{\fIFC{\Omega};\fIFC{\xi}}(\fType{\Gamma})$, and $\fProc{P'_1} \obseq_{\fIFC{\xi}} \fProc{P'_2}$.

    Now, we need to show that, for every $\fProc{x} \in \big( \ain(\fCtx{E'_1},\fProc{P'_1}) \cup \ain(\fCtx{E'_2},\fProc{P'_2}) \big) \cap \dom(\fType{\Gamma})$,
    \[
        (\fCtx{E'_1},\fProc{P'_1} ; \fCtx{E'_2},\fProc{P'_2}) \in \valrel{\fIFC{\Omega}}{\fIFC{\xi}}{\fType{\Gamma}}{\fProc{x}}.
    \]
    Take any such $\fProc{x}$.
    Either $\fProc{x} \in \ain(\fCtx{E'_1},\fProc{P'_1})$ or $\fProc{x} \in \ain(\fCtx{E'_2},\fProc{P'_2})$; w.l.o.g., assume the former.
    The rest of the analysis depends on the type of $\fProc{x}$ in $\fType{\Gamma}$.

    First, we discuss the output-like cases ($\fType{1},\fType{\oplus},\fType{\tensor}$).
    In each case, by well-typedness, $\fProc{x}$ is the subject of an output-like prefix in $\fProc{P'_1}$.
    Since $\fProc{x} \in \ain(\fCtx{E'_1},\fProc{P'_1})$, this prefix is unguarded.
    Since $\fProc{x} \in \dom(\fType{\Gamma}) = \dom(\fType{\Gamma_1 \proj \fIFC{\xi}})$, the node in which the prefix appears is relevant in $\fProc{P'_1}$.
    Therefore, since $\fProc{P'_1} \obseq_{\fIFC{\xi}} \fProc{P'_2}$, there is also a relevant node in $\fProc{P'_2}$ where this prefix appears unguarded.
    \begin{itemize}

        \item
            \textbf{($\fProc{x}$ has type $\fType{1\fIFC{[c]}}$)}
            Then $\fProc{P'_1} \sc \fProc{\pClose x[] \| P''_1}$ and $\fProc{P'_2} \sc \fProc{\pClose x[] \| P''_2}$.
            Clearly, $\fType{\fIFC{\Omega} \vdash \fIFC{\fProc{P''_1} \at d''_1} :: \Gamma_1 \setminus \fProc{x}}$ and $\fType{\fIFC{\Omega} \vdash \fIFC{\fProc{P''_2} \at d''_2} :: \Gamma_2 \setminus \fProc{x}}$, and $\fType{\fIFC{\Omega} \vdash \fIFC{\fCtx{E'_1 \big[ \pClose x[] \| \fProc{P''_1} \big]} \at d'''_1} :: \emptyset}$ and $\fType{\fIFC{\Omega} \vdash \fIFC{\fCtx{E'_2 \big[ \pClose x[] \| \fProc{P''_2} \big]} \at d'''_2} :: \emptyset}$.
            Also clearly, $\fProc{P''_1} \obseq_{\fIFC{\xi}} \fProc{P''_2}$ and $\fType{\Gamma_1 \setminus \fProc{x} \proj \fIFC{\xi}} = \fType{\Gamma_2 \setminus \fProc{x} \proj \fIFC{\xi}} = \fType{\Gamma \setminus \fProc{x}}$.
            Hence, since $\w(\fType{\Gamma \setminus \fProc{x}}) < \w(\fType{\Gamma})$, it follows from the IH that $(\fCtx{E'_1\big[ \pClose x[] \| \fProc{P''_1} \big]} ; \fCtx{E'_2\big[ \pClose x[] \| \fProc{P''_2} \big]}) \in \termrel{\fIFC{\Omega}}{\fIFC{\xi}}{\fType{\Gamma \setminus \fProc{x}}}$.
            This proves that $(\fCtx{E'_1},\fProc{P'_1} ; \fCtx{E'_2},\fProc{P'_2}) \in \valrel{\fIFC{\Omega}}{\fIFC{\xi}}{\fType{\Gamma}}{\fProc{x}}$.

        \item
            \textbf{($\fProc{x}$ has type $\fType{\oplus \{ i : A_i \}\fIFC{[c]}}$)}
            Then there exists $\fProc{j} \in \fProc{I}$ such that $\fProc{\pSel x[b_1]<j} \in \nodes(\fProc{P'_1})$ and $\fProc{\pSel x[b_2]<j} \in \nodes(\fProc{P'_1})$.
            The analysis depends on whether $\fProc{b_1} \in \dom(\fType{\Gamma})$ or not.
            \begin{itemize}

                \item
                    \textbf{($\fProc{b_1} \in \dom(\fType{\Gamma})$)}
                    By well-typedness, $\fProc{b_1} \in \fn(\fProc{P'_1}) \cap \fn(\fProc{P'_2})$.
                    Since $\fProc{P'_1} \obseq_{\fIFC{\xi}} \fProc{P'_2}$, then $\fProc{b_1} = \fProc{b_2}$.
                    Hence, $\fProc{P'_1} \sc \fProc{\pSel x[b_1]<j \| P''_1}$ and $\fProc{P'_2} \sc \fProc{\pSel x[b_2]<j \| P''_2}$.
                    Similar to the case above, and since $\w(\fType{\Gamma \setminus \fProc{x}}) < \w(\fType{\Gamma})$, it follows from the IH that $(\fCtx{E'_1\big[ \pSel x[b_1]<j \| \fProc{P''_1} \big]} ; \fCtx{E'_2\big[ \pSel x[b_2]<j \| \fProc{P''_2} \big]}) \in \termrel{\fIFC{\Omega}}{\fIFC{\xi}}{\fType{\Gamma \setminus \fProc{x}}}$.
                    This proves that $(\fCtx{E'_1},\fProc{P'_1} ; \fCtx{E'_2},\fProc{P'_2}) \in \valrel{\fIFC{\Omega}}{\fIFC{\xi}}{\fType{\Gamma}}{\fProc{x}}$.

                \item
                    \sloppy
                    \textbf{($\fProc{b_1} \notin \dom(\fType{\Gamma})$)}
                    By well-typedness, $\fProc{P'_1} \sc \fProc{\nu{b_1b'} ( \pSel x[b_1]<j \| P''_1 )}$.
                    The selection on $\fProc{x}$ is a relevant node of $\fProc{P'_1}$.
                    Since $\fProc{P'_1} \obseq_{\fIFC{\xi}} \fProc{P'_2}$, it is also a relevant node of $\fProc{P'_2}$.
                    Moreover, $\fProc{b_2} \notin \fn(\fProc{P'_2})$: otherwise, $\fProc{b_2} = \fProc{b_1}$, and then $\fProc{b_1} \in \fn(\fProc{P'_1})$.
                    Hence, $\fProc{P'_2} \sc \fProc{\nu{b_2b'} ( \pSel x[b_2]<j \| P''_2 )}$.
                    Clearly, $\fType{\fIFC{\Omega} \vdash \fIFC{\fProc{P''_1} \at d''_1} :: \Gamma_1 \setminus \fProc{x} , \fProc{b'}}$ and $\fType{\fIFC{\Omega} \vdash \fIFC{\fProc{P''_2} \at d''_2} :: \Gamma_2 \setminus \fProc{x} , \fProc{b'}}$, and $\fType{\fIFC{\Omega} \vdash \fIFC{\fCtx{E'_1\big[\nu{b_1b'} ( \pSel x[b_1]<j \| \fProc{P''_1} )\big]} \at d'''_1} :: \emptyset}$ and $\fType{\fIFC{\Omega} \vdash \fIFC{\fCtx{E'_2\big[\nu{b_2b'} ( \pSel x[b_2]<j \| \fProc{P''_2} )\big]} \at d'''_2} :: \emptyset}$.
                    Again, since $\fProc{P'_1} \obseq_{\fIFC{\xi}} \fProc{P'_2}$, the chain of nodes and binders that are relevant in $\fProc{P'_1}$ through the binder $\fProc{\nu{b_1b'}}$ has an equivalent such chain in $\fProc{P'_2}$ through $\fProc{\nu{b_2b'}}$ and the selection on $\fProc{b_2}$.
                    Hence, the effect on relevant nodes and binders by removing the binder and the selection on $\fProc{x}$ is the same on $\fProc{P''_1}$ as it is on $\fProc{P''_2}$: $\fProc{P''_1} \obseq_{\fIFC{\xi}} \fProc{P''_2}$.
                    Clearly, $\fType{\Gamma_1 \setminus \fProc{x} , \fProc{b'} \proj \fIFC{\xi}} = \fType{\Gamma_2 \setminus \fProc{x} , \fProc{b'}} = \fType{\Gamma \setminus \fProc{x} , \fProc{b'}}$.
                    Also, $\w(\fType{A_j}) < \w(\fType{\oplus \{ A_i \}_{i \in I}})$, so $\w(\fType{\Gamma \setminus \fProc{x} , \fProc{b'}}) < \w(\fType{\Gamma})$.
                    It then follows from the IH that $(\fCtx{E'_1\big[\nu{b_1b'} ( \pSel x[b_1]<j \| \fProc{P''_1} )\big]} ; \fCtx{E'_2\big[\nu{b_2b'} ( \pSel x[b_2]<j \| \fProc{P''_2} )\big]} ) \in \termrel{\fIFC{\Omega}}{\fIFC{\xi}}{\fType{\Gamma \setminus \fProc{x} , \fProc{b'}}}$.
                    This proves that $(\fCtx{E'_1},\fProc{P'_1} ; \fCtx{E'_2},\fProc{P'_2}) \in \valrel{\fIFC{\Omega}}{\fIFC{\xi}}{\fType{\Gamma}}{\fProc{x}}$.

            \end{itemize}

        \item
            \textbf{($\fProc{x}$ has type $\fType{A \tensor B\fIFC{[c]}}$)}
            Analogous to the previous case.

    \end{itemize}

    Next, we discuss the negative cases ($\fType{\bot},\fType{\with},\fType{\parr}$).
    In each case, by well-typedness, $\fProc{x}$ is the subject of an input-like prefix in $\fProc{P'_1}$.
    The context $\fCtx{E'_1}$ binds $\fProc{x}$ to some $\fProc{y}$ by restriction, and $\fCtx{E'_1}$ contains a complementary output-like prefix on $\fProc{y}$.
    Following similar reasoning, the same holds for $\fCtx{E'_2}$.
    Since $\fProc{x} \in \ain(\fCtx{E'_1},\fProc{P'_1})$, this output-like prefix appears unguarded in $\fCtx{E'_1}$.
    To prove the thesis, we assume that this prefix also appears unguarded in $\fCtx{E'_2}$.
    \begin{itemize}

        \item
            \sloppy
            \textbf{($\fProc{x}$ has type $\fType{\bot\fIFC{[c]}}$)}
            We have $\fCtx{E'_1} \sc \fCtx{\nu{yx}( \pClose y[] \| E''_1 )}$ and $\fCtx{E'_2} \sc \fCtx{\nu{yx}( \pClose y[] \| E''_2 )}$.
            Let $\fProc{P''_1} := \fProc{\nu{yx}( \pClose y[] \| P'_1 )}$ and $\fProc{P''_2} := \fProc{\nu{yx}( \pClose y[] \| P'_2 )}$.
            Clearly, $\fType{\fIFC{\Omega} \vdash \fIFC{\fProc{P''_1} \at d''_1} :: \Gamma_1 \setminus \fProc{x}}$ and $\fType{\fIFC{\Omega} \vdash \fIFC{\fProc{P''_2} \at d''_2} :: \Gamma_2 \setminus \fProc{x}}$, and $\fType{\fIFC{\Omega} \vdash \fIFC{\fCtx{E''_1[\fProc{P''_1}]} \at d'''_1} :: \emptyset}$ and $\fType{\fIFC{\Omega} \vdash \fIFC{\fCtx{E''_2[\fProc{P''_2}]} \at d'''_2} :: \emptyset}$.

            Let $\fProc{Q_1},\fProc{Q_2}$ denote the nodes of $\fProc{P'_1},\fProc{P'_2}$, respectively, in which $\fProc{x}$ appears.
            To prove that $\fProc{P''_1} \obseq_{\fIFC{\xi}} \fProc{P''_2}$, it suffices to show that $\fProc{Q_1}$ and any related binders are relevant in $\fProc{P''_1}$ if and only if $\fProc{Q_2}$ and any related binders are relevant in $\fProc{P''_2}$; any connected nodes/binders follow similar reasoning.
            We detail only the left-to-right direction; the other direction is analogous.
            Suppose $\fProc{Q_1}$ is relevant in $\fProc{P''_1}$.
            Then $\fIFC{\quasi(\fProc{Q_1}) \lleq \xi}$, and thus $\fProc{Q_1}$ is also relevant in $\fProc{P'_1}$ through $\fProc{x}$ in the interface.
            Then also $\fProc{Q_2}$ is relevant in $\fProc{P'_2}$, where $\fProc{Q_1} \sc \fProc{Q_2}$ and $\fIFC{\quasi(\fProc{Q_2}) \lleq \xi}$.
            The analysis depends on how $\fProc{Q_1}$ is relevant in $\fProc{P''_1}$: (i)~through the interface, or (ii)~through a restriction with another relevant node.
            In case~(i), it follows straightforwardly that $\fProc{Q_2}$ is also relevant in $\fProc{P'_2}$.
            In case~(ii), the connected node is also relevant in $\fProc{P'_1}$, and hence there is a related node that is also relevant in $\fProc{P'_2}$.
            Since the two processes agree on observable channels, the channel responsible for including $\fProc{Q_1}$ as a relevant node of $\fProc{P''_1}$ is also bound in $\fProc{P''_2}$.
            Then we can conclude that $\fProc{Q_2}$ is a relevant node of $\fProc{P''_2}$.

            Since $\w(\fType{\Gamma \setminus \fProc{x}}) < \w(\fType{\Gamma})$, it then follows from the IH that $(\fCtx{E''_1[\fProc{P''_1}]} ; \fCtx{E''_2[\fProc{P''_2}]}) \in \termrel{\fIFC{\Omega}}{\fIFC{\xi}}{\fType{\Gamma \setminus \fProc{x}}}$.
            This proves that $(\fCtx{E'_1},\fProc{P'_1} ; \fCtx{E'_2},\fProc{P'_2}) \in \valrel{\fIFC{\Omega}}{\fIFC{\xi}}{\fType{\Gamma}}{\fProc{x}}$.

        \item
            \textbf{($\fProc{x}$ has type $\fType{\with \{ i : A_i \}_{i \in I}\fIFC{[c]}}$)}
            We have $\fCtx{E'_1} \sc \fCtx{\nu{bb'} \nu{yx} ( \pSel y[b]<j \| E''_1 )}$ and $\fCtx{E'_2} \sc \fCtx{\nu{bb'} \nu{yx} ( \pSel y[b]<j \| E''_2 )}$, where $\fProc{j} \in \fProc{I}$.
            Let $\fProc{P''_1} := \fProc{\nu{yx} ( \pSel y[b]<j \| P'_1 )}$ and $\fProc{P''_2} := \fProc{\nu{yx} ( \pSel y[b]<j \| P'_2 )}$.
            Clearly, $\fType{\fIFC{\Omega} \vdash \fIFC{\fProc{P''_1} \at d''_1} :: \Gamma_1 \setminus \fProc{x} , \fProc{b}:A_j\fIFC{[c]}}$ and $\fType{\fIFC{\Omega} \vdash \fIFC{\fProc{P''_2} \at d''_2} :: \Gamma_2 \setminus \fProc{x} , \fProc{b}:A_j\fIFC{[c]}}$, and $\fType{\fIFC{\Omega} \vdash \fIFC{\fCtx{\nu{bb'} E''_1[\fProc{P''_1}]} \at d'''_1} :: \emptyset}$ and $\fType{\fIFC{\Omega} \vdash \fIFC{\fCtx{\nu{bb'} E''_2[\fProc{P''_2}]} \at d'''_2} :: \emptyset}$.

            It is clear that $\fProc{b} \notin \fcn(\fProc{P''_1}) \cup \fcn(\fProc{P''_2})$; hence, $\fProc{b}$ plays no role in relevancy in either process.
            Therefore, proving that $\fProc{P''_1} \obseq_{\fIFC{\xi}} \fProc{P''_2}$ is analogous to the previous case.
            Since $\w(\fType{A_j}) < \w(\fType{\with \{ i : A_i \}_{i \in I}})$, we have $\w(\fType{\Gamma \setminus \fProc{x} , \fProc{b}:A_j\fIFC{[c]}}) < \w(\fType{\Gamma})$.
            Therefore, by the IH, $(\fCtx{\nu{bb'} E''_1[\fProc{P''_1}]} ; \fCtx{\nu{bb'} E''_2[\fProc{P''_2}]}) \in \termrel{\fIFC{\Omega}}{\fIFC{\xi}}{\fType{\Gamma \setminus \fProc{x} , \fProc{b}:A_j\fIFC{[c]}}}$.
            This proves that $(\fCtx{\nu{bb'} E''_1},\fProc{P''_1} ; \fCtx{\nu{bb'} E''_2},\fProc{P''_2}) \in \valrel{\fIFC{\Omega}}{\fIFC{\xi}}{\fType{\Gamma}}{\fProc{x}}$.

        \item
            \textbf{($\fProc{x}$ has type $\fType{A \parr B\fIFC{[c]}}$)}
            Analogous to the previous case.

    \end{itemize}

    Finally, we show that $\aon(\fProc{P'_1}) \cap \dom(\fType{\Gamma}) = \aon(\fProc{P'_2}) \cap \dom(\fType{\Gamma})$.
    To prove this set equality, we take any $\fProc{x} \in \aon(\fProc{P'_1}) \cap \dom(\fType{\Gamma})$ and prove that $\fProc{x} \in \aon(\fProc{P'_2}) \cap \dom(\fType{\Gamma})$; the other direction is analogous.
    Clearly, $\fProc{x}$ is the subject of an output-like prefix in $\fProc{P'_1}$.
    Since $\fProc{x} \in \dom(\fType{\Gamma})$, this output-like prefix must appear unguarded in a node in $\fProc{P'_1}$.
    If the quasi-running secrecy of this node is observable, this node is relevant in $\fProc{P'_1}$.
    Since $\fProc{P'_1} \obseq_{\fIFC{\xi}} \fProc{P'_2}$, $\fProc{P'_2}$ must also have a relevant node in which the output-like prefix appears unguarded.
    Otherwise, the node is not relevant in $\fProc{P'_1}$, and hence the node in which the output-like prefix appears in $\fProc{P'_2}$ is also not relevant in $\fProc{P'_2}$.
    Hence, $\fProc{x} \in \aon(\fProc{P'_2}) \cap \dom(\fType{\Gamma})$.
\end{proof}

\end{document}